\newcommand*\samethanks[1][\value{footnote}]{\footnotemark[#1]}
\theoremstyle{definition}
\colorlet{shadecolor}{gray!20}
\newtheorem{theorem}{Theorem}
\newtheorem{corollary}[theorem]{Corollary}
\newtheorem{lemma}[theorem]{Lemma}
\newtheorem{definition}[theorem]{Definition}
\newcommand{\mf}{\ensuremath{f^*}}
\newcommand{\coloring}{\textit{coloring}}
\newcommand{\staticcoloring}{\textit{static coloring}}
\newcommand{\validstaticcoloring}{\textit{valid static coloring}}
\newcommand{\dynamiccoloring}{\textit{dynamic coloring}}
\newcommand{\minimaldynamiccoloring}{\textit{minimal dynamic coloring}}
\newcommand{\pairconst}{\textit{pair constraint}}
\newcommand{\pcsfthree}{{\textsc{PCSF3}}}
\newcommand{\rpcsf}{{\textsc{IPCSF}}}
\newcommand{\finddeltap}{{\textsc{FindDeltaP}}}
\newcommand{\finddeltae}{{\textsc{FindDeltaE}}}
\newcommand{\reducetightpairs}{{\textsc{ReduceTightPairs}}}
\newcommand{\setpairgraph}{{\textsc{SetPairGraph}}}
\newcommand{\checksetistight}{{\textsc{CheckSetIsTight}}}
\newcommand{\mincut}{{\textsc{MaxFlow}}}
\newcommand{\deltaS}{\ensuremath{\delta(S)}}
\newcommand{\G}{\ensuremath{G}}
\newcommand{\V}{\ensuremath{V}}
\newcommand{\E}{\ensuremath{E}}
\newcommand{\y}{\ensuremath{y}}
\newcommand{\ys}{\ensuremath{\y_S}}
\newcommand{\yij}{\ensuremath{\y_{ij}}}
\newcommand{\ysij}{\ensuremath{\y_{Sij}}}
\newcommand{\cc}{\ensuremath{c}}
\newcommand{\ce}{\ensuremath{
\cc_e}}
\newcommand{\pij}{\ensuremath{\pi_{ij}}}
\newcommand{\demands}{\ensuremath{\mathcal{D}}}
\newcommand{\currentsets}{\ensuremath{FC}}
\newcommand{\activesets}{\ensuremath{ActS}}
\newcommand{\Deltae}{\ensuremath{\Delta_e}}
\newcommand{\Deltap}{\ensuremath{\Delta_p}}
\newcommand{\Deltat}{\ensuremath{\Delta}}
\newcommand{\F}{\ensuremath{F}}
\newcommand{\Fp}{\ensuremath{F'}}
\newcommand{\Q}{\ensuremath{Q}}
\newcommand{\GF}{\ensuremath{\mathcal{G}}}
\newcommand{\source}{\ensuremath{source}}
\newcommand{\sink}{\ensuremath{sink}}
\newcommand{\SC}{\ensuremath{C_{\source}}}
\newcommand{\MC}{\ensuremath{C_{min}}}
\newcommand{\f}{\ensuremath{f}}
\newcommand{\mfout}{\ensuremath{(\mf, \MC, \f)}}
\newcommand{\I}{\ensuremath{I}}
\newcommand{\R}{\ensuremath{R}}
\newcommand{\A}{\ensuremath{\mathcal{CC}}}
\newcommand{\B}{\ensuremath{\mathcal{CP}}}
\newcommand{\C}{\ensuremath{\mathcal{PC}}}
\newcommand{\D}{\ensuremath{\mathcal{PP}}}
\newcommand{\va}{\ensuremath{cc}}
\newcommand{\vb}{\ensuremath{cp}}
\newcommand{\vba}{\ensuremath{cp_1}}
\newcommand{\vbb}{\ensuremath{cp_2}}
\newcommand{\vc}{\ensuremath{pc}}
\newcommand{\vd}{\ensuremath{pp}}
\newcommand{\OPT}{\ensuremath{OPT}}
\newcommand{\sol}{\text{SOL}}
\newcommand{\fopt}{\ensuremath{F^*}}
\newcommand{\qopt}{\ensuremath{Q^*}}
\newcommand{\foptr}{\ensuremath{F^*_R}}
\newcommand{\qoptr}{\ensuremath{Q^*_R}}
\newcommand{\foptp}{\ensuremath{F'_R}}
\newcommand{\qoptp}{\ensuremath{Q'_R}}
\newcommand{\dopt}{\ensuremath{d_{\fopt}}}
\newcommand{\OPTR}{\ensuremath{OPT_R}}
\newcommand{\costalg}{\ensuremath{cost(\rpcsf)}}
\newcommand{\comma}{,}
\title{2-Approximation for Prize-Collecting Steiner Forest}
\date{}
\author{
Ali Ahmadi\thanks{University of Maryland.}\\
\texttt{ahmadia@umd.edu}
\and
Iman Gholami\samethanks\\
\texttt{igholami@umd.edu}
\and
MohammadTaghi Hajiaghayi\samethanks\\
\texttt{hajiagha@umd.edu}
\and 
Peyman Jabbarzade\samethanks\\
\texttt{peymanj@umd.edu}
\and
Mohammad Mahdavi\samethanks\\
\texttt{mahdavi@umd.edu}
}
\begin{document}
\maketitle


\begin{abstract}
Approximation algorithms for the prize-collecting Steiner forest problem (PCSF) have been a subject of research for over three decades, starting with the seminal works of Agrawal, Klein, and Ravi~\cite{AKRSTOC91,DBLP:journals/siamcomp/AgrawalKR95}  and Goemans and Williamson \cite{GWSODA92,DBLP:journals/siamcomp/GoemansW95} on Steiner forest and prize-collecting problems.
In this paper, we propose and analyze a natural deterministic  algorithm for PCSF that achieves a $2$-approximate solution in polynomial time. 
This represents a significant improvement compared to the previously best known algorithm with a $2.54$-approximation factor developed by Hajiaghayi and Jain \cite{DBLP:conf/soda/HajiaghayiJ06} in 2006.
Furthermore, K{\"{o}}nemann, Olver, Pashkovich, Ravi, Swamy, and Vygen~\cite{DBLP:conf/approx/KonemannOP0SV17} have established an integrality gap of at least $9/4$ for the natural LP relaxation for PCSF.
However, we surpass this gap through the utilization of a combinatorial algorithm and a novel analysis technique.
Since $2$ is the best known approximation guarantee for Steiner forest problem \cite{DBLP:journals/siamcomp/AgrawalKR95} (see also~\cite{DBLP:journals/siamcomp/GoemansW95}), which is a special case of PCSF, our result matches this factor and closes the gap between the Steiner forest problem and its generalized version, PCSF.
\end{abstract}


\section{Introduction}

The Steiner forest problem, also known as the generalized Steiner tree problem, is a fundamental NP-hard problem in computer science and a more general version of the Steiner tree problem. 
In this problem, given an undirected graph $\G=(\V, \E, \cc)$ with edge costs $\cc: \E \rightarrow \mathbb{R}_{\ge 0}$ and a set of pairs of vertices $\demands = \{(v_1, u_1), (v_2, u_2), \cdots (v_k, u_k)\}$ called demands, the objective is to find a subset of edges with the minimum total cost that connects $v_i$ to $u_i$ for every $i \le k$.
In this paper, our focus is on the prize-collecting Steiner forest problem (PCSF), which is a generalized version of the Steiner forest problem.

Balas~\cite{Balas89} first introduced general prize-collecting problems in 1989 and Bienstock, Goemans,   Simchi-Levi, and Williamson~\cite{BienstockGSW93} developed the first approximation algorithms for these problems. 
In the prize-collecting version of the Steiner forest problem, we are given an undirected graph $\G=(\V, \E, \cc)$ with edge costs $\cc: \E \rightarrow \mathbb{R}_{\ge 0}$ and a set of pairs of vertices $\demands = \{(v_1, u_1), (v_2, u_2), \cdots (v_k, u_k)\}$ called demands, along with non-negative penalties $\pij$ for each demand $(i, j)$. 
The objective is to find a subset of edges and pay their costs, while also paying penalties for the demands that are not connected in the resulting forest. 
Specifically, we aim to find a subset of demands $\Q$ and a forest $\F$ such that if a demand $(i, j)$ is not in $\Q$, its endpoints $i$ and $j$ are connected in $\F$, while minimizing the total penalty of the demands in $\Q$ and the sum of the costs of the edges in $\F$.
Without loss of generality, we assign a penalty of $0$ to pairs that do not represent a demand, ensuring that there is a penalty associated with each pair of vertices.
This allows us to define the penalty function $\pi : \V \times \V \rightarrow \mathbb{R}_{\ge 0}$, where $\V \times \V$ represents the set of all unordered pairs of vertices with $i \neq j$.
In this paper, we significantly improve the approximation factor of the best-known algorithm for PCSF.

For the Steiner forest problem, the first approximation algorithm was introduced by Agrawal, Klein, and Ravi \cite{DBLP:journals/siamcomp/AgrawalKR95}.
Their algorithm addressed a more generalized version of the Steiner forest problem and achieved a $2$-approximation for Steiner forest.
Later, Goemans and Williamson \cite{DBLP:journals/siamcomp/GoemansW95} provided a simplified simulation of their algorithm, which yields a $(2-\frac{2}{n})$-approximate solution for the Steiner forest problem, where $n$ is the number of vertices\footnote{Indeed
Goemans and Williamson~\cite{Hochbaum96}(Sec 4.6.1) explicitly mention
 ``... the primal-dual algorithm we have presented  simulates an algorithm of Agrawal, Klein, and Ravi [AKR95]. Their algorithm was the first approximation algorithm for this [Steiner forest a.k.a. generalized Steiner tree] problem and has motivated much of the authors’ research in this area.''; the seminal work of Agrawal, Klein, and Ravi~\cite{AKRSTOC91,DBLP:journals/siamcomp/AgrawalKR95} recently received {\em The 30-year STOC Test-of Time Award}.}.
However, no further advancements have been made in improving the approximation factor of this problem since then. 
There has been a study focused on analyzing a natural algorithm for the problem, resulting in a constant approximation factor worse than $2$ \cite{DBLP:conf/stoc/Gupta015}.
In this paper, we close the gap between the Steiner forest problem and its generalized version, PCSF, by presenting a $2$-approximation algorithm for PCSF.

The Steiner tree problem is a well-studied special case of the Steiner forest problem.
In the Steiner tree problem, one endpoint of every demand is a specific vertex known as $root$.
In contrast to the Steiner forest problem, the approximation factor of the Steiner tree problem has seen significant progress since the introduction of the $(2-\frac{2}{n})$-approximation algorithm by Goemans and Williamson \cite{DBLP:journals/siamcomp/GoemansW95}. 
Several improvements have been made \cite{DBLP:journals/algorithmica/Zelikovsky93, DBLP:journals/siamdm/RobinsZ05, DBLP:journals/jco/KarpinskiZ97}, leading to a $1.39$ approximation factor achieved by Byrka, Grandoni, Rothvo{\ss}, and Sanit{\`{a}}  \cite{DBLP:conf/stoc/ByrkaGRS10}.
Lower bounds have also been established, with \cite{DBLP:conf/coco/Karp72} proving the NP-hardness of the Steiner tree problem and consequently the Steiner forest problem, and \cite{DBLP:journals/tcs/ChlebikC08, DBLP:journals/ipl/BernP89} demonstrating that achieving an approximation factor within $96/95$ is NP-hard.
These advancements, along with the established lower bounds, underscore the extensive research conducted in the field of Steiner tree and Steiner forest problems.

Regarding the previous works in the prize-collecting version of these problems, Goemans and Williamson \cite{DBLP:journals/siamcomp/GoemansW95} provided a $(2-\frac{1}{n-1})$-approximation algorithm for  prize-collecting Steiner tree (PCST) and prize-collecting TSP problem (PCTSP) in addition to their work on the Steiner forest problem.
However, they did not provide an algorithm specifically for the PCSF problem, leaving it as an open problem.
Later, Hajiaghayi and Jain~\cite{DBLP:conf/soda/HajiaghayiJ06} in 2006 proposed a deterministic primal-dual $(3-\frac{2}{n})$-approximation algorithm for the PCSF problem, which inspired our work. 
They also presented a randomized LP-rounding $2.54$-approximation algorithm for the problem. 
In their paper, they mentioned that finding a better approximation factor, ideally $2$, remained an open problem. 
However, no improvements have been made to their result thus far. 
Furthermore, other $3$-approximation algorithms have been proposed using cost-sharing \cite{DBLP:conf/soda/GuptaKLRS07} or iterative rounding \cite{DBLP:conf/latin/HajiaghayiN10} (see e.g.~\cite{BateniH12,HajiaghayiKKN12,SharmaSW07} for further work on PCSF and its generalizations). 
Our paper is the first work that improves the approximation factor of \cite{DBLP:conf/soda/HajiaghayiJ06}.

Moreover, advancements have been made in the PCST problem since the initial $(2-\frac{1}{n-1})$-approximation algorithm by Goemans and Williamson \cite{DBLP:journals/siamcomp/GoemansW95}.
Archer, Bateni, Hajiaghayi, and Karloff \cite{DBLP:journals/siamcomp/ArcherBHK11} presented a $1.9672$-approximation algorithm for PCST, surpassing the barrier of a $2$-approximation factor.
Additionally, there have been significant advancements in the prize-collecting TSP, which shares similarities with the LP formulation of PCST. 
Various works have been done in this area \cite{DBLP:journals/siamcomp/ArcherBHK11, DBLP:journals/corr/abs-0910-0553,DBLP:conf/stoc/BlauthN23}, and the currently best-known approximation factor is $1.599$ \cite{DBLP:journals/corr/abs-2308-06254}.
These works demonstrate the importance and interest surrounding prize-collecting problems, emphasizing their significance in the research community.

For a while, the best-known lower bound for the integrality gap of the natural LP relaxation for PCSF was $2$. 
However, K{\"{o}}nemann, Olver, Pashkovich, Ravi, Swamy, and Vygen \cite{DBLP:conf/approx/KonemannOP0SV17} proved that the integrality gap of this LP is at least $9/4$. 
This result suggests that it is not possible to achieve a $2$-approximation algorithm for PCSF solely through primal-dual approaches based on the natural LP, similar to the approaches presented in \cite{DBLP:conf/soda/HajiaghayiJ06, DBLP:conf/latin/HajiaghayiN10}. 
This raises doubts about the possibility of achieving an algorithm with an approximation factor better than $9/4$.

However, in this paper, we provide a positive answer to this question. Our main result, Theorem $\ref{thm:main_theorem}$, demonstrates the existence of a natural deterministic algorithm for the PCSF problem that achieves a $2$-approximate solution in polynomial time.

\begin{theorem}
\label{thm:main_theorem}
There exists a deterministic algorithm for the prize-collecting Steiner forest problem that achieves a $2$-approximate solution in polynomial time.
\end{theorem}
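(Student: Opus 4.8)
The plan is to analyze the combinatorial algorithm $\rpcsf$ developed in the sections that follow and to prove that the pair $(\F,\Q)$ it outputs — a forest $\F$ together with the set $\Q$ of demands whose penalty it pays — satisfies $\cc(\F)+\pi(\Q)\le 2\,\OPT$, where $\OPT=\cc(\fopt)+\pi(\qopt)$ for a fixed optimal solution $(\fopt,\qopt)$; together with a polynomial bound on the number of events of $\rpcsf$ this gives Theorem~\ref{thm:main_theorem}. The algorithm is a moat-growing process: it maintains a laminar family of active sets, raises their duals $\ys$ uniformly, and at each step either merges two moats when an edge constraint $\sum_{S\colon e\in\deltaS}\ys=\ce$ becomes tight (step size from \finddeltae) or freezes a moat — committing to pay a penalty — when a pair constraint $\sum_{S\text{ separates }i,j}\ys=\pij$ becomes tight (step size from \finddeltap). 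The twist, and the source of the new analysis, is that $\rpcsf$ also maintains a \dynamiccoloring: a fractional assignment of the accumulated dual mass of each moat, either onto edges of its boundary or onto penalties of demands it separates, updated through \localchange{} operations so as to stay a \minimaldynamiccoloring{} while respecting every pair constraint. A final reverse-delete together with \reducetightpairs{} removes redundant edges and releases penalties that are no longer needed, producing $(\F,\Q)$.

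\textbf{Splitting the cost.} I would first bound $\cc(\F)$ and $\pi(\Q)$ separately against the total weight $W$ of the final coloring. For the forest, $\cc(\F)=\sum_{e\in\F}\ce=\sum_S\ys\,|\deltaS\cap\F|$; since $\F$ is pruned, during each growth phase its edges cross the active moats at most twice on average, so the part of $W$ charged to boundaries pays for $\cc(\F)$ up to the usual factor of two. For the penalties, every demand kept in $\Q$ carries a tight pair constraint, so $\pi(\Q)$ is paid (essentially exactly) by the part of $W$ charged to penalties. The reason to route the bookkeeping through the coloring rather than directly through $\{\ys\}$ is that a single moat may both separate many demands of $\Q$ and be crossed by many edges needed elsewhere — precisely the configuration behind the $9/4$ integrality gap — whereas the dynamic coloring redistributes this mass so that $W$ itself can be controlled; the outcome of this step is $\cc(\F)+\pi(\Q)\le 2W$.

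\textbf{Charging the coloring to the integral optimum (the crux).} It remains to prove $W\le\OPT$, and here the argument compares the coloring not to the LP optimum (which $9/4$ forbids) but to $(\fopt,\qopt)$ directly, by a recursion over the laminar moat family and the subinstances it induces. For a set $R$ cut out by the family one considers the restricted optimum $(\foptr,\qoptr)$, possibly modified to $(\foptp,\qoptp)$, with value $\OPTR$, and classifies every demand by how $\rpcsf$ treats it versus how the optimum treats it — connect or pay — i.e.\ into $\A$, $\B$, $\C$, $\D$, with $\AB=\A\cup\B$ collecting the classes in which $\rpcsf$ connects. Moats that the optimal forest crosses charge their boundary-coloring to edges of $\fopt$; moats separating an optimal-penalty demand charge to $\pi(\qopt)$; penalties paid by $\rpcsf$ on demands the optimum connects charge to $\fopt$. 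Minimality of the \dynamiccoloring{} together with the pair constraints then guarantees that no portion of $\cc(\fopt)$ or $\pi(\qopt)$ is used more than twice, which is exactly the $2$ in the statement. The delicate classes are $\B$ and $\C$ — when $\rpcsf$ spends edge cost to connect a pair the optimum chose to pay for, or pays for a pair the optimum connects — since there the charge must be re-routed along the optimal structure, and it is to handle these exchanges that the \localchange{} operations and the minimality invariant are designed.

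\textbf{Running time and the main obstacle.} Finally I would check that the number of merges, freezes, local changes, and reverse-delete steps is polynomial and that each is implementable in polynomial time (via \setpairgraph, \checksetistight, and \mincut), so that $\rpcsf$ is a deterministic polynomial-time algorithm. The main obstacle is the crux step: one must exhibit \localchange{} operations that simultaneously keep the coloring minimal, keep every pair constraint satisfied at all times, and — pushed through the full case analysis over $\A,\B,\C,\D$, including moats that change class upon merging — yield a charging of $W$ into $(\fopt,\qopt)$ in which each unit of $\OPT$ is used at most twice. Essentially all of the work lies in making these three requirements compatible.
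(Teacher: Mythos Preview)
Your proposal misidentifies both the algorithm and the mechanism that produces the factor~$2$. What you describe is a single moat-growing run with a dynamic coloring maintained throughout; that is essentially the subroutine \pcsfthree, and its analysis only yields a $3$-approximation. The paper's \rpcsf{} is not a refinement of that run but an \emph{iterative wrapper}: it calls \pcsfthree, obtains $(\Q_1,\Fp_1)$, then zeros out the penalties of all pairs in $\Q_1$ to form a strictly smaller instance $R$, recurses to obtain $(\Q_2,\Fp_2)$, and returns whichever of the two solutions is cheaper. The $2$ comes from averaging the two candidate costs, proved by induction on the number of nonzero-penalty pairs; a single run never suffices.

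Concretely, your ``splitting the cost'' step asserts $\cc(\F)+\pi(\Q)\le 2W$ with $W=\sum_S\ys$, but this is exactly what fails. The standard argument gives $\cc(\Fp_1)\le 2W$ and separately $\pi(\Q_1)=\vb+\vd\le W$, and the same dual mass is counted in both bounds; no redistribution via the dynamic coloring removes this overlap, which is why \pcsfthree{} is only $3$-approximate. The paper instead decomposes $\vb=\vba+\vbb$ according to whether the corresponding moat cuts exactly one or at least two edges of $\fopt$, and shows
\[
cost_1 \le 2\,cost(\OPT) + \vba - \vbb + \vd,
\qquad
cost_2 \le 2\,cost(\OPT) - \vba + \vbb - \vd,
\]
the second inequality relying on the key structural fact (Lemma~\ref{lemma:removeedge}, which in turn uses the minimality of the dynamic coloring) that one can delete at least $\vba$ worth of edges from $\fopt$ without disconnecting any pair outside $\Q_1$, so the recursive instance has $cost(\OPTR)\le cost(\OPT)-\vba-\vd$. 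The single-edge/multi-edge split and this edge-removal lemma are the heart of the proof, and neither appears in your outline; the role you assign to the \minimaldynamiccoloring{} (a direct charge of $W$ into $\OPT$ with multiplicity two) is not what it is used for.
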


We address the $9/4$ integrality gap by analyzing a natural iterative algorithm. 
In contrast to previous approaches in the Steiner forest and PCSF fields that compare solutions with feasible dual LP solutions, we compare our solution directly with the optimal solution and assess how much the optimal solution surpasses the dual. 
It is worth noting that our paper does not rely on the primal and dual LP formulations of the Steiner forest problem. 
Instead, we employ a coloring schema that shares similarities with primal-dual approaches. 
While LP techniques could be applied to various parts of our paper, we believe that solely relying on LP would not be sufficient, particularly when it comes to overcoming the integrality gap.
Furthermore, although coloring has been used in solving Steiner problems \cite{DBLP:journals/jacm/BateniHM11}, our approach goes further by incorporating two interdependent colorings, making it novel and more advanced.

In addition, we analyze a general approach that can be applied to various prize-collecting problems.
In any prize-collecting problem, an algorithm needs to make decisions regarding which demands to pay penalties for and which demands to satisfy.
Let us assume that for a prize-collecting problem, we have a base algorithm $A$.
We propose a natural iterative algorithm that begins by running $A$ on an initial instance and storing its solution as one of the options for the final solution.
The solution generated by algorithm $A$ pays penalties for some demands and satisfies others.
Subsequently, we assume that all subsequent solutions generated by our algorithm will pay penalties for the demands that $A$ paid, set the penalties of these demands to zero, and run $A$ again on the modified instance.
We repeat this procedure recursively until we reach a state where algorithm $A$ satisfies every demand with a non-zero penalty, meaning that further iterations will yield the same solution. 
This state is guaranteed to be reached since the number of non-zero demands decreases at each step.
Finally, we obtain multiple solutions for the initial instance and select the one with the minimum cost.
This natural iterative algorithm could be effective in solving prize-collecting problems, and in this paper, we analyze its application to the PCSF problem using a variation of the algorithm proposed in \cite{DBLP:conf/soda/HajiaghayiJ06} as our base algorithm.

One interesting aspect of our findings is that the current best algorithm for the Steiner forest problem achieves an approximation ratio of $2$, and this approximation factor has remained unchanged for a significant period of time. 
It is worth noting that the Steiner forest problem is a specific case of PCSF, where each instance of the Steiner forest can be transformed into a PCSF instance by assigning a sufficiently large penalty to each demand.
Since our result achieves the same approximation factor for PCSF, improving the approximation factor for the PCSF problem proves to be more challenging compared to the Steiner forest problem. 
In future research, it may be more practical to focus on finding a better approximation factor for the Steiner forest problem, which has been an open question for a significant duration. 
Additionally, investigating the tightness of the $2$-approximation factor for both problems could be a valuable direction for further exploration.


\subsection{Algorithm and Techniques}

In this paper, we introduce a coloring schema that is useful in designing algorithms for Steiner forest, PCSF, and related problems. 
This coloring schema provides a different perspective from the algorithms proposed by Goemans and Williamson in \cite{DBLP:journals/siamcomp/GoemansW95} for Steiner forest and Hajiaghayi and Jain in \cite{DBLP:conf/soda/HajiaghayiJ06} for PCSF. 
In Section \ref{sec:3_apx_alg}, we provide a detailed representation of the algorithm proposed in \cite{DBLP:conf/soda/HajiaghayiJ06} using our coloring schema. 
The use of coloring enhances the intuitiveness of the algorithm, compared to the primal-dual approach utilized in \cite{DBLP:conf/soda/HajiaghayiJ06}, and enables the analysis of our 2-approximation algorithm.
Additionally, we introduce a modification to the algorithm of \cite{DBLP:conf/soda/HajiaghayiJ06}, which is essential for the analysis of our 2-approximation algorithm.
Finally, in Section \ref{sec:2_apx_alg}, we present an iterative algorithm and prove its 2-approximation guarantee for PCSF.

Here, we provide a brief explanation of how coloring intuitively solves the Steiner forest problem.
We then present a 3-approximation algorithm and subsequently a 2-approximation algorithm for PCSF.

\paragraph{Steiner forest.}
We start with an empty forest $\F$ to hold our solution.
The set $\currentsets$ represents the connected components of $\F$ at each moment. 
A connected component of $\F$ is considered an active set if it requires extension to connect with other components and satisfy the demands it cuts.
We maintain a subset of $\currentsets$ as active sets in $\activesets$.
Starting from this point, we consider each edge as a curve with its length equal to its cost.

In each iteration of our algorithm, every active set $S \in \activesets$ is assigned a distinct color, which is used to simultaneously color its cutting edges at the same speed.
The cutting edges of a set $S$ are defined as the edges that have exactly one endpoint within $S$. 
Our coloring procedure proceeds by coloring the remaining uncolored sections of these edges.
An edge is in the process of getting colored at a given moment if it connects different connected components of $\F$ and has at least one endpoint corresponding to an active set.
Additionally, if an edge is a cutting edge for two active sets, it is colored at a speed twice as fast as an edge that is a cutting edge for only one active set.
We continue this coloring process continuously until an edge $e$ is fully colored, and then we add it to the forest $\F$.
Afterwards, we update $\currentsets$ and $\activesets$ accordingly, as defined earlier.
It is important to note that we only add edges to $\F$ that connect different connected components, ensuring that $\F$ remains a forest. 
Furthermore, since the set of all connected components of $\F$ forms a laminar set over time, our coloring schema is also laminar.
Refer to Figure \ref{fig:growth} for clarity on the coloring process.

\begin{figure}
\centering
\begin{tikzpicture}[scale=0.2]
\def\big{15}
\def\small{12}
\def\r{0.5}
\foreach \nd/\deg in {a1/0, a2/30, b1/150, b2/180, c1/255, c2/285} {
    \draw[fill=black] (-\deg:\big) circle (\r);
    \node (\nd) at (-\deg: \big) {};
};
\draw[fill=black] (-160:\small) circle (\r);
\node (b3) at (-160: \small) {};
\draw[line width=3, color=orange] (a1) -- ($(a1)!0.5!(a2)$);
\draw[line width=3, color=violet] (a2) -- ($(a2)!0.5!(a1)$);
\draw[line width=3, color=purple] (c1) -- ($(c1)!0.4!(c2)$);
\draw[line width=3, color=cyan] ($(c1)!0.4!(c2)$) -- (c2);
\draw[line width=3, color=teal] (b1) -- (b3);
\draw[line width=3, color=olive] (b2) -- ($(b2)!0.6!(b3)$);
\draw[line width=3, color=teal] ($(b2)!0.6!(b3)$) -- (b3);
\draw[] (b1) -- (b2);
\draw[line width=3, color=teal] (b1) -- ($(b1)!0.4!(b2)$);
\draw[line width=3, color=olive] (b2) -- ($(b2)!0.4!(b1)$);
\foreach \x/\y/\col/\color in {b2/c1/purple/olive, b2/c2/cyan/olive, b1/a2/blue/teal} {
    \draw[] (\x) -- (\y);
    \draw[line width=3, color=red] (\x) -- ($(\x)!0.4!(\y)$);
    \draw[line width=3, color=\color] (\x) -- ($(\x)!0.2!(\y)$);
    \draw[line width=3, color=\col] (\y) -- ($(\y)!0.2!(\x)$);
}
\draw[] (a1) -- (c2);
\draw[line width=3, color=blue] (a1) -- ($(a1)!0.4!(c2)$);
\draw[line width=3, color=orange] (a1) -- ($(a1)!0.2!(c2)$);
\draw[line width=3, color=blue] (a2) -- ($(a2)!0.4!(b1)$);
\draw[line width=3, color=violet] (a2) -- ($(a2)!0.2!(b1)$);
\draw[line width=3, color=cyan] (c2) -- ($(c2)!0.2!(a1)$);
\draw [red] plot [smooth cycle] coordinates {($(b2)!0.4!(c1)$) ($(b2)!0.4!(c2)$) ($(b1)!0.4!(a2)$)
(-150:\big+3) (-180:\big+3)};
\draw [olive] plot [smooth cycle] coordinates {($(b2)!0.2!(c1)$) ($(b2)!0.2!(c2)$) ($(b2)!0.6!(b3)$) ($(b2)!0.4!(b1)$) (-180:\big+1.5)};
\draw [teal] plot [smooth cycle] coordinates {($(b1)!0.2!(a2)$) ($(b1)!1.2!(b3)$) ($(b3)!0.4!(b2)$) ($(b1)!0.4!(b2)$) (-150:\big+1.5)};
\draw [blue] plot [smooth cycle] coordinates {($(a2)!0.4!(b1)$) ($(a1)!0.4!(c2)$) (-0:\big+3) (-30:\big+3)};
\draw [violet] plot [smooth cycle] coordinates {($(a2)!0.2!(b1)$) ($(a2)!0.5!(a1)$) (-30:\big+1.5)};
\draw [orange] plot [smooth cycle] coordinates {($(a1)!0.2!(c2)$) ($(a1)!0.5!(a2)$) (-0:\big+1.5)};
\draw [black] plot [smooth cycle] coordinates {($(c2)!0.2!(a1)$) ($(c2)!0.2!(b2)$) ($(c1)!0.2!(b2)$)
(-255:\big+3) (-285:\big+3)};
\draw [cyan] plot [smooth cycle] coordinates {($(c2)!0.2!(a1)$) ($(c2)!0.2!(b2)$) ($(c2)!0.6!(c1)$) (-285:\big+1.5)};
\draw [purple] plot [smooth cycle] coordinates {($(c1)!0.2!(b2)$) ($(c1)!0.4!(c2)$) (-255:\big+1.5)};
\node[red] (x) at (-165:\big+5) {$S_1$};
\node[blue] (x) at (-15:\big+5) {$S_3$};
\node[black] (x) at (-270:\big+5) {$S_2$};
\end{tikzpicture}
\caption{Illustration of the \staticcoloring~and the coloring used for Steiner forest problem. 
In the graph, $S_2$ is inactive and does not color its cutting edges, while $S_1$ colors edges in red and $S_3$ colors in blue.
It is worth noting that edges within a connected component will not be further colored and will not be added to $\F$.}
\label{fig:growth}
\end{figure}
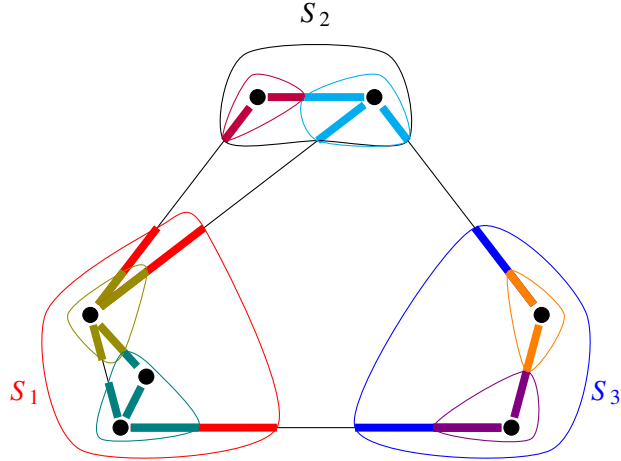 

At the end of the algorithm, we construct $\Fp$ from $\F$ by removing every edge that is not part of any path between the endpoints of any demand. 
We then analyze the cost of the optimal solution and our algorithm.
Let $\ys$ represent the amount of time that a set $S$ was active and colored its cutting edges. 
We can show that the cost of the optimal solution is at least $\sum_{S \subset \V} \ys$, while our algorithm will find a solution with a cost of at most $2 \sum_{S \subset \V} \ys$.

For each active set $S$, there exists at least one edge in the optimal solution that has exactly one endpoint inside $S$ and the other endpoint outside.
This is due to the fact that every active set cuts at least one demand, and the optimal solution must connect all demands.
While a set $S$ is active, it colors all of its cutting edges. As the optimal solution includes a cutting edge from $S$, we can conclude that an amount of $\ys$ from the optimal solution is colored by $S$. Since each set colors an uncolored portion of the edges, the cost of the optimal solution is at least $\sum_{S \subset \V} \ys$.

Furthermore, considering the fixed final forest $\Fp$, we can observe that at each moment of coloring, when we contract each connected component in $\currentsets$, it results in a forest where every leaf corresponds to an active set. 
This observation is based on the fact that if a leaf does not correspond to an active set, it implies that the only edge adjacent to that leaf is unnecessary and should have been removed from $\Fp$. 
Based on this insight, we can conclude that the number of edges being colored from $\Fp$ at that moment, which is equivalent to the sum of the degrees of the active sets in the aforementioned forest, is at most twice the number of active sets at that moment. 
This means that the amount of the newly colored portion of all edges at that moment is at most twice the total value added to all $\ys$. 
Therefore, considering that every edge in $\Fp$ is fully colored, we can deduce that the total length of edges in $\Fp$ is at most $2 \sum_{S \subset \V} \ys$.


\paragraph{A 3-approximation algorithm for prize collecting Steiner forest.}
Similar to the Steiner forest problem, we utilize coloring to solve PCSF. 
In PCSF, we encounter penalties that indicate it is not cost-effective to connect certain pairs $(i, j)$ if the cost exceeds a specified threshold $\pij$. 
To address this challenge, we introduce a coloring schema that assigns a color to each pair $(i, j)$, ensuring that the color is not used to color edges for a duration exceeding its associated potential $\pij$.
However, assigning colors to pairs introduces some challenges.
We are not aware of the distribution of potential between the endpoints of a pair, and each set may cut multiple pairs, making it unclear which color should be used at each moment.

To address these challenges, we use two types of coloring. The first type is called \staticcoloring, which is similar to the coloring schema used in the Steiner forest problem. In \staticcoloring, each set $S \subset \V$ is assigned a distinct color. It is referred to as \staticcoloring~since the colors assigned to edges corresponding to a set $S$ remain unchanged throughout the algorithm.
The second type is \dynamiccoloring, which involves coloring edges based on pairs $(i, j) \in \V \times \V$. Unlike \staticcoloring, \dynamiccoloring~allows the colors of edges to change during the algorithm, adapting to the evolving conditions.
By utilizing both \staticcoloring~and \dynamiccoloring, we can effectively handle the coloring requirements of PCSF, accounting for the potential constraints and varying edge coloring needs.

Similar to the Steiner forest algorithm, we begin by running the \staticcoloring~procedure. 
Whenever an edge is fully colored, we add it to the forest $\F$. 
However, unlike \staticcoloring, we do not maintain a separate \dynamiccoloring~throughout the algorithm, as it would require constant reconstructions. 
Instead, we compute the \dynamiccoloring~whenever needed.
To obtain the \dynamiccoloring, we map each moment of coloring for each set $S$ in the \staticcoloring~to a pair $(i, j)$ such that $S \odot (i, j)$, which means $S$ cuts $(i, j)$.
This assignment is achieved using a maximum flow algorithm, as described in Section \ref{sec:flow}. 
We ensure that our \staticcoloring~can always be converted to a \dynamiccoloring.
Let $\yij$ represent the total duration assigned to pair $(i, j)$ in the \dynamiccoloring. 
It is important to ensure that $\yij$ does not exceed the potential $\pij$ associated with that pair. 
If we encounter an active set $S$ for which assigning further coloring to any pair that $S$ cuts would exceed the pair's potential, we deactivate $S$ by removing it from $\activesets$.

We define a pair as ``tight'' if $\yij = \pij$. 
At the end of the algorithm, when every set is inactive, our goal is to pay the penalty for every tight pair. 
To minimize the number of tight pairs, we perform a local operation by assigning an $\epsilon$ amount of color assignment for set $S$ from pair $(i, j)$ to another pair $(i', j')$, such that $(i, j)$ was tight and after the operation, both pairs are no longer tight. 
Finally, we pay the penalty for every tight pair and construct $\Fp$ from $\F$ by removing any edges that are not part of a path between pairs that are not tight.
It is important to note that if a pair is not tight, it should be connected in $\F$. Otherwise, the sets containing the endpoints of that pair would still be active.
Thus, every pair is either connected or we pay its penalty.
Let us assume the optimal solution chooses forest $\fopt$ and pays penalties for pairs in $\qopt$.

Since we do not assign more color to each pair $(i, j)$ than its corresponding potential $\pij$, i.e., $\yij \le \pij$, we can conclude that the optimal solution pays at least $\sum_{(i, j) \in \qopt} \yij$ in penalties.
Moreover, similar to the argument for the Steiner forest, the cost of $\fopt$ is at least $\sum_{(i, j) \notin \qopt} \yij$.
Therefore, the cost of the optimal solution is at least $\sum_{S \subset \V} \ys = \sum_{(i, j) \in \V \times \V} \yij$, while, similar to the argument for the Steiner forest, the cost of $\Fp$ is at most $2\sum_{S \subset \V} \ys$.
Moreover, the total penalty we pay is at most $\sum_{(i, j) \in \V \times \V} \yij$, since we only pay for tight pairs. 
This guarantees a 3-approximation algorithm.


\paragraph{A 2-approximation algorithm for prize collecting Steiner forest.}
Let's refer to our 3-approximation algorithm as \pcsfthree. 
Our goal is to construct a 2-approximation algorithm called \rpcsf, by iteratively invoking \pcsfthree.
In \rpcsf, we first invoke \pcsfthree~and obtain a feasible solution $(\Q_1, \Fp_1)$, where $\Q_1$ represents the pairs for which we pay their penalty, and $\Fp_1$ is a forest that connects the remaining pairs. 
Next, we set the penalty for each pair in $\Q_1$ to $0$.
We recursively call \rpcsf~with the updated penalties. Let's assume that $(\Q_2, \Fp_2)$ is the result of this recursive call to \rpcsf~for the updated penalties. 
It is important to note that $(\Q_2, \Fp_2)$ is a feasible solution for the initial instance, as it either connects the endpoints of each pair or places them in $\Q_2$.
Furthermore, it is true that $\Q_1 \subseteq \Q_2$, as the penalty of pairs in $\Q_1$ is updated to $0$, and they will be considered as tight pairs in further iterations of $\pcsfthree$.
By induction, we assume that $(\Q_2, \Fp_2)$ is a 2-approximation of the optimal solution for the updated penalties.
Now, we want to show that either $(\Q_1, \Fp_1)$ or $(\Q_2, \Fp_2)$ is a 2-approximation of the optimal solution for the initial instance.
We will select the one with the lower cost and return it as the output of the algorithm.

To analyze the algorithm, we focus on the \dynamiccoloring~of pairs in $\Q_1$ that are connected in the optimal solution. 
Let $\B$ denote the set of pairs $(i, j) \in \Q_1$ that are connected in the optimal solution. 
We concentrate on this set because the optimal solution connects these pairs, and we will pay their penalties in both $(\Q_1, \Fp_1)$ and $(\Q_2, \Fp_2)$.
Let's assume $\vb$ represents the total duration that we color with a pair in $\B$ in \dynamiccoloring. 
Each moment of coloring with a pair $(i,j) \in \B$ in \dynamiccoloring~corresponds to coloring with a set $S$ in \staticcoloring~such that $S \odot (i,j)$. Since $(i,j) \in \B$ is connected in the optimal solution, we know that $S$ cuts at least one edge of the optimal solution, and $S$ colors that edge in \staticcoloring, while $(i,j)$ colors that edge in \dynamiccoloring. 
Thus, for any moment of coloring with pair $(i,j) \in \B$ in \dynamiccoloring, we will color at least one edge of the optimal solution.
Let $\vba$ be the total duration when pairs in $\B$ color exactly one edge of the optimal solution, and $\vbb$ be the total duration when pairs in $\B$ color at least two edges.
It follows that $\vba + \vbb = \vb$.

We now consider the values of $\vba$ and $\vbb$ to analyze the algorithm. 
If $\vbb$ is sufficiently large, we can establish a stronger lower bound for the optimal solution compared to our previous bound, which was $\sum_{S \subset V} \ys$. 
In the previous bound, we showed that each moment of coloring covers at least one edge of the optimal solution. 
However, in this case, we can demonstrate that a significant portion of the coloring process covers at least two edges at each moment. 
This improved lower bound allows us to conclude that the output of \pcsfthree, $(\Q_1, \Fp_1)$, becomes a 2-approximate solution.

Alternatively, if $\vba$ is significantly large, we can show that the optimal solution for the updated penalties is substantially smaller than the optimal solution for the initial instance. 
This is achieved by removing the edges from the initial optimal solution that are cut by sets whose color is assigned to pairs in $\B$ and that set only colored one edge of the optimal solution. 
By minimizing the number of tight pairs at the end of $\pcsfthree$, we ensure that no pair with a non-zero penalty is cut by any of these sets, and removing these edges will not disconnect those pairs.
Consequently, we can construct a feasible solution for the updated penalties without utilizing any edges from the cutting edges of these sets in the initial optimal solution. 
In summary, since $(\Q_2, \Fp_2)$ is a 2-approximation of the optimal solution for the updated penalties, and the optimal solution for the updated penalties has a significantly lower cost than the optimal solution for the initial instance, $(\Q_2, \Fp_2)$ becomes a 2-approximation of the optimal solution for the initial input.

Last but not least, we conduct further analysis of our algorithm to achieve a more refined approximation factor of $2-\frac{1}{n}$, which asymptotically approaches $2$.




\subsection{Preliminaries}

For a given set $S \subset \V$, we define the set of edges that have exactly one endpoint in $S$ as the \textit{cutting edges} of $S$, denoted by $\deltaS$. In other words, $\deltaS = \{(u, v) \in \E: |\{u, v\} \cap S| = 1\}$.
We say that $S$ cuts an edge $e$ if $e$ is a cutting edge of $S$, i.e., $e \in \deltaS$.
We say that $S$ cuts a forest $F$ if there exists an edge $e \in F$ such that $S$ cuts that edge.

For a given set $S \subset \V$ and pair $\{i, j\} \in \V \times \V$, we say that $S$ cuts $(i, j)$ if and only if $|\{i, j\} \cap S| = 1$. 
We denote this relationship as $S \odot (i, j)$.

For a forest $F$, we define $\cc(F)$ as the total cost of edges in $F$, i.e., $\cc(F) = \sum_{e \in F} \ce$. 

For a set of pairs of vertices $Q \subseteq \V \times \V$, we define $\pi(Q)$ as the sum of penalties of pairs in $Q$, i.e., $\pi(Q) = \sum_{(i, j) \in Q} \pij$.

For a given solution $\sol$ to a PCSF instance $\I$, the notation $cost(\sol)$ is used to represent the total cost of the solution.
In particular, if $\sol$ uses a forest $\F$ and pays the penalties for a set of pairs $\Q$, then the total cost is given by $cost(\sol)=\cc(\F)+\pi(\Q)$.

For a graph $G = (V, E)$ and a vertex $v \in \V$, we define $d_G(v)$ as the degree of $v$ in $G$. 
Similarly, for a set $S \subset V$, we define $d_G(S)$ as the number of edges that $S$ cuts, i.e., $|E \cap \deltaS|$.

Since we use max-flow algorithm in Section \ref{sec:flow}, we provide a formal definition of the \mincut~function:

\begin{definition}[\mincut]  
\label{def:max-flow}
    For the given directed graph $\G$ with source vertex $\source$ and sink vertex $\sink$, the function $\mincut(\G, \source, \sink)$ calculates the maximum flow from $\source$ to $\sink$ and returns three values: $\mfout$. 
    Here, $\mf$ represents the maximum flow value achieved from $\source$ to $\sink$ in $\G$, $\MC$ represents the min-cut between $\source$ and $\sink$ in $\G$ that minimizes the number of vertices on the $\source$ side of the cut, and $\f$ is a function $\f: E \rightarrow \mathbb{R}^{+}$ that assigns a non-negative flow value to each edge in the maximum flow.
\end{definition}

Throughout this paper, it is important to note that whenever we refer to the term ``minimum cut'' or ``min-cut,'' we specifically mean the minimum cut that separates $\source$ from $\sink$. 
Furthermore, we refer to the minimum cut that minimizes the number of vertices on the $\source$ side of the cut as the ``minimal min-cut''.


\section{Representing a 3-approximation Algorithm}
\label{sec:3_apx_alg}
In this section, we present an algorithm that utilizes coloring to obtain a 3-approximate solution.
Although the main part of this algorithm closely follows the approach presented by Hajiaghayi and Jain in \cite{DBLP:conf/soda/HajiaghayiJ06}, our novel interpretation of the algorithm is crucial for the subsequent analysis of the 2-approximation algorithm in the next section.
Furthermore, we introduce a modification at the end of the 3-approximation algorithm, which plays a vital role in achieving a 2-approximation algorithm in the next section.

From this point forward, we consider each edge as a curve with a length equal to its cost.
In our algorithm, we use two types of \coloring s: \staticcoloring~and \dynamiccoloring. Both of these \coloring s are used to assign colors to the edges of the graph, where each part of an edge is assigned a specific color. 
It is important to note that both \coloring s have the ability to assign different colors to different portions of the same edge.

First, we introduce some variables that are utilized in Algorithm \ref{alg:pcsf3}, and we will use them to define the \coloring s.
Let $\F$ be a forest that initially is empty, and we are going to add edges to in order to construct a forest that is a superset of our final forest.
Moreover, we maintain the set of connected components of $\F$ in $\currentsets$, where each element in $\currentsets$ represents a set of vertices that forms a connected component in $\F$.
Additionally, we maintain a set of active sets $\activesets \subseteq \currentsets$, which will be utilized for coloring edges in \staticcoloring. 
We will provide further explanation on this later.
Initially, we set $\activesets = \currentsets$.

\paragraph{Static Coloring.}
We construct an instance of \staticcoloring~iteratively by assigning colors to portions of edges. 
In \staticcoloring, each set $S \subset \V$ is assigned a unique color. 
Once a portion of an edge is colored in \staticcoloring, its color remains unchanged.

During the algorithm's execution, active sets color their cutting edges simultaneously and at the same speed, using their respective unique colors.
As a result, only edges between different connected components are colored at any given moment. 
When an edge $e$ is fully colored, we add it to $F$ and update $\currentsets$ to maintain the connected components of $F$. 
Since $e$ connects two distinct connected components of $F$, $F$ remains a forest. 
Furthermore, we update $\activesets$ by removing sets that contain an endpoint of $e$ and replacing them with their union.
Within the loop at Line \ref{line:merge_tight_edges_start} of Algorithm \ref{alg:pcsf3}, we check if an edge has been completely colored, and then merge the sets that contain its endpoints.
In addition, we provided a visual representation of the \staticcoloring~process in Figure \ref{fig:growth}.

\begin{definition}[Static coloring duration]
    For an instance of \staticcoloring, define $\ys$ as the duration during which set $S$ colors its cutting edges using the color $S$.
\end{definition}

It is important to note that we do not need to store the explicit portion of each edge that is colored. 
Instead, we keep track of $\ys$, which represents the amount of coloring associated with set $S$. The portion of edge $e$ that is colored can be computed as $\sum_{S:e\in \deltaS} \ys$.

Now, we will explain the procedure $\finddeltae$, which determines the first moment in time, starting from the current moment, when at least one new edge will become fully colored. 
This procedure is essential for executing the algorithm in discrete steps.

\paragraph{Finding the maximum value for \Deltae.}
In \finddeltae, we determine the maximum value of $\Deltae$ such that continuing the coloring process for an additional duration $\Deltae$ does not exceed the length of any edges.
We consider each edge $e = (v, u)$ where $v$ and $u$ are not in the same connected component, and at least one of them belongs to an active set. 
The portion of edge $e$ that has already been colored is denoted by $\sum_{S: e \in \deltaS} \ys$. 
The remaining portion of edge $e$ requires a total time of $(\ce - \sum_{S: e \in \deltaS} \ys)/t$ to be fully colored, where $t$ is the number of endpoints of $e$ that are in an active set.
It is important to note that the coloring speed is doubled when both endpoints of $e$ are in active sets compared to the case where only one endpoint is in an active set.
To ensure that the edge lengths are not exceeded, we select $\Deltae$ as the minimum time required to fully color an edge among all the edges.

\begin{corollary}
\label{col:deltae-fully-colored}
    After coloring for $\Deltae$ duration, at least one new edge becomes fully colored.
\end{corollary}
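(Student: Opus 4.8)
The statement is essentially immediate from the definition of $\Deltae$ in the preceding paragraph, so the plan is short. First I would record that during any sub-interval in which we advance the coloring by a common amount $\delta \in [0, \Deltae)$ — that is, increase $\ys$ by $\delta$ for every active set $S$ — the entire coloring configuration stays frozen: the forest $\F$, its components $\currentsets$, the active sets $\activesets$, the collection of edges currently being colored, and the per-edge speed $t_e \in \{1,2\}$ all change only when an edge is added to $\F$, and an edge is added only once it is fully colored. So the crux is to verify that no edge becomes fully colored for $\delta < \Deltae$. For a candidate edge $e$ (joining two different components, with $t_e \ge 1$ endpoints in active sets), after advancing by $\delta$ its colored length is $\sum_{S:\,e\in\deltaS}\ys + t_e\,\delta$, which is strictly below $\cc_e$ exactly when $\delta < (\cc_e - \sum_{S:\,e\in\deltaS}\ys)/t_e$; since $\Deltae$ is defined as the minimum of these thresholds over all candidate edges, taking $\delta < \Deltae$ keeps every candidate edge strictly partially colored, while non-candidate edges receive no new color at all.

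Next I would take a candidate edge $e^{*}$ attaining the minimum, so that $\Deltae = (\cc_{e^{*}} - \sum_{S:\,e^{*}\in\deltaS}\ys)/t_{e^{*}}$. After coloring for the full duration $\Deltae$, the colored length of $e^{*}$ becomes $\sum_{S:\,e^{*}\in\deltaS}\ys + t_{e^{*}}\Deltae = \cc_{e^{*}}$, so $e^{*}$ is fully colored; and it is \emph{new} because, being a candidate edge, it had strictly positive uncolored length an instant earlier (in the case $\Deltae>0$; if $\Deltae = 0$, such an $e^{*}$ is already fully colored at the current instant and is added immediately, which is harmless for the statement). This yields the corollary.

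The one bookkeeping point I would not skip is that $\Deltae$ is well defined and finite whenever the algorithm has not yet terminated: there is then at least one set in $\activesets$, and every active set $S$ cuts at least one demand, so $\G$ contains an edge of $\deltaS$ lying on a path between that demand's endpoints — a candidate edge — whence the minimum defining $\Deltae$ ranges over a nonempty finite set. I do not expect a genuine obstacle here; the whole content is a careful unwinding of how $\finddeltae$ selects $\Deltae$, and the only mild subtleties are the possibility of several edges completing simultaneously (absorbed by the phrasing ``at least one new edge'') and the degenerate $\Deltae = 0$ case noted above.
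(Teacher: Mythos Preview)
Your proposal is correct and aligns with the paper, which states this as an immediate corollary of the preceding paragraph (defining $\Deltae$ as the minimum completion time over all candidate edges) and gives no separate proof. Your unwinding is more detailed than the paper requires; the only minor wrinkle is that your finiteness aside invokes Steiner-forest reasoning (``every active set cuts a demand'') and tacitly assumes $G$ is connected, but that side point is inessential to the corollary itself.
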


In Algorithm \ref{alg:finddeltae}, we outline the procedure for $\finddeltae$.

\begin{algorithm}[ht]
  \caption{Fidning the maximum value for \Deltae}
  \label{alg:finddeltae}
  \hspace*{\algorithmicindent} \textbf{Input:} An undirected graph $\G=(\V, \E, \cc)$ with edge costs $\cc: \E \rightarrow \mathbb{R}_{\ge 0}$, an instance of \staticcoloring\\\hspace*{\algorithmicindent}\hspace{12mm} represented by $\y : 2^{\V} \rightarrow \mathbb{R}_{\ge 0}$, active sets \activesets, and connected components \currentsets. \\
  \hspace*{\algorithmicindent} \textbf{Output:} $\Deltae$, the maximum value that can be added to $\ys$ for $S \in \activesets$ without violating edge lengths.
  \begin{algorithmic}[1]
    \Procedure{\finddeltae}{\G, \y, \activesets, \currentsets}
        \State Initialize $\Deltae \gets \infty$
        \For{$e\in E$} 
        \label{line:delta_e_loop}
            \State Let $S_v, S_u \in \currentsets$ be the sets that contain each endpoint of $e$.
            \State $t \gets |\{S_v, S_u\} \cap \activesets|$
            \If{$S_v \neq S_u$ \textbf{and}~$t \neq 0$}
                \State $\Deltae \gets \min (\Deltae, (\ce- \sum_{S: e \in \deltaS} \ys)/t)$ \label{line:assign_delta_e}
            \EndIf
        \EndFor
        \State \Return $\Deltae$
    \EndProcedure
  \end{algorithmic}
\end{algorithm}

Now, we can utilize $\finddeltae$ to perform the coloring in discrete steps, as shown in Algorithm \ref{alg:pcsf3}. 
In summary, during each step, at Line \ref{line:call_find_delta_e}, we call $\finddeltae$ to determine the maximum duration $\Deltae$ for which we can color with active sets without exceeding the length of any edge, ensuring that at least one edge will be fully colored.
Similarly, at Line \ref{line:assign_delta_p}, we utilize $\finddeltap$ to determine the maximum value of $\Deltap$ that ensures a valid \staticcoloring~when extending the coloring duration by $\Deltap$ using active sets. 
The concept of a valid \staticcoloring, which avoids purchasing edges when it is more efficient to pay penalties, will be further explained in Section \ref{sec:flow}.

Then, at Line \ref{line:grow_active_sets}, we advance the static coloring process for a duration of $\min(\Deltae, \Deltap)$. 
In the subsequent loop at Line \ref{line:merge_tight_edges_start}, we identify newly fully colored edges and merge their endpoints' sets.
Additionally, within the loop at Line \ref{line:loop_deactive}, we will identify and deactivate sets that should not remain active, as their presence would lead to an invalid \staticcoloring.
We continue updating our \staticcoloring~until no active sets remain. 
Finally, we set $\Q$ equal to the set of pairs for which we need to pay penalties in Line \ref{line:call_reduce_tight_pairs}, and we derive our final forest $\Fp$ from $\F$ by removing redundant edges that are not necessary for connecting demands in $(\V \times \V)\setminus \Q$.

In Algorithm \ref{alg:pcsf3}, we utilize three functions other than $\finddeltae$: $\finddeltap$, $\checksetistight$, and $\reducetightpairs$.
The purpose of $\finddeltap$ is to determine the maximum value of $\Deltap$ that allows for an additional coloring duration of $\Deltap$ resulting in a \validstaticcoloring. 
$\checksetistight$ is responsible for identifying sets that cannot color their cutting edges while maintaining the validity of the static coloring.
Lastly, $\reducetightpairs$ aims to reduce the number of pairs for which penalties need to be paid and determine the final set of pairs that we pay their penalty. 
All of these functions utilize \dynamiccoloring, which will be explained in Section \ref{sec:flow}. 

It is important to note that we do not store a \dynamiccoloring~within \pcsfthree~since it changes constantly. 
Instead, we compute a \dynamiccoloring~based on the current \staticcoloring~within these functions, as they are the only parts of our algorithm that require a \dynamiccoloring.
Note that at the end of $\pcsfthree$, we require a final \dynamiccoloring~for the analysis in Section \ref{sec:2_apx_alg}. 
This final coloring will be computed in $\reducetightpairs$ at Line \ref{line:call_reduce_tight_pairs}.

\begin{algorithm}[ht]
  \caption{A 3-approximation Algorithm}
  \label{alg:pcsf3}
  \hspace*{\algorithmicindent} \textbf{Input:} An undirected graph $\G=(\V, \E, \cc)$ with edge costs $\cc: \E \rightarrow \mathbb{R}_{\ge 0}$ and penalties $\pi : \V \times \V \rightarrow \mathbb{R}_{\ge 0}$. \\
  \hspace*{\algorithmicindent} \textbf{Output:} A set of pairs $\Q$ with a forest $\Fp$ that connects the endpoints of every pair $(i, j) \notin \Q$.
  \begin{algorithmic}[1]
    \Procedure{\pcsfthree}{$I=(\G, \pi)$}
      \State Initialize $\F \gets \emptyset$
      \State Initialize $\activesets, \currentsets \gets \{\{v\}: v \in \V\}$
      \label{line:init-acts}
      \State Implicitly set $\ys \gets \emptyset$ for all $S \subset \V$
      \While{$\activesets \neq \emptyset$} \label{line:pcsf3-while}
        \State $\Deltae \gets \finddeltae(\G, \y, \activesets, \currentsets) $ \label{line:call_find_delta_e}
        \State $\Deltap \gets \finddeltap(\G, \pi, \y, \activesets) $ \label{line:assign_delta_p}
        \State $\Deltat \gets \min(\Deltae, \Deltap)$
        \For{$S \in \activesets$}
            \State $\ys \gets \ys + \Deltat$ \label{line:grow_active_sets}
        \EndFor
        \For{$e\in E$} \label{line:merge_tight_edges_start}
          \State Let $S_v, S_u \in \currentsets$ be sets that contains each endpoint of $e$
          \If{$\sum_{S: e \in \deltaS} \ys = \ce$ \textbf{and} $S_v \neq S_u$} 
            \State $\F \gets \F \cup \{e\}$
            \State $\currentsets \gets (\currentsets \setminus \{S_p, S_q\}) \cup \{S_p \cup S_q\}$
            \label{line:update-c}
            \State $\activesets \gets (\activesets \setminus \{S_p, S_q\}) \cup \{S_p \cup S_q\}$ \label{line:merge_tight_edges_end}
            \label{line:update-as}
          \EndIf
        \EndFor
        \For{$S \in \activesets$} \label{line:loop_deactive}
          \If{$\checksetistight(\G, \pi, \y, S)$} \label{line:find_tight_sets}
            \State $\activesets \gets \activesets \setminus \{S\}$ \label{line:deactivate_tight_sets}
          \EndIf
        \EndFor
      \EndWhile
      \State $\Q \gets \reducetightpairs(\G, \pi, \y)$ \label{line:call_reduce_tight_pairs}
      \State Let $\Fp$ be the subset of $\F$ obtained by removing unnecessary edges for connecting demands $(\V \times \V) \setminus \Q$.\label{line:create_F'}
      \State \Return $(\Q, \Fp)$
    \EndProcedure
  \end{algorithmic}
\end{algorithm}

Now, let's analyze the time complexity of $\finddeltae$ as described in Lemma \ref{lm:finddeltae-polynomial}.
In Lemma \ref{lm:pcsfthree_polynomial}, we will demonstrate that the overall time complexity of $\pcsfthree$ is polynomial. This will be achieved after explaining and analyzing the complexity of the subroutines it invokes.

\begin{lemma}
\label{lemma:linear-active}
    In the \pcsfthree~algorithm, the number of sets that have been active at some point during its execution is linear.
\end{lemma}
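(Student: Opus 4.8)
The plan is to track how the family of connected components (and hence the collection of active sets) evolves over the run of \pcsfthree, and to argue that every set that is ever active belongs to a single laminar family whose total size is linear in $|\V|$. First I would observe that $\currentsets$ always consists of the connected components of the current forest $\F$, and that $\F$ only ever grows by adding edges that join two distinct components (Line \ref{line:update-c}); hence $|\currentsets|$ starts at $n = |\V|$ and strictly decreases by exactly one each time an edge is added to $\F$. Next I would note that a set only ever enters $\activesets$ in one of two ways: either it is one of the $n$ singletons at initialization (Line \ref{line:init-acts}), or it is created as the union $S_p \cup S_q$ of two sets already in $\activesets$ when a fully-colored edge is merged (Line \ref{line:update-as}). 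Crucially, no set is ever \emph{re-added} to $\activesets$: once a set is deactivated in Line \ref{line:deactivate_tight_sets} it can only reappear as a strict superset via a later merge, never as the same vertex set, because the components of $\F$ are nested over time and each merge strictly enlarges the component.

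The key step is then a counting argument on this laminar structure. Consider the forest-of-merges: each merge operation takes two current components and replaces them with their union, so over the whole execution the sets that are ever components of $\F$ form a laminar family that is exactly the node set of a binary forest with $n$ leaves (the singletons) and at most $n-1$ internal nodes (one per edge added to $\F$), for a total of at most $2n-1$ sets. Every set that is ever active is in particular a component of $\F$ at the moment it becomes active — singletons at the start, or freshly merged components — so the number of ever-active sets is bounded by $2n-1$, which is linear. I would phrase this by defining the map sending each ever-active set to its appearance as a component of $\F$ and arguing injectivity, so that the bound on components transfers to active sets.

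The main obstacle, and the point that needs the most care in writing, is ruling out the possibility that a vertex set is activated, deactivated, and then "the same set" becomes active again — which would break the injection into the laminar family of components. I would handle this by arguing that $\activesets \subseteq \currentsets$ is an invariant (immediate from the updates in Lines \ref{line:init-acts}, \ref{line:update-as}, \ref{line:deactivate_tight_sets}), and that once a set $S$ leaves $\currentsets$ it never returns, since components only merge and a merged component is a strict superset of each of its constituents. Hence the distinct vertex sets that appear in $\activesets$ over time are all distinct members of the laminar family of components, and there are at most $2n-1$ of those. A minor secondary point to mention is that $\checksetistight$ might deactivate a set that is then immediately swallowed by a merge in a later iteration, but this only removes sets from $\activesets$ and never creates new distinct ones, so it does not affect the count.
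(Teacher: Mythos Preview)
Your proposal is correct and follows essentially the same counting argument as the paper: $n$ singletons at initialization plus at most $n-1$ new sets created by merges (one per decrement of $|\currentsets|$), for a total of at most $2n-1$. The paper states this in three sentences without the laminar-family framing or the re-activation discussion; your extra care about a deactivated set reappearing is fine but unnecessary, since it suffices to count \emph{creation events} in $\activesets$ (initialization and Line~\ref{line:update-as}) rather than to inject ever-active sets into the family of components.
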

\begin{proof}
    During the algorithm, new active sets are only created in Line \ref{line:update-as} by merging existing sets. 
    Initially, we start with $n$ active sets in \activesets. 
    Symmetrically, for each creation of a new active set, we have one merge operation over sets in \currentsets, which reduces the number of sets in $\currentsets$ by exactly one. 
    Since we start with $n$ sets in \currentsets, the maximum number of merge operations is $n-1$. Therefore, the total number of active sets throughout the algorithm is at most $2n-1$.
\end{proof}

\begin{lemma}
\label{lm:finddeltae-polynomial}
    The runtime of \finddeltae~is polynomial.
\end{lemma}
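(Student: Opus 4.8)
The claim to prove is Lemma~\ref{lm:finddeltae-polynomial}: the runtime of \finddeltae~is polynomial.

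The plan is to read off the bound directly from the pseudocode in Algorithm~\ref{alg:finddeltae}. The procedure consists of a single \texttt{for} loop over the edge set $E$, so it performs $|E|$ iterations. Inside each iteration we must account for the cost of the constant number of primitive operations: looking up the sets $S_v, S_u \in \currentsets$ containing the two endpoints of $e$, computing $t = |\{S_v, S_u\} \cap \activesets|$, and evaluating the candidate value $(\ce - \sum_{S: e \in \deltaS} \ys)/t$ before taking a minimum with the running value of $\Deltae$. The first two are $O(1)$ membership/lookup operations (or at worst $O(n)$ with a naive representation of $\currentsets$ and $\activesets$ as vertex sets). So the only nonobvious quantity is the cost of evaluating the already-colored portion $\sum_{S: e \in \deltaS} \ys$ of edge $e$.

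The key step is therefore to argue that $\sum_{S: e \in \deltaS} \ys$ can be computed in polynomial time. Here I would invoke Lemma~\ref{lemma:linear-active}: although the sum formally ranges over all $S \subset \V$ with $e \in \deltaS$ — exponentially many sets — only sets that have been active at some point have $\ys > 0$, and by Lemma~\ref{lemma:linear-active} there are at most $2n-1$ such sets. We only ever store the values $\ys$ for these $O(n)$ sets (the algorithm "implicitly" sets $\ys \gets \emptyset$ for the rest), so the sum has at most $O(n)$ nonzero terms. For each stored set $S$ we check whether $e \in \deltaS$, i.e. whether exactly one endpoint of $e$ lies in $S$, in $O(n)$ time, and add $\ys$ if so. Hence $\sum_{S: e \in \deltaS} \ys$ is computable in $O(n^2)$ time, and each iteration of the loop costs $O(n^2)$.

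Putting the pieces together: the loop runs $|E| = O(n^2)$ times, each iteration costs $O(n^2)$, so \finddeltae~runs in $O(n^4)$ time, which is polynomial. I expect the main (and only) obstacle to be making explicit that the sum $\sum_{S: e \in \deltaS} \ys$ is over polynomially many effectively-nonzero terms — everything else is a routine inspection of the pseudocode — so the proof should be short and hinge on citing Lemma~\ref{lemma:linear-active}.
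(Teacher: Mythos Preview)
Your proposal is correct and follows essentially the same approach as the paper: loop over the (polynomially many) edges, and for each edge use Lemma~\ref{lemma:linear-active} to bound the number of nonzero terms in $\sum_{S: e\in\deltaS}\ys$ by $O(n)$. The paper's proof is terser and stops at ``polynomial,'' whereas you go on to give an explicit $O(n^4)$ bound, but the argument is the same.
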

\begin{proof}
In Line \ref{line:delta_e_loop}, we iterate over the edges, and the number of edges is polynomial. 
In addition, since each set $S$ with $\ys > 0$ has been active at some point, the number of these sets is linear due to Lemma \ref{lemma:linear-active}. Consequently, for each edge, we calculate the sum in Line \ref{line:assign_delta_e} in linear time by iterating through such sets. Therefore, we can conclude that \finddeltae~runs in polynomial time.
\end{proof}


\subsection{Dynamic Coloring}
\label{sec:flow}

The \dynamiccoloring~is derived from a given \staticcoloring.
In \dynamiccoloring, each pair $(i, j) \in \V \times \V$ is assigned a unique color.
The goal is to assign each moment of coloring in \staticcoloring~with each active set $S \subset \activesets$, to a pair $(i, j) \in \V \times \V$ where $S \odot (i, j)$ holds, and color the same portion that set $S$ colored at that moment in \staticcoloring~with the color of pair $(i, j)$ in \dynamiccoloring.
Furthermore, there is a constraint on the usage of each pair's color. 
We aim to avoid using the color of pair $(i, j)$ for more than a total duration of $\pij$. It's important to note that for a specific \staticcoloring, there may be an infinite number of different \dynamiccoloring s, but we only need to find one of them.

Now, we will introduce some notations that are useful in our algorithm and analysis.

\begin{definition}[Dynamic Coloring Assignment Duration]
    In a \dynamiccoloring~instance, for each set $S$ and pair $(i, j)$ where $S \odot (i, j)$, $\ysij$ represents the duration of coloring with color $S$ in \staticcoloring~that is assigned to pair $(i, j)$ for coloring in \dynamiccoloring.
\end{definition}
    
\begin{definition}[Dynamic Coloring Duration]
    In a \dynamiccoloring~instance, $\yij$ represents the total duration of coloring with pair $(i, j)$ in \dynamiccoloring, denoted as $\yij = \sum_{S: S \odot (i, j)} \ysij$.
\end{definition}

\begin{definition}[Pair Constraint and Tightness]
\label{def:tightness}
    In a \dynamiccoloring~instance, the condition that each pair $(i, j)$ should not color for more than $\pij$ total duration (i.e., $\yij \le \pij$) is referred to as the \pairconst.
    If this condition is tight in the \dynamiccoloring~for a pair $(i, j)$, i.e., $\yij = \pij$, we say that pair $(i, j)$ is a tight pair.
\end{definition}

\begin{definition}[Valid Static Coloring]
\label{def:valid-static-coloring}
    A \staticcoloring~is considered valid if there exists a \dynamiccoloring~for the given \staticcoloring. In other words, the following conditions must hold:
    \begin{itemize}
        \item For every set $S \subset V$, we can distribute the duration of the static coloring for $S$ among pairs $(i, j)$ that satisfy $S \odot (i, j)$, such that $\sum_{(i, j): S \odot (i, j)} \ysij = \ys$.
        \item For every pair $(i, j) \in \V \times \V$, the pair constraint is not violated, i.e., $\yij = \sum_{S: S\odot (i, j)} \ysij \le \pij$.
    \end{itemize}
    If there is no \dynamiccoloring~that satisfies these conditions, the \staticcoloring~is considered invalid.
\end{definition}

Note that in the definition of \validstaticcoloring, the validity of a \staticcoloring~is solely determined by the duration of using each color, denoted as $\ys$, and the specific timing of using each color is not relevant.
Moreover, a function $\y : 2^{\V} \rightarrow \mathbb{R}_{\ge 0}$ is almost sufficient to describe a \staticcoloring, as it indicates the duration for which each set $S$ colors its cutting edges.
Therefore, this function provides information about the portion of each edge that is colored with each color.
This information is enough for our algorithm and analysis. 
We are not concerned with the precise location on an edge where a specific color is applied.
Instead, our focus is on determining the amount of coloring applied to each edge with each color.
Similarly, the function $\y : 2^{\V} \times \V \times \V \rightarrow \mathbb{R}_{\ge 0}$ is enough for determining a \dynamiccoloring.

\begin{definition}[Set Tightness]
\label{def:set-tightness}
    For a valid instance of \staticcoloring, we define a set $S \subset \V$ as tight if increasing the value of $\ys$ by any $\epsilon>0$ in the \staticcoloring~without changing the coloring duration of other sets would make the \staticcoloring~invalid.
\end{definition}

\begin{lemma}
\label{lemma:tight-set-tight-pairs}
    In a \validstaticcoloring, if set $S$ is tight, then for any corresponding \dynamiccoloring, all pairs $(i, j)$ such that $S \odot (i, j)$ are tight.  
\end{lemma}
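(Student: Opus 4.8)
The statement is a short perturbation argument, so I would prove it by contradiction and essentially exhibit a single modified dynamic coloring. Fix a \validstaticcoloring~$\y$ in which $S$ is tight, and fix an arbitrary corresponding \dynamiccoloring, described by values $\ysij \ge 0$ with $\sum_{(i,j):T\odot(i,j)} \y_{Tij} = \y_T$ for every $T$ and $\yij = \sum_{T:T\odot(i,j)} \y_{Tij} \le \pij$ for every pair. Suppose, for contradiction, that some pair $(i^\ast,j^\ast)$ with $S \odot (i^\ast,j^\ast)$ is not tight in this \dynamiccoloring, i.e.\ $\y_{i^\ast j^\ast} < \pi_{i^\ast j^\ast}$.

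Set $\epsilon := \pi_{i^\ast j^\ast} - \y_{i^\ast j^\ast} > 0$, and consider the \staticcoloring~$\y'$ that agrees with $\y$ on every set except $\y'_S := \ys + \epsilon$. I would then show $\y'$ is valid by exhibiting a \dynamiccoloring~for it: keep $\y_{Tij}$ unchanged for all $T \neq S$ and keep $\ysij$ unchanged for all pairs $(i,j) \neq (i^\ast,j^\ast)$, and set $\y'_{S i^\ast j^\ast} := \y_{S i^\ast j^\ast} + \epsilon$. All values remain nonnegative since we only increased one of them. The distribution condition of Definition \ref{def:valid-static-coloring} holds for $S$ because its assigned durations now sum to $\ys + \epsilon = \y'_S$, and it holds unchanged for every other set. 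For the \pairconst: only pair $(i^\ast,j^\ast)$ has its total duration changed, and it becomes $\y_{i^\ast j^\ast} + \epsilon = \pi_{i^\ast j^\ast} \le \pi_{i^\ast j^\ast}$, while every other pair's duration is untouched and hence still within its potential.

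Thus $\y'$ is a \validstaticcoloring~obtained from $\y$ by increasing $\ys$ by the positive amount $\epsilon$ without changing the coloring duration of any other set. This directly contradicts the tightness of $S$ in the sense of Definition \ref{def:set-tightness}. Therefore no such pair $(i^\ast,j^\ast)$ exists, i.e.\ every pair $(i,j)$ with $S \odot (i,j)$ satisfies $\yij = \pij$ and is tight. Since the \dynamiccoloring~was an arbitrary one corresponding to $\y$, the conclusion holds for every corresponding \dynamiccoloring.

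I do not expect a genuine obstacle here; the only points that need care are bookkeeping ones — checking that the perturbed assignment keeps $\ysij \ge 0$, that exactly one pair's total duration changes so no other \pairconst~can be violated, and that the rewritten distribution sum for $S$ matches $\y'_S$. If one wanted to avoid the contradiction framing, the same computation shows the contrapositive directly: if some such pair had slack $\epsilon$, one could route an extra $\epsilon$ of $S$'s color to it, certifying that $S$ is not tight.
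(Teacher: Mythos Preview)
Your proof is correct and follows essentially the same approach as the paper: both argue by contradiction, pick a non-tight pair $(i^\ast,j^\ast)$ cut by $S$, set $\epsilon = \pi_{i^\ast j^\ast} - \y_{i^\ast j^\ast}$, and route the extra $\epsilon$ of $S$'s coloring to that pair to exhibit a valid \staticcoloring~with $\ys$ increased, contradicting Definition~\ref{def:set-tightness}. Your version is slightly more explicit in verifying the bookkeeping conditions, but the argument is the same.
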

\begin{proof}
    Consider an arbitrary \dynamiccoloring~of the given \validstaticcoloring.
    Using contradiction, assume there is a pair $(i, j)$ such that $S \odot (i, j)$, and this pair is not tight.
    Let $\epsilon = \pij - \yij$. 
    If we increase $\ys$, $\ysij$, and $\yij$ by $\epsilon$, it results in a new \staticcoloring~and a new \dynamiccoloring. 
    In the new \dynamiccoloring, since for every set $S'$ we have $\sum_{(i', j'): S' \odot (i', j')} \y_{S'i'j'} = \y_{S'}$, and for every pair $(i', j')$ we have $\y_{i'j'} = \sum_{S': S' \odot (i', j')} \y_{S'i'j'} \le \pi_{i'j'}$, based on Definition \ref{def:valid-static-coloring}, increasing $\ys$ by $\epsilon$ results in a valid \staticcoloring.
    According to Definition \ref{def:set-tightness}, this contradicts the tightness of $S$. 
    Therefore, we can conclude that all pairs $(i, j)$ for which $S \odot (i, j)$ holds are tight.
\end{proof}

However, it is possible for every pair $(i, j)$ satisfying $S \odot (i, j)$ to be tight in a \dynamiccoloring, while the set $S$ itself is not tight.
We will describe the process of determining whether a set is tight in Algorithm \ref{alg:check_set_tight}.

So far, we have explained several key properties and concepts related to \dynamiccoloring.
Now, let us explore how we can obtain a \dynamiccoloring~from a given \validstaticcoloring.
To accomplish this, we introduce the concept of $\setpairgraph$, which represents a graph associated with each \staticcoloring.
By applying the max-flow algorithm to this graph, we can determine a corresponding \dynamiccoloring.
The definition of $\setpairgraph$ is provided in Definition \ref{def:setpairgraph}, and Figure \ref{fig:set-pair-graph} illustrates this graph.

\begin{definition}[\setpairgraph]
\label{def:setpairgraph}
    Given a graph $\G=(\V, \E, \cc)$ with edge costs $\cc: \E \rightarrow \mathbb{R}_{\ge 0}$, and penalties $\pi : \V \times \V \rightarrow \mathbb{R}_{\ge 0}$, as well as an instance of \staticcoloring~represented by $\y : 2^{\V} \rightarrow \mathbb{R}_{\ge 0}$, we define a directed graph $\GF$ initially consisting of two vertices $\source$ and $\sink$.
    For every set $S \subset V$ such that either $\ys > 0$ or $S \in \activesets$, we add a vertex to $\GF$ and a directed edge from $\source$ to $S$ with a capacity of $\ys$.
    Additionally, for each pair $(i, j) \in \V \times \V$, we add a vertex to $\GF$ and a directed edge from $(i, j)$ to $\sink$ with a capacity of $\pij$.
    Finally, we add a directed edge from each set $S$ to each pair $(i, j)$ such that $S \odot (i, j)$, with infinite capacity.
    We refer to the graph $\GF$ as $\setpairgraph(\G, \pi, \y)$.
\end{definition}

\begin{figure}
	\centering
\begin{tikzpicture}[scale=0.15]
\def\dx{22}
\def\dy{4}
\def\r{0.7}
\foreach \nd/\x/\y/\tx in {0/0/0/source, 1/1/4/, 2/1/2/, 3/1/0/S, 4/1/-2/, 5/1/-4/, 6/2/3/, 7/2/1/, 8/2/-1/(i\comma j), 9/2/-3/, 10/3/0/sink} {
    \draw[fill=black] (\x*\dx, \y*\dy) circle (\r);
    \node[above] at (\x*\dx, \y*\dy+1) {$\tx$};
    \node (\nd) at (\x*\dx, \y*\dy) {};
};
\draw[->] (0) -- (3) node[midway, below,sloped] {$y_{S}$};
\draw[->] (3) -- (8) node[midway, below,sloped] {$\infty$} node[midway,above,sloped] {[if $S \odot (i, j)$]};
\draw[->] (8) -- (10) node[midway, below,sloped] {$\pij$};
\end{tikzpicture}
\caption{\setpairgraph}
\label{fig:set-pair-graph}
\end{figure}

Now, we compute the maximum flow from $\source$ to $\sink$ in $\setpairgraph(\G, \pi, \y)$.
We assign the value of $\ysij$ to the amount of flow from $S$ to $(i, j)$, representing the allocation of coloring from set $S$ in \staticcoloring~to pair $(i, j)$ in \dynamiccoloring. 
Similarly, we set $\yij$ equal to the amount of flow from $(i, j)$ to $\sink$, indicating the duration of coloring for pair $(i, j)$ in \dynamiccoloring.
In Lemma \ref{lm:valid_dynamic_coloring}, we show that if the maximum flow equals $\sum_{S \subset \V} \ys$, the \staticcoloring~is valid, and the assignment of $\ysij$ and $\yij$ satisfies all requirements.

\begin{lemma}
\label{lm:valid_dynamic_coloring}
    For a given graph $\G=(\V, \E, \cc)$, penalties $\pi : \V \times \V \rightarrow \mathbb{R}_{\ge 0}$, and an instance of \staticcoloring~represented by  $\y : 2^{\V} \rightarrow \mathbb{R}_{\ge 0}$, let $(\mf, \MC, f) = \mincut(\setpairgraph(\G, \pi, \y), \source, \sink)$,
    the \staticcoloring~is valid if and only if $\mf = \sum_{S \subset \V} \ys$.
\end{lemma}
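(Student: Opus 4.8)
The plan is to exhibit an explicit correspondence between dynamic colorings of the given static coloring and feasible $\source$--$\sink$ flows of value $\sum_{S \subset \V} \ys$ in $\GF = \setpairgraph(\G, \pi, \y)$, and then use the fact that $\sum_{S \subset \V} \ys$ is an \emph{a priori} upper bound on the maximum-flow value. Indeed, the edges leaving $\source$ are exactly the edges $(\source, S)$ of capacity $\ys$ over all $S$ with $\ys > 0$ or $S \in \activesets$, so $\{\source\}$ versus the rest is a cut of capacity $\sum_{S \subset \V} \ys$; hence $\mf \le \sum_{S \subset \V} \ys$ always, and the content of the lemma is the equivalence ``the static coloring is valid $\iff$ this bound is attained.''

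For the forward direction, suppose the static coloring is valid. By Definition \ref{def:valid-static-coloring} there are values $\ysij \ge 0$ with $\sum_{(i,j):\, S \odot (i,j)} \ysij = \ys$ for every $S$, and $\yij = \sum_{S:\, S \odot (i,j)} \ysij \le \pij$ for every pair. I would define a flow on $\GF$ by sending $\ys$ on each $(\source, S)$, $\ysij$ on each $(S, (i,j))$ with $S \odot (i,j)$, and $\yij$ on each $((i,j), \sink)$. Capacity constraints hold: the $(\source,S)$ edges are saturated, the $(S,(i,j))$ edges have infinite capacity, and $\yij \le \pij$ is exactly the pair constraint. Conservation at a set vertex $S$ holds since its inflow $\ys$ equals its outflow $\sum_{(i,j):\, S\odot(i,j)} \ysij$, and conservation at a pair vertex $(i,j)$ holds since its inflow $\sum_{S:\, S \odot (i,j)} \ysij$ equals its outflow $\yij$. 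This feasible flow has value $\sum_{S \subset \V} \ys$, so $\mf = \sum_{S \subset \V} \ys$.

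For the converse, suppose $\mf = \sum_{S \subset \V} \ys$ and let $\f$ be a maximum flow. Since the total capacity out of $\source$ is $\sum_{S \subset \V} \ys$ and is attained, every edge $(\source, S)$ is saturated, carrying exactly $\ys$. Setting $\ysij := \f(S, (i,j))$ for $S \odot (i,j)$ and $\yij := \f((i,j), \sink)$, flow conservation at $S$ gives $\sum_{(i,j):\, S\odot(i,j)} \ysij = \ys$, conservation at $(i,j)$ gives $\sum_{S:\, S\odot(i,j)} \ysij = \yij$, and the capacity on $((i,j),\sink)$ gives $\yij \le \pij$; sets $S$ with $\ys = 0$ that are not vertices of $\GF$ are handled trivially by taking their $\ysij = 0$. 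Thus these values form a dynamic coloring, so the static coloring is valid.

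This argument is essentially a bookkeeping translation between flows and fractional assignments, so I do not anticipate a genuine obstacle; the only points needing care are (i) confirming that precisely the sets relevant to Definition \ref{def:valid-static-coloring} --- those with $\ys > 0$, together with the currently active sets --- are exactly the set-vertices of $\GF$, so that nothing is lost in either direction, and (ii) being explicit that $\mf \le \sum_{S \subset \V} \ys$ always, which is what makes ``$\mf = \sum_{S \subset \V} \ys$'' the correct tightness condition rather than an inequality. The minimal min-cut component $\MC$ returned by $\mincut$ plays no role in this lemma and can be ignored here.
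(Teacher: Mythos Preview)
Your proof is correct and follows essentially the same approach as the paper: both directions are established by translating between a dynamic coloring $(\ysij,\yij)$ and a feasible $\source$--$\sink$ flow in $\setpairgraph$, using flow conservation for the equalities $\sum_{(i,j):S\odot(i,j)}\ysij=\ys$ and $\sum_{S:S\odot(i,j)}\ysij=\yij$, and the sink-edge capacities for $\yij\le\pij$. Your explicit mention of the source-cut upper bound $\mf\le\sum_S\ys$ and the handling of sets with $\ys=0$ are minor elaborations on the same argument.
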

\begin{proof}
    Assume that $\mf = \sum_{S \subset \V} \ys$. Since the sum of the capacities of the outgoing edges from $\source$ is equal to the maximum flow, the amount of flow passing through $S$ is $\ys$.
    Hence, given that the total flow coming out from $S$ is equal to $\sum_{(i, j) \in \V \times \V} \ysij$, we have $\ys = \sum_{(i, j) \in \V \times \V} \ysij$.
    It should be noted that $\ysij > 0$ only if $S \odot (i, j)$, as we only have a directed edge from $S$ to $(i, j)$ in that case.

    Furthermore, since the capacity of the edge from $(i, j)$ to $\sink$ is $\pij$, the sum of incoming flow to $(i, j)$ is at most $\pij$.
    Thus, $\yij = \sum_{S \subset \V: S \odot (i,j)} \ysij \le \pij$.

    Now, assume that the given \staticcoloring~is valid. 
    Consider its corresponding $\setpairgraph$ and \dynamiccoloring.
    Let the amount of flow from $\source$ to $S$ be $\ys$.
    Let the amount of flow in the edge between $S$ and $(i, j)$ be $\ysij$, representing the assignment duration in the \dynamiccoloring.
    Similarly, let the amount of flow in the edge between $(i, j)$ and $\sink$ be $\yij$, representing the duration in the \dynamiccoloring.
    According to Definition \ref{def:valid-static-coloring}, in a \validstaticcoloring, the following conditions hold: $\sum_{(i, j): S \odot (i, j)} \ysij = \ys$ and $\yij = \sum_{S: S\odot (i, j)} \ysij \le \pij$. 
    Therefore, in the \setpairgraph, the assignment of flow satisfies the edge capacities and the equality of incoming and outgoing flows for every vertex except $\source$ and $\sink$.
    Furthermore, since we fulfill all outgoing edges from $\source$, the maximum flow is $\sum_{S \subset \V} \ys$.
\end{proof}

Let us define $\SC$ as the cut in $\setpairgraph$ that separates $\source$ from other vertices.
Since the sum of the edges in $\SC$ is $\sum_{S \subset \V} \ys$, the following corollary can easily be concluded from Lemma \ref{lm:valid_dynamic_coloring}.

\begin{corollary}
\label{col:valid_dynamic_coloring}
    For a given graph $\G=(\V, \E, \cc)$, penalties $\pi : \V \times \V \rightarrow \mathbb{R}_{\ge 0}$, and an instance of \staticcoloring~represented by  $\y : 2^{\V} \rightarrow \mathbb{R}_{\ge 0}$,
    the \staticcoloring~is valid if and only if $\SC$ is a minimum cut between $\source$ and $\sink$ in $\setpairgraph(\G, \pi, \y)$.
\end{corollary}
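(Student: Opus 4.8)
The plan is to deduce this directly from Lemma~\ref{lm:valid_dynamic_coloring} together with the max-flow min-cut theorem, so the argument should only take a few lines. First I would observe that the cut $\SC$ severs exactly the edges leaving $\source$ in $\setpairgraph(\G,\pi,\y)$, namely the arc from $\source$ to $S$ of capacity $\ys$ for each set $S$ that has a vertex in the graph; hence the capacity of $\SC$ is $\sum_{S \subset \V} \ys$ (sets that do not appear contribute $\ys = 0$, so the sum over all $S \subset \V$ is the same). In particular $\SC$ is always a valid $\source$--$\sink$ cut, so the minimum $\source$--$\sink$ cut capacity is at most $\sum_{S \subset \V} \ys$.

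Next I would invoke max-flow min-cut: the maximum flow value $\mf$ in $\setpairgraph(\G,\pi,\y)$ equals the minimum $\source$--$\sink$ cut capacity. Combined with the previous observation, $\mf \le \sum_{S \subset \V} \ys$ in all cases, and $\mf = \sum_{S \subset \V} \ys$ holds if and only if the minimum cut capacity equals $\sum_{S \subset \V} \ys$; since $\SC$ already realizes this value, this is in turn equivalent to $\SC$ being a minimum cut between $\source$ and $\sink$.

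Finally I would chain this with Lemma~\ref{lm:valid_dynamic_coloring}, which asserts that the \staticcoloring~is valid if and only if $\mf = \sum_{S \subset \V} \ys$. Stringing the equivalences together gives: the \staticcoloring~is valid $\iff \mf = \sum_{S \subset \V} \ys \iff \SC$ is a minimum cut between $\source$ and $\sink$ in $\setpairgraph(\G,\pi,\y)$, which is exactly the stated corollary.

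There is essentially no genuine obstacle here; the corollary is just a restatement of Lemma~\ref{lm:valid_dynamic_coloring} in cut language. The only points requiring care are (i) correctly computing the capacity of $\SC$ as $\sum_{S \subset \V} \ys$ from Definition~\ref{def:setpairgraph}, and (ii) keeping in mind the convention fixed in the preliminaries that ``minimum cut'' always means a minimum $\source$--$\sink$ cut, so that the phrase ``$\SC$ is a minimum cut'' is unambiguous.
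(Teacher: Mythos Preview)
Your proposal is correct and matches the paper's own reasoning: the paper simply notes that the capacity of $\SC$ equals $\sum_{S \subset \V} \ys$ and says the corollary follows immediately from Lemma~\ref{lm:valid_dynamic_coloring}. Your write-up just makes explicit the max-flow min-cut step that the paper leaves implicit.
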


To analyze the size of the \setpairgraph~and the complexity of running the max-flow algorithm on it, we can refer to the following lemma.

\begin{lemma}
\label{lm:size_of_setpairgraph}
    At any point during the algorithm, the size of the \setpairgraph~remains polynomial.
\end{lemma}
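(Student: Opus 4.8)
\textbf{Proof proposal for Lemma \ref{lm:size_of_setpairgraph}.}
The plan is to bound the number of vertices and edges of $\setpairgraph$ separately, in each case reducing to a quantity we already know is polynomial. By Definition \ref{def:setpairgraph}, the vertex set of $\GF$ consists of $\source$, $\sink$, one vertex for each set $S \subset \V$ with $\ys > 0$ or $S \in \activesets$, and one vertex for each pair $(i,j) \in \V \times \V$. The pair vertices number exactly $\binom{n}{2}$, which is polynomial in $n = |\V|$.

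For the set vertices, I would first observe that a set $S$ can have $\ys > 0$ only if it has been active at some point: in Algorithm \ref{alg:pcsf3}, the value $\ys$ is only ever increased at Line \ref{line:grow_active_sets}, and that loop ranges over $S \in \activesets$. Hence every set contributing a vertex to $\GF$ (whether because $\ys > 0$ or because $S \in \activesets$ currently) has been active at some moment during the execution. By Lemma \ref{lemma:linear-active}, the number of such sets is at most $2n-1$. Therefore $\GF$ has at most $2n-1 + \binom{n}{2} + 2$ vertices.

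For the edges, I would just enumerate the three types in Definition \ref{def:setpairgraph}: there are at most $2n-1$ edges from $\source$ to set vertices, exactly $\binom{n}{2}$ edges from pair vertices to $\sink$, and at most $(2n-1)\binom{n}{2}$ edges from set vertices to pair vertices (one for each pair $S, (i,j)$ with $S \odot (i,j)$). Summing, the number of edges is $O(n^3)$, hence polynomial. Combining the vertex and edge counts gives the claim.

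I do not anticipate a genuine obstacle here; the only point that needs a brief justification rather than being immediate is the implication ``$\ys > 0 \Rightarrow S$ was active,'' which follows directly from inspecting where $\y$ is modified in Algorithm \ref{alg:pcsf3}, and then the whole argument rests on Lemma \ref{lemma:linear-active}. One could optionally remark that this bound also makes the max-flow computation in the subroutines run in polynomial time, tying into Lemma \ref{lm:pcsfthree_polynomial}.
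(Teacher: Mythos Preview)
Your proposal is correct and follows essentially the same approach as the paper: both reduce the set-vertex count to Lemma \ref{lemma:linear-active} via the observation that $\ys>0$ implies $S$ was active, and count the remaining vertices directly. Your version is simply more explicit, also bounding the edge count, whereas the paper only counts vertices and leaves the edge bound implicit in the word ``size.''
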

\begin{proof}
In the \setpairgraph, vertices are assigned to sets that are either active or have $\ys > 0$. This implies that for each set that is active at least once, there is at most one corresponding vertex in the graph. According to Lemma \ref{lemma:linear-active}, the number of such vertices is linear.

In addition to the active set vertices, the \setpairgraph~also includes vertices $\source$, $\sink$, and pairs $(i, j)$. The total number of such vertices is $2 + n^2$. Thus, the overall size of the graph is polynomial.
\end{proof}

Now that we understand how to find a \dynamiccoloring~given a \staticcoloring~using the max-flow algorithm, we can use this approach to develop the functions \finddeltap, \checksetistight, and \reducetightpairs.

\paragraph{Finding the maximum value for $\Deltap$.}
In $\finddeltap$, our goal is to determine the maximum value of $\Deltap$ such that if we continue coloring with active sets for an additional duration of $\Deltap$ in the \staticcoloring, it remains a \validstaticcoloring.
The intuition behind this algorithm is to start with an initial upper bound for $\Deltap$ and iteratively refine it until we obtain a \validstaticcoloring. 
This process involves adjusting the parameters and conditions of the coloring to gradually tighten the upper bound.
Algorithm \ref{alg:finding_deltap} presents the pseudocode for $\finddeltap$.
The proof of the following lemma illustrates how the iterations of this algorithm progress toward the correct value of $\Deltap$.

\begin{lemma}
\label{lemma:finddeltap-max}
In a \validstaticcoloring, the maximum possible duration to continue the coloring process while ensuring the validity of the \staticcoloring~is $\Deltap = \finddeltap(\G, \pi, \y, \activesets)$.
\end{lemma}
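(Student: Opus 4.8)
The plan is to reduce the lemma to a single combinatorial characterization of when the ``inflated'' coloring is valid, and then to read $\finddeltap$ (Algorithm~\ref{alg:finding_deltap}) as a cutting‑plane loop that provably converges to the threshold given by that characterization. Concretely, for $\Delta\ge0$ let $\y^{\Delta}$ be the static coloring with $\y^{\Delta}_S=\ys+\Delta$ for $S\in\activesets$ and $\y^{\Delta}_S=\ys$ otherwise. By Corollary~\ref{col:valid_dynamic_coloring}, $\y^{\Delta}$ is a \validstaticcoloring~iff $\SC$ is a minimum $\source$–$\sink$ cut of $\setpairgraph(\G,\pi,\y^{\Delta})$. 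Since every set‑to‑pair edge of the \setpairgraph~has infinite capacity, every finite $\source$–$\sink$ cut is determined by the collection $T$ of set‑vertices it keeps on the $\source$ side: all pairs in $N(T):=\{(i,j):S\odot(i,j)\text{ for some }S\in T\}$ must also lie on the $\source$ side, and the cut value is $\sum_{S\notin T}\y^{\Delta}_S+\sum_{(i,j)\in N(T)}\pij$. Comparing with the value $\sum_S\y^{\Delta}_S$ of $\SC$, validity of $\y^{\Delta}$ is equivalent to
\[
\sum_{S\in T}\ys+|T\cap\activesets|\,\Delta\ \le\ \sum_{(i,j)\in N(T)}\pij\qquad\text{for every }T .
\]
Validity of the given coloring is the $\Delta=0$ instance of this, so $h(\Delta):=\min_T\Bigl(\sum_{(i,j)\in N(T)}\pij-\sum_{S\in T}\ys-|T\cap\activesets|\,\Delta\Bigr)$ satisfies $h(0)\ge0$; being a minimum of affine functions it is concave, piecewise linear, and non‑increasing, and taking $T=\activesets\neq\emptyset$ forces $h\to-\infty$. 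Hence the valid $\Delta$ form a closed interval $[0,\Delta^{\ast}]$ with
\[
\Delta^{\ast}=\min_{T:\,T\cap\activesets\neq\emptyset}\ \frac{\sum_{(i,j)\in N(T)}\pij-\sum_{S\in T}\ys}{|T\cap\activesets|},
\]
so the maximum feasible duration is exactly $\Delta^{\ast}$ and it is attained; it then remains to show that $\finddeltap$ returns $\Delta^{\ast}$.

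Next I would follow the iterate of Algorithm~\ref{alg:finding_deltap}: it starts from a finite upper bound on $\Deltap$, and while $\y^{\Deltap}$ is invalid it inflates the $\source$‑side capacities, runs $\mincut$, reads the set $T$ of set‑vertices on the $\source$ side of the returned minimal min‑cut (this $T$ violates the displayed inequality at the current $\Deltap$; necessarily $T\cap\activesets\neq\emptyset$, since the $\Delta$‑independent part of $T$'s inequality already holds by validity of the initial coloring), and replaces $\Deltap$ by $\bigl(\sum_{(i,j)\in N(T)}\pij-\sum_{S\in T}\ys\bigr)/|T\cap\activesets|$, i.e.\ by the value at which $T$'s constraint becomes tight. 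I would prove by induction the invariant $\Deltap\ge\Delta^{\ast}$: it holds for the initial bound, and each update sets $\Deltap$ to one of the terms in the $\min$ defining $\Delta^{\ast}$; moreover the update is a strict decrease (the chosen $T$ was violated) and stays $\ge0$ (the $\Delta=0$ instance gives $\sum_{S\in T}\ys\le\sum_{(i,j)\in N(T)}\pij$). When the loop exits, $\mincut$ certifies $\y^{\Deltap}$ is valid (Lemma~\ref{lm:valid_dynamic_coloring}), so $\Deltap\le\Delta^{\ast}$; combined with the invariant this yields $\Deltap=\Delta^{\ast}$, which is the claim.

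The main obstacle is termination — that the loop stops at all (so the returned value is well defined), and, for the complexity statements elsewhere in the paper, that it stops after polynomially many $\mincut$ calls. That no set $T$ is selected twice is easy: once selected at some iterate its affine constraint is tight there, and since its slope $-|T\cap\activesets|$ is negative it is strictly satisfied at every strictly smaller iterate, hence it is never the violated witness again. To upgrade ``no repetition'' to a polynomial bound I would invoke the \emph{minimal} min‑cut tie‑break of Definition~\ref{def:max-flow} together with a parametric‑min‑cut / uncrossing argument: because $\Deltap$ only decreases, which only decreases the $\source$‑out capacities (those of the active sets), I expect the $\source$‑side set‑vertex collections $T$ of the successive minimal min‑cuts to be nested and strictly shrinking, so each iteration effectively ``freezes'' at least one set‑vertex; by Lemma~\ref{lemma:linear-active} only $O(n)$ set‑vertices are ever in play, bounding the number of iterations. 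Making this monotonicity precise — in particular ruling out oscillation of the minimal min‑cut as $\Deltap$ decreases — is the delicate point of the proof; the rest is bookkeeping around the characterization above.
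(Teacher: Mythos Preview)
Your proof is correct for the lemma as stated: the min–ratio characterization of $\Delta^{\ast}$ is right, your verification that the algorithm's update $\Deltap\gets\Deltap-(|\activesets|\Deltap+\sum_S\ys-\mf)/k$ is exactly ``set $\Deltap$ to the tight value of the selected $T$'' is a clean rewriting, the invariant $\Deltap\ge\Delta^{\ast}$ is maintained, and the non-repetition argument gives finite termination, which together with the exit condition yields $\Deltap=\Delta^{\ast}$.

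Where you diverge from the paper is in the termination argument, and the paper's route is both simpler and stronger. Instead of nestedness or parametric min-cut machinery, the paper tracks $k=|T\cap\activesets|$ for the minimal min-cut and shows it \emph{strictly decreases} each iteration, giving at most $|\activesets|\le n$ iterations and hence Corollary~\ref{col:finddeltap-polynomial} for free. The key step: by the choice of $\epsilon^{\ast}$, after the update the old cut $\MC$ and $\SC$ have equal weight; if the new minimal min-cut had $k'\ge k$ active sets on the $\source$ side, then (by the ``minimal'' tie-break, which---via submodularity---does minimize $k$ among min-cuts) \emph{every} min-cut at the new $\Deltap$ has $\ge k$ active sets on the $\source$ side; each such cut contained $\le|\activesets|-k$ of the decreased edges and so lost at most $(|\activesets|-k)\epsilon^{\ast}$ in weight, hence the old $\MC$ (which lost exactly that) is still a min-cut, hence so is $\SC$, hence the minimal min-cut is $\SC$ itself---contradiction. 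Your nestedness conjecture is true (it is the standard Gallo--Grigoriadis--Tarjan monotonicity for source-parametric capacities), but you do not need it; the paper's argument is two paragraphs and uses nothing beyond what you already set up.
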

\begin{proof}
Consider the $\setpairgraph$ of the given \validstaticcoloring. If increasing the capacity of edges from $\source$ to $S$ for every set $S \in \activesets$ by $\Deltap$ results in an increase in the min-cut by $|\activesets| \cdot \Deltap$, then the resulting \staticcoloring~remains valid since $\SC$ remains a min-cut. This is based on Corollary \ref{col:valid_dynamic_coloring}.
Thus, our goal is to find the maximum value for $\Deltap$ that satisfies this condition.

First, in Line \ref{line:set_upperbound_deltap}, we initialize $\Deltap$ with an upper-bound value of $(\sum_{ij}\pij - \sum_{S}\ys)/|\activesets|$.
This value serves as an upper-bound because the increase in min-cut cannot exceed $( \sum_{ij}\pij - \sum_{S}\ys)$, as determined by the cut that separates $\sink$ from the other vertices.
Next, we update the capacity of edges from $\source$ to the active sets in $\setpairgraph$ by the value of $\Deltap$, and then calculate the minimal min-cut $\MC$.
If $\MC$ separates $\source$ from the other vertices, it indicates that the new \staticcoloring~is valid according to Corollary \ref{col:valid_dynamic_coloring}. 
At this point, the function terminates and returns the current value of $\Deltap$ in Line \ref{line:return_deltap}.

Otherwise, if $\SC$ is not a min-cut after updating the edges, we want to prove that $\MC$ has at least one active set in the side of $\source$.
Let's assume, for contradiction, that $\MC$ does not have any active sets on the side of $\source$. 
According to Corollary \ref{col:valid_dynamic_coloring}, prior to updating the edges, $\SC$ was a min-cut since we had a \validstaticcoloring. 
Thus, the weight of $\MC$ was at least equivalent to the weight of $\SC$ before the edges were modified.
Given that $\MC$ and $\SC$ include all the edges connecting $\source$ to the active sets, adding $\Deltap$ to active sets leads to an increase in the weight of both $\MC$ and $\SC$ by $|\activesets| \cdot \Deltap$.
Consequently, the weight of $\MC$ remains at least as large as the weight of $\SC$. 
However, if $\SC$ is not a minimum cut after updating the edges, it implies that $\MC$ cannot be a minimum cut either, which contradicts the definition of $\MC$.
Therefore, we can conclude that $\MC$ must have at least one active set on the same side as $\source$.

Let $k \ge 1$ represent the number of active sets on the same side as $\source$ in $\MC$.
Referring to Definition~\ref{def:max-flow}, it is evident that $\MC$ minimizes $k$. 
If we decrease $\Deltap$ by $\epsilon$, it deducts $|\activesets| \cdot \epsilon$ from the weight of $\SC$ and $(|\activesets| - k) \cdot \epsilon$ from weight of $\MC$.
Given that the weight of $\SC$ is $|\activesets| \cdot \Deltap + \sum_{S \subset V} \ys$, and the weight of $\MC$ is $\mf$, in order to establish $\SC$ as a minimum cut, we want to find the minimum value of $\epsilon$ satisfying:
\begin{align*}
    \mf - (|\activesets| - k) \cdot \epsilon 
    &\ge 
    |\activesets| \cdot \Deltap + \sum_{S \subset V} \ys - |\activesets| \cdot \epsilon 
    \\
    \mf + k \epsilon 
    &\ge 
    |\activesets| \cdot \Deltap + \sum_{S \subset V} \ys
    \\
    k\epsilon 
    &\ge
    |\activesets| \cdot \Deltap + \sum_{S \subset V} \ys - \mf
    \\ 
    \epsilon 
    &\ge
    \frac{|\activesets| \cdot \Deltap + \sum_{S \subset V} \ys - \mf}{k}
\end{align*}

Now, let us define $\epsilon^* = (|\activesets| \cdot \Deltap + \sum_{S \subset V} \ys - \mf)/k$. 
When we decrease $\Deltap$ by $\epsilon^*$, we effectively tighten the upper bound on $\Deltap$. 
This is crucial because reducing $\Deltap$ by a smaller value would make it impossible for $\SC$ to become a min-cut. 
Such a violation would contradict the validity of the \staticcoloring, as indicated by Corollary \ref{col:valid_dynamic_coloring}.
After updating $\Deltap$ to its new value, if $\SC$ indeed becomes a minimum cut, the procedure is finished. However, if the new minimal min-cut still contains active sets on the side of $\source$, their number must be less than $k$.

To prove this by contradiction, let's assume that in the new minimal min-cut, the number of active sets on the side of $\source$ is greater than or equal to $k$.
Due to the minimality of the new minimum cut, it can be observed that all minimum cuts for the updated $\Deltap$ have at least $k$ active sets on the $\source$ side. 
In other words, these minimum cuts have at most $|\activesets| - k$ active sets on the other side. 
Consequently, the weight of these minimum cuts is reduced by at most $(|\activesets| - k) \cdot \epsilon^*$.
Since we have specifically reduced $(|\activesets| - k) \cdot \epsilon^*$ from $\MC$, it remains a minimum cut. 
Moreover, after decreasing $\epsilon^*$ from $\Deltap$ based on how we determine $\epsilon^*$, $\MC$ and $\SC$ have the same weight. 
This implies that $\SC$ is also a valid minimum cut and should be the minimal min-cut.
This contradiction suggests that after the reduction, the number of vertices on the $\source$ side has indeed decreased.

Finally, we repeat the same procedure until $\SC$ becomes a min-cut. Given that there are at most $n$ active sets in \activesets, and each iteration reduces the number of active sets on the side of $\source$ in the minimal min-cut by at least one, after a linear number of iterations, all active sets will be moved to the other side, and the desired value of $\Deltap$ will be determined.
Since each time we have demonstrated that the value of $\Deltap$ serves as an upper bound, it represents the maximum possible value that allows for a \validstaticcoloring.
\end{proof}

\begin{lemma}
\label{lemma:finddeltap}
    In a \validstaticcoloring, the \staticcoloring~remains valid if we continue the coloring process by at most $\Deltap = \finddeltap(\G, \pi, \y, \activesets)$.
\end{lemma}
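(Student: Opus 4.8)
The statement is a monotonicity corollary of Lemma \ref{lemma:finddeltap-max}: that lemma already identifies $\Deltap$ as the \emph{largest} amount by which we may extend the \staticcoloring~while keeping it valid, so all that remains is to show that extending by any $\delta$ with $0 \le \delta \le \Deltap$ is also valid. I would prove this via the min-cut characterization of validity (Corollary \ref{col:valid_dynamic_coloring}, or equivalently the max-flow characterization in Lemma \ref{lm:valid_dynamic_coloring}) together with the fact that the \setpairgraph~depends affinely on the extension amount.

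Fix $\delta \in [0, \Deltap]$ and let $\GF_{\delta}$ be $\setpairgraph(\G, \pi, \y')$, where $\y'$ agrees with $\y$ except that $\y'_S = \ys + \delta$ for every $S \in \activesets$; thus $\GF_{\delta}$ is obtained from $\setpairgraph(\G, \pi, \y)$ by increasing the capacity of the source edge into each active set by $\delta$. The first step is to pin down the shape of a finite $\source$--$\sink$ cut of $\GF_{\delta}$: since every set vertex $S$ has infinite-capacity edges to all pairs $(i,j)$ with $S \odot (i,j)$, any cut of finite capacity with source side $A$ must contain all pairs adjacent to any $S \in A$, and its capacity is $\sum_{S \notin A} \y'_S + \sum_{(i,j) \in A} \pij$. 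If such a cut $C$ has exactly $k$ of the active sets on the $\source$ side, then passing from $\GF_{\Deltap}$ to $\GF_{\delta}$ lowers its capacity by precisely $(|\activesets| - k)(\Deltap - \delta)$, whereas $\SC$ (for which $k = 0$) loses $|\activesets|(\Deltap - \delta)$. Lemma \ref{lemma:finddeltap-max} tells us $\SC$ is a minimum cut in $\GF_{\Deltap}$, so for every finite cut $C$ we get $\mathrm{cap}_{\delta}(C) - \mathrm{cap}_{\delta}(\SC) = [\mathrm{cap}_{\Deltap}(C) - \mathrm{cap}_{\Deltap}(\SC)] + k(\Deltap - \delta) \ge 0$, i.e.\ $\SC$ remains a minimum cut in $\GF_{\delta}$. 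Corollary \ref{col:valid_dynamic_coloring} then gives the lemma.

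An equivalent but cut-free route is to set $g(\delta)$ equal to the max-flow value of $\GF_{\delta}$: as a minimum of functions linear in the capacities, $g$ is concave on $[0, \Deltap]$, and it has slope at most $|\activesets|$ there because raising the total source capacity by $|\activesets|(\delta' - \delta)$ can increase the flow by at most that much. By Lemma \ref{lm:valid_dynamic_coloring}, validity of the original \staticcoloring~gives $g(0) = \sum_{S \subset \V} \ys$, and Lemma \ref{lemma:finddeltap-max} gives $g(\Deltap) = \sum_{S \subset \V} \ys + |\activesets| \Deltap$; a concave function pinned at these two endpoints with slope never exceeding $|\activesets|$ must equal $\sum_{S \subset \V} \ys + |\activesets| \delta$ throughout $[0, \Deltap]$ (the chord gives the lower bound, the slope bound the upper), so Lemma \ref{lm:valid_dynamic_coloring} again yields validity. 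I do not expect a genuine obstacle here; the only points deserving care are verifying that finite cuts of the \setpairgraph~are closed under the set-to-pair adjacency (which legitimizes the $k$-counting) and, in the second route, being explicit that max flow is a concave piecewise-linear function of the capacities along the parametrized line. Both are routine, and the cut-based argument in particular just recycles the $\epsilon$/$k$ accounting already used in the proof of Lemma \ref{lemma:finddeltap-max}, so I would present whichever is shorter.
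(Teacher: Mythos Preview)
Your proposal is correct, and your first (cut-based) argument is essentially the paper's own proof: starting from Lemma~\ref{lemma:finddeltap-max}, which makes $\SC$ a min-cut at $\Deltap$, the paper observes that lowering the active-set capacities by $d=\Deltap-\Deltap'$ drops $\SC$ by $|\activesets|\cdot d$ while every other cut drops by at most that amount, so $\SC$ stays minimum and Corollary~\ref{col:valid_dynamic_coloring} finishes. Your $k$-counting is just a more explicit version of the phrase ``other cuts are decreased by at most this value''; the concavity route is a valid alternative but unnecessary here.
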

\begin{proof}
Consider the $\setpairgraph$ of the given \validstaticcoloring.
We want to show that for every value $\Deltap' \le \Deltap$, the coloring remains valid.
According to Lemma \ref{lemma:finddeltap-max}, we know that increasing the duration of the active sets by $\Deltap$ results in a \validstaticcoloring. 
Therefore, by increasing the capacity of edges from $\source$ to $S$ for each set $S \in \activesets$ by $\Deltap$, $\SC$ represents a minimum cut. Decreasing the capacities of these edges by a non-negative value $d = \Deltap - \Deltap'$ results in a decrease in the weight of $\SC$ by $|\activesets|\cdot d$, while other cuts are decreased by at most this value.
Consequently, $\SC$ remains a minimum cut, and as indicated in Corollary \ref{col:valid_dynamic_coloring}, the static coloring remains valid after increasing $\ys$ for active sets by $\Deltap' \leq \Deltap$.
\end{proof}

Based on the proof of Lemma \ref{lemma:finddeltap-max}, it is clear that the while loop in the \finddeltap~function iterates a linear number of times. Furthermore, during each iteration, we make a single call to the \mincut~procedure on the \setpairgraph, which has a polynomial size according to Lemma \ref{lm:size_of_setpairgraph}. Consequently, we can deduce the following corollary.

\begin{corollary}
\label{col:finddeltap-polynomial}
    Throughout the algorithm, each call to the \finddeltap~function executes in polynomial time.
\end{corollary}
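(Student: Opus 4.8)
The plan is to bound the running time of a single invocation of \finddeltap~by combining the iteration count extracted from the proof of Lemma~\ref{lemma:finddeltap-max} with the per-iteration cost, which is dominated by one max-flow computation on a polynomially sized graph.

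First I would recall the structure exposed in the proof of Lemma~\ref{lemma:finddeltap-max}: \finddeltap~maintains a candidate value $\Deltap$, and in each pass of its while loop it (i) increases the capacities of the edges $\source \to S$ for $S \in \activesets$ in $\setpairgraph(\G,\pi,\y)$ by the current $\Deltap$, (ii) computes the minimal min-cut $\MC$ via \mincut, and (iii) either halts, when $\SC$ is a min-cut, or decreases $\Deltap$ by the quantity $\epsilon^{\ast} = (|\activesets|\cdot\Deltap + \sum_{S \subset V}\ys - \mf)/k$. The proof of Lemma~\ref{lemma:finddeltap-max} establishes that the number $k$ of active sets on the $\source$ side of the minimal min-cut strictly decreases with each non-terminating pass; since $|\activesets| \le n$ throughout, the loop performs at most $n+1$ passes. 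Hence it suffices to show that each pass runs in polynomial time.

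Next I would account for the per-pass cost. Constructing, or updating, $\setpairgraph(\G,\pi,\y)$ requires enumerating the $O(n)$ set-vertices (linear by Lemma~\ref{lemma:linear-active}), the $O(n^2)$ pair-vertices, the $O(n^3)$ set-to-pair arcs of infinite capacity, and the $\source$- and $\sink$-incident arcs; by Lemma~\ref{lm:size_of_setpairgraph} this graph has polynomial size, and adjusting the capacities of the $\source$-incident arcs by $\Deltap$ touches only $O(n)$ of them. The single call to \mincut~on this graph runs in time polynomial in its size by any standard max-flow routine, and the remaining bookkeeping---reading off $\MC$, testing whether it coincides with $\SC$, and evaluating $\epsilon^{\ast}$---is a constant number of arithmetic operations over quantities the algorithm already maintains. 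Multiplying, at most $n+1$ passes each of polynomial cost gives a polynomial bound on one call to \finddeltap, as claimed.

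I do not expect a genuine obstacle here: the statement is essentially a composition of facts already proved. The one point needing care is that the iteration bound from Lemma~\ref{lemma:finddeltap-max} relies on the tie-breaking guarantee of Definition~\ref{def:max-flow}, namely that \mincut~returns the min-cut minimizing the number of vertices on the $\source$ side, so the argument that $k$ decreases must invoke exactly that minimality; beyond flagging this, the proof is routine.
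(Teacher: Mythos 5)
Your proposal follows exactly the paper's reasoning: the linear bound on while-loop iterations comes from the strictly decreasing count $k$ established in the proof of Lemma~\ref{lemma:finddeltap-max}, and each iteration is dominated by a single \mincut~call on the polynomially sized \setpairgraph~(Lemma~\ref{lm:size_of_setpairgraph}). Your extra remark about the minimal-min-cut tie-breaking being the load-bearing feature of Definition~\ref{def:max-flow} is a correct and worthwhile observation, but it does not change the substance of the argument.
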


Now, we can prove a significant lemma that demonstrates that our algorithm behaves as expected.

\begin{lemma}
\label{lemma:always-valid}
    Throughout the algorithm, we always maintain a \validstaticcoloring.
\end{lemma}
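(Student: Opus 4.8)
The plan is to prove the statement by induction over the execution of Algorithm~\ref{alg:pcsf3}, maintaining the invariant that the current \staticcoloring, described by the function $\y$, is valid in the sense of Definition~\ref{def:valid-static-coloring}. The key observation is the one recorded right after Definition~\ref{def:valid-static-coloring}: validity is a property of the function $\y$ alone. During the run of \pcsfthree, $\y$ is modified in only two places: at initialization (where $\ys = 0$ for every $S \subset \V$) and on Line~\ref{line:grow_active_sets} (where $\ys$ is increased by $\Deltat$ for each $S \in \activesets$). The merge step at Line~\ref{line:merge_tight_edges_start} and the deactivation step at Line~\ref{line:loop_deactive} change only $\currentsets$ and $\activesets$, not $\y$ (a newly created merged set has implicit duration $0$, so $\y$ as a function $2^{\V}\to\mathbb{R}_{\ge 0}$ is unchanged). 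Hence it suffices to verify validity at initialization and immediately after each execution of Line~\ref{line:grow_active_sets}.

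For the base case, the all-zero \dynamiccoloring~$\ysij \equiv 0$ witnesses validity of the initial \staticcoloring: it satisfies $\sum_{(i,j): S \odot (i,j)} \ysij = 0 = \ys$ for every $S$ and $\yij = 0 \le \pij$ for every pair (equivalently, the \setpairgraph~has only zero-capacity edges out of $\source$, so $\mf = 0 = \sum_{S} \ys$ and Lemma~\ref{lm:valid_dynamic_coloring} applies). For the inductive step, consider an iteration of the \textbf{while} loop and assume the \staticcoloring~is valid at its start. Then $\Deltap$ is computed on Line~\ref{line:assign_delta_p} as $\finddeltap(\G, \pi, \y, \activesets)$ applied to this valid \staticcoloring, and on Line~\ref{line:grow_active_sets} we increase $\ys$ by $\Deltat = \min(\Deltae, \Deltap)$ for exactly the sets $S \in \activesets$ passed to that call, since \activesets~is not modified between Lines~\ref{line:assign_delta_p} and~\ref{line:grow_active_sets}. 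Because the \staticcoloring~is valid, continuing the coloring for duration $0$ keeps it valid, so $\Deltap \ge 0$ by Lemma~\ref{lemma:finddeltap-max}, whence $0 \le \Deltat \le \Deltap$. Lemma~\ref{lemma:finddeltap} then applies verbatim and shows the \staticcoloring~remains valid after Line~\ref{line:grow_active_sets}.

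This completes the induction. There is no real obstacle beyond two points that must be pinned down carefully: first, that validity depends only on $\y$, so that the merge and deactivation steps are inert (they leave $\y$ unchanged up to adjoining merged sets of duration $0$); and second, that the invocation of \finddeltap~on Line~\ref{line:assign_delta_p} uses exactly the active-set collection whose durations are subsequently grown on Line~\ref{line:grow_active_sets}. Once these are observed, the inductive step is a direct application of Lemma~\ref{lemma:finddeltap}.
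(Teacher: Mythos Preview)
Your proof is correct and follows essentially the same approach as the paper's: induction on iterations of the \textbf{while} loop, with the all-zero \dynamiccoloring{} as base case and Lemma~\ref{lemma:finddeltap} applied to $\Deltat \le \Deltap$ for the inductive step. You are simply more explicit than the paper about two points it leaves implicit---that validity depends only on the function $\y$ (so the merge and deactivation steps are inert) and that $\activesets$ is unchanged between Lines~\ref{line:assign_delta_p} and~\ref{line:grow_active_sets}---but the argument is the same.
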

\begin{proof}
    We can establish the validity of the \staticcoloring~throughout the algorithm using induction. Initially, since $\ys = 0$ for all sets $S$, assigning $\ysij = 0$ and $\yij = 0$ results in a \dynamiccoloring, satisfying the conditions of a \validstaticcoloring.

    Assuming that at the beginning of each iteration, we have a \validstaticcoloring~based on the induction hypothesis. 
    Furthermore, during each iteration, we continue the coloring process for a duration of $\min(\Deltap, \Deltae) \le \Deltap$. 
    According to Lemma \ref{lemma:finddeltap}, this coloring preserves the validity of the \staticcoloring.

    Therefore, by induction, we can conclude that throughout the algorithm, we maintain a \validstaticcoloring.
\end{proof}

\begin{algorithm}[ht]
  \caption{Finding the maximum value for $\Deltap$}
  \label{alg:finding_deltap}
  \hspace*{\algorithmicindent} \textbf{Input:} An undirected graph $\G=(\V, \E, \cc)$ with edge costs $\cc: \E \rightarrow \mathbb{R}_{\ge 0}$, penalties $\pi : \V \times \V \rightarrow \mathbb{R}_{\ge 0}$, \\\hspace*{\algorithmicindent}\hspace{12mm} an instance of \staticcoloring~represented by $\y : 2^{\V} \rightarrow \mathbb{R}_{\ge 0}$, and active sets \activesets.  \\
  \hspace*{\algorithmicindent} \textbf{Output:} 
    $\Deltap$, the maximum value that can be added to $\ys$ for $S \in \activesets$ without violating the validity\\\hspace*{\algorithmicindent}\hspace{14.5mm} of the \staticcoloring.
  \begin{algorithmic}[1]
    \Function{\finddeltap}{$\G,\pi, \y, \activesets$}
    \State $\GF \gets \setpairgraph(\G, \pi, \y)$
    \State $\Deltap \gets \left( \sum_{ij}\pij - \sum_{S}\ys \right)/|\activesets|$ \label{line:set_upperbound_deltap}
    \While{true}
        \State Set the capacity of edges from $\source$ to $S \in \activesets$ equals to $\ys + \Deltap$ in graph $\GF$.
        \State $\mfout \gets \mincut(\GF, \source, \sink)$
        \If{$\mf = |\activesets| \cdot \Deltap + \sum_{S \subset V} \ys$}
            \State \Return $\Deltap$ \label{line:return_deltap}
        \EndIf
        \State Let $k$ represent the number of active sets on the same side of the cut $\MC$ as $\source$.
        \State $\Deltap \gets \Deltap - \left(|\activesets| \cdot \Deltap + \sum_{S \subset V} \ys - \mf \right)/k$
    \EndWhile
    \EndFunction
  \end{algorithmic}
\end{algorithm}

\paragraph{Check if a set is tight.}
Here, we demonstrate how to utilize max-flow on $\setpairgraph$ to determine if a set $S$ is tight. 
If it is indeed tight, we proceed to remove it from $\activesets$ in Line \ref{line:deactivate_tight_sets} of $\pcsfthree$.

Checking the tightness of a set is straightforward, as outlined in Definition \ref{def:set-tightness}.
To determine if set $S$ is tight, we increase the capacity of the directed edge from $\source$ to $S$ and check if the flow from $\source$ to $\sink$ exceeds $\sum_{S \subset \V} \ys$.
The pseudocode for this function is provided in Algorithm \ref{alg:check_set_tight}.

\begin{algorithm}[ht]
  \caption{Check if set $S \in \activesets$ is tight}
  \label{alg:check_set_tight}
  \hspace*{\algorithmicindent} \textbf{Input:} An undirected graph $\G=(\V, \E, \cc)$ with edge costs $\cc: \E \rightarrow \mathbb{R}_{\ge 0}$, penalties $\pi : \V \times \V \rightarrow \mathbb{R}_{\ge 0}$, \\\hspace*{\algorithmicindent}\hspace{12mm} an instance of \staticcoloring~represented by $\y : 2^{\V} \rightarrow \mathbb{R}_{\ge 0}$, and a set $S \subset \V$. \\
  \hspace*{\algorithmicindent} \textbf{Output:} \textit{True} if set $S$ is tight, \textit{False} otherwise.
  \begin{algorithmic}[1]
    \Function{\checksetistight}{$\G,\pi, \y$, $S$}
        \State $\GF \gets \setpairgraph(\G, \pi, \y)$
        \State Set the capacity of the edge from $\source$ to $S$ in graph $\GF$ equals to $\ys + 1$.      
        \State $\mfout \gets \mincut(\GF, \source, \sink)$
        \If{$\mf > \sum_{S \subset \V} \ys$}
            \State \Return False
        \Else
            \State \Return True
        \EndIf
    \EndFunction
  \end{algorithmic}
\end{algorithm}

\begin{lemma}
\label{lm:check-tight-polynomial}
    Each call to the \checksetistight~function during the algorithm runs in polynomial time.
\end{lemma}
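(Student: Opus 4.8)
The plan is to observe that \checksetistight~performs only a constant number of operations, each of which runs in polynomial time, so the whole function is polynomial. First I would note that the opening step builds the \setpairgraph; by Lemma~\ref{lm:size_of_setpairgraph} this graph has polynomially many vertices and edges whenever it is constructed during the algorithm, and it can in fact be built in polynomial time: its vertex set is $\source$, $\sink$, the $n^2$ pair vertices $(i,j)$, and one vertex for each set that is active at some point (at most $2n-1$ of these by Lemma~\ref{lemma:linear-active}), while its edges --- the $\source$-to-$S$ edges with capacity $\ys$, the $(i,j)$-to-$\sink$ edges with capacity $\pij$, and the infinite-capacity $S$-to-$(i,j)$ edges for $S \odot (i,j)$ --- are read off directly from these vertices. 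Then I would note that resetting the single capacity of the $\source$-to-$S$ edge to $\ys + 1$ is an $O(1)$ update.

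Next I would invoke Definition~\ref{def:max-flow}: max-flow (and the minimal min-cut) is computable in polynomial time on a graph whose size is polynomial, so the call $\mincut(\GF, \source, \sink)$ takes polynomial time and returns $\mf$. Finally, evaluating the threshold $\sum_{S \subset \V} \ys$ does not require iterating over all exponentially many subsets of $\V$: only the linearly many sets that have been active at some point have $\ys > 0$ (Lemma~\ref{lemma:linear-active}), so this sum can be computed in polynomial time by summing over those sets, and the comparison $\mf > \sum_{S} \ys$ is then $O(1)$. Adding up the costs of these finitely many polynomial-time steps yields the claim.

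I do not expect a genuine obstacle here; the only point worth stating explicitly is that, although the definition of \setpairgraph~and the quantity $\sum_{S \subset \V} \ys$ are written as ranging over all $S \subset \V$, only polynomially many such sets ever contribute --- exactly the content of Lemma~\ref{lm:size_of_setpairgraph}, which itself rests on Lemma~\ref{lemma:linear-active}. Everything else is a routine ``constant number of polynomial-time steps'' argument, and the phrase ``during the algorithm'' in the statement is what licenses the use of these linear bounds on the number of active sets.
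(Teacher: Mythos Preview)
Your proposal is correct and follows essentially the same approach as the paper: invoke Lemma~\ref{lm:size_of_setpairgraph} to bound the size of the \setpairgraph, then observe that the single \mincut~call runs in polynomial time on a polynomial-size graph. Your version is more explicit about the construction cost and the evaluation of $\sum_{S}\ys$, but the core argument is identical.
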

\begin{proof}
    The \checksetistight~function call \mincut~once on the \setpairgraph, whose size remains polynomial throughout the algorithm according to Lemma \ref{lm:size_of_setpairgraph}. 
    Therefore, both the \mincut~and the \checksetistight~function run in polynomial time.
\end{proof}

\paragraph{Reduce the number of tight pairs.}
At the end of $\pcsfthree$, we obtain a final \validstaticcoloring, from which we can derive a corresponding final \dynamiccoloring~which corresponds to a max-flow in $\setpairgraph$.
Next, we present the process of reducing the number of tight pairs in the final \dynamiccoloring, aiming to achieve a \minimaldynamiccoloring.
This step is essential to obtain a 2-approximate solution for PCSF in the next section.

\begin{definition}[Minimal Dynamic Coloring]
\label{def:minimal-dynamic-coloring}
    A \dynamiccoloring~is considered \minimaldynamiccoloring~if there are no pairs $(i, j), (i', j') \in \V \times \V$ and set $S \subset V$ such that pair $(i, j)$ is a tight pair while $(i', j')$ is not a tight pair, $S \odot (i, j)$, $S \odot (i', j')$, and $\ysij > 0$.
\end{definition}

To obtain a \minimaldynamiccoloring, we first check if there exist pairs $(i, j), (i', j')$ and a set $S$ meeting the following criteria: pair $(i, j)$ is tight, pair $(i', j')$ is not tight, $S \odot (i, j)$, $S \odot (i', j')$, and $\ysij > 0$.
If such pairs and set exist, we proceed with the following adjustments. Since $(i', j')$ is not tight, we have $\pi_{i'j'} - \y_{i'j'} > 0$. Additionally, $\ysij > 0$ is assumed.
Therefore, there exists $\epsilon > 0$ such that $\epsilon < \min(\ysij, \pi_{i'j'} - \y_{i'j'})$.
Given that $\epsilon < \ysij \leq \yij$, we can reduce $\ysij$ and $\yij$ by $\epsilon$, while adding $\epsilon$ to $\y_{Si'j'}$ and $\y_{i'j'}$.
Since $\epsilon > 0$, pair $(i, j)$ is no longer tight, and since $\epsilon < \pi_{i'j'} - \y_{i'j'}$ for the previous value of $\y_{i'j'}$, pair $(i', j')$ will not become tight.
It is important to note that the \dynamiccoloring~prior to these changes corresponds to a max-flow in $\setpairgraph$. 
Implementing these adjustments on the flow of edges associated with $\ysij$, $\yij$, $\y_{Si'j'}$, and $\y_{i'j'}$ results in a new max-flow that corresponds to the updated \dynamiccoloring.
This provides an intuition for why the assignment in the new \dynamiccoloring~remains valid.
We illustrate these flow changes in Figure \ref{fig:remove-tight}, and the complete process for achieving a minimal \dynamiccoloring~is described in Algorithm \ref{alg:reduce_tight_pairs}.

\begin{figure}
\centering
\begin{tikzpicture}[scale=0.10]
\def\dx{17}
\def\dy{15}
\def\r{0.7}
\foreach \nd/\x/\y/\tx/\side/\d in {source/0.3/0/source/above/1, s/1/0/S/above/1, v/2/1/(i\comma j)/above/1, vv/2/-1/(i'\comma j')/below/-1, sink/3/0/sink/right/0} {
    \draw[fill=black] (\x*\dx, \y*\dy) circle (\r);
    \node[\side] at (\x*\dx, \y*\dy + \d) {$\tx$};
    \node (\nd) at (\x*\dx, \y*\dy) {};
};
\draw[->] (source) -- (s) node[midway, below] {$y_{S}$};
\draw[->] (s) -- (v) node[midway,sloped,fill=white,above] {\textcolor{red}{$f$}/$\infty$};
\draw[->] (s) -- (vv) node[midway,sloped,fill=white,below] {\textcolor{red}{$f'$}/$\infty$};
\draw[->] (v) -- (sink) node[midway,sloped,fill=white,above] {\textcolor{red}{$\pi_{ij}$}/$\pi_{ij}$};
\draw[->] (vv) -- (sink) node[midway,sloped,fill=white,below] {\textcolor{red}{$f''$}/$\pi_{i'j'}$};

\def\shift{4*\dx}
\foreach \nd/\x/\y/\tx/\side/\d in {source/0.5/0/source/above/1, s/1/0/S/above/1, v/2/1/(i\comma j)/above/1, vv/2/-1/(i'\comma j')/below/-1, sink/3/0/sink/right/0} {
    \draw[fill=black] (\shift+\x*\dx, \y*\dy) circle (\r);
    \node[\side] at (\shift+\x*\dx, \y*\dy + \d) {$\tx$};
    \node (a-\nd) at (\shift+\x*\dx, \y*\dy) {};
};
\draw[->] (a-source) -- (a-s);
\draw[->] (a-s) -- (a-v) node[midway,sloped,fill=white,above] {\textcolor{red}{$f-\epsilon$}};
\draw[->] (a-s) -- (a-vv) node[midway,sloped,fill=white,below] {\textcolor{red}{$f'+\epsilon$}};
\draw[->] (a-v) -- (a-sink) node[midway,sloped,fill=white,above] {\textcolor{red}{$\pi_{ij}-\epsilon$}};
\draw[->] (a-vv) -- (a-sink) node[midway,sloped,fill=white,below] {\textcolor{red}{$f''+\epsilon$}};
\end{tikzpicture}
\caption{By choosing a small positive value $\epsilon < \min(f, \pi_{i'j'}-f'')$, we can remove one tight pair. The \textcolor{red}{red} variables  represent the amounts of flow on each edge, while the black variables represent their capacity.}
\label{fig:remove-tight}
\end{figure}
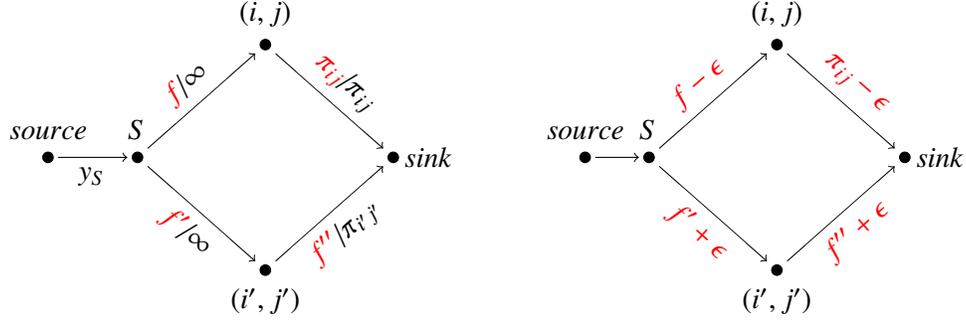

After applying these adjustments, the number of tight pairs is reduced by one.
If there are no tight pairs where their tightness can be removed through this operation, the result is a \minimaldynamiccoloring.
Given that the number of tight pairs is at most $n^2$, and after each operation, the number of tight pairs is reduced by one, after a maximum of $n^2$ iterations in $\reducetightpairs$, the number of tight pairs becomes minimal.

Considering that the number of iterations in the $\reducetightpairs$ function is polynomial and the $\mincut$ operation on the \setpairgraph~in Line \ref{line:reduce-mf} has a polynomial size (Lemma \ref{lm:size_of_setpairgraph}), it can be concluded that the runtime of $\reducetightpairs$ is polynomial.

\begin{corollary}
\label{col:reduce-polynomial}
The \reducetightpairs~function runs in polynomial time.
\end{corollary}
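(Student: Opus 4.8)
The plan is to bound separately the number of iterations of the main loop of \reducetightpairs~and the work done in a single iteration, and then multiply. Both bounds follow from facts already established in the excerpt, so this is essentially a bookkeeping argument; the one place I would be careful is confirming that each iteration genuinely makes progress.

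First I would bound the iteration count. Each pass of the loop applies the $\epsilon$-adjustment described just before the corollary, which is invoked only when there exist a tight pair $(i,j)$, a non-tight pair $(i',j')$, and a set $S$ with $S \odot (i,j)$, $S \odot (i',j')$, and $\ysij > 0$. After decreasing $\ysij$ and $\yij$ by $\epsilon > 0$ and increasing $\y_{Si'j'}$ and $\y_{i'j'}$ by $\epsilon$, pair $(i,j)$ is no longer tight because $\yij$ strictly dropped; the choice $\epsilon < \pi_{i'j'} - \y_{i'j'}$ (with $\y_{i'j'}$ its pre-update value) guarantees $(i',j')$ does not become tight; and no other dynamic-coloring value is touched. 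Hence the set of tight pairs loses $(i,j)$ and gains nothing, so its cardinality strictly decreases. Since $|\V \times \V| \le n^2$, the loop runs at most $n^2$ times before it reaches a state admitting no such triple, i.e.\ a \minimaldynamiccoloring.

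Next I would bound the cost of one iteration, which has three contributions. (i) Maintaining the current \dynamiccoloring~and recomputing which pairs are tight is done via a single \mincut~call on $\setpairgraph(\G,\pi,\y)$, which has polynomial size by Lemma~\ref{lm:size_of_setpairgraph}; exactly as in Lemma~\ref{lm:check-tight-polynomial}, this runs in polynomial time. (ii) Searching for a witnessing triple amounts to scanning the $O(n^2)$ pairs and, for each, the sets $S$ with $\ys > 0$ (equivalently, the sets that were active at some point), checking $S \odot (i,j)$ and $\ysij > 0$; by Lemma~\ref{lemma:linear-active} there are only linearly many such sets, so the scan is polynomial. (iii) The flow/coloring update itself modifies only the four quantities $\ysij,\yij,\y_{Si'j'},\y_{i'j'}$ and takes constant time. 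Multiplying the $n^2$ iterations by this polynomial per-iteration cost yields a polynomial total running time, which is the claim. The only subtlety, as noted, is the progress argument in the iteration bound; everything else is a direct appeal to Lemmas~\ref{lm:size_of_setpairgraph} and~\ref{lemma:linear-active}.
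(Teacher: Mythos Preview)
Your proposal is correct and follows essentially the same approach as the paper: bound the loop by at most $n^2$ iterations via the strict decrease in the number of tight pairs, and observe that the remaining work (the \mincut~call on the polynomial-size \setpairgraph, plus the scan for a witnessing triple) is polynomial. One small note: in Algorithm~\ref{alg:reduce_tight_pairs} the \mincut~is invoked once at Line~\ref{line:reduce-mf} before the loop rather than in every iteration, but this only makes your bound looser, not incorrect.
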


\begin{algorithm}[ht]
  \caption{Reduce the number of tight pairs}
  \label{alg:reduce_tight_pairs}
  \hspace*{\algorithmicindent} \textbf{Input:} An undirected graph $\G=(\V, \E, \cc)$ with edge costs $\cc: \E \rightarrow \mathbb{R}_{\ge 0}$ and penalties $\pi : \V \times \V \rightarrow \mathbb{R}_{\ge 0}$. \\
  \hspace*{\algorithmicindent} \textbf{Output:} The set $\Q$ of tight pairs for which we will pay penalties.
  \begin{algorithmic}[1]
    \Function{\reducetightpairs}{$\G,\pi, \y$}
        \State $\GF \gets \setpairgraph(\G, \pi, \y)$
        \State $\mfout \gets \mincut(\GF, \source, \sink)$
        \label{line:reduce-mf}
        \State Let $\ysij \gets \f(e)$ for each set $S$ and each pair $(i, j)$ that $S \odot (i, j)$, where $e$ is the edge from $S$ to $(i, j)$.
        \State Let $\yij \gets \f(e)$ for each pair $(i, j)$, where $e$ is the edge from $(i, j)$ to $\sink$.
        \For{$(i, j), (i', j') \in \V \times \V$ and $S \subset \V$ that $S \odot (i, j)$, $S \odot (i', j')$, $\yij = \pij$, $\y_{i'j'} < \pi_{i'j'}$, and $\y_{S'ij} > 0$}
        \label{line:reduce-loop}
            \State $\ysij \gets \ysij - \epsilon$
            \State $\yij \gets \yij - \epsilon$
            \State $\y_{Si'j'} \gets \y_{Si'j'} + \epsilon$
            \State $\y_{i'j'} \gets \y_{i'j'} + \epsilon$
        \EndFor
        \State Let $\Q \gets \{(i, j)\in \V \times \V: \sum_{S: S \odot (i, j)} \ysij = \pij\}$
    \State \Return $\Q$
    \EndFunction
  \end{algorithmic}
\end{algorithm}

Finally, after obtaining a \minimaldynamiccoloring, we consider it as our final \dynamiccoloring, which will be used in the analysis presented in Section \ref{sec:2_apx_analysis}.
\begin{corollary}
The final \dynamiccoloring~obtained at the end of procedure $\pcsfthree$ is a \minimaldynamiccoloring.
\end{corollary}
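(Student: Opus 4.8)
The plan is to read off the corollary directly from the structure of \reducetightpairs: the loop at Line~\ref{line:reduce-loop} keeps applying the local rerouting until no eligible triple remains, and ``no eligible triple remains'' is verbatim the defining condition of a \minimaldynamiccoloring~in Definition~\ref{def:minimal-dynamic-coloring}. So I would split the argument into two parts: (i) the object maintained inside \reducetightpairs~is, after every iteration, still a genuine \dynamiccoloring, so that the notions ``tight pair'' and ``eligible triple'' remain meaningful; and (ii) the loop terminates, and on termination the guard condition fails, which is exactly minimality.

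For part (i), I would start from Line~\ref{line:reduce-mf}, where we take a maximum $\source$--$\sink$ flow in $\setpairgraph(\G,\pi,\y)$ for the final \staticcoloring. By Lemma~\ref{lemma:always-valid} this \staticcoloring~is valid, so by Lemma~\ref{lm:valid_dynamic_coloring} the max-flow value is $\sum_{S\subset\V}\ys$ and the induced $\ysij,\yij$ form a \dynamiccoloring. Each iteration subtracts $\epsilon$ from the flow along $\source\to S\to(i,j)\to\sink$ and adds $\epsilon$ along $\source\to S\to(i',j')\to\sink$, where $\epsilon<\min(\ysij,\ \pi_{i'j'}-\y_{i'j'})$. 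I would then check the only constraints that could be endangered: the flow on $S\to(i,j)$ stays nonnegative because $\epsilon<\ysij$; the flow on $(i',j')\to\sink$ stays below its capacity $\pi_{i'j'}$ because $\epsilon<\pi_{i'j'}-\y_{i'j'}$; the edges $S\to(i,j)$ and $S\to(i',j')$ have infinite capacity; and flow conservation holds at every vertex since we reroute along two $\source$--$\sink$ paths. Hence the update yields a feasible flow of the same value $\sum_{S\subset\V}\ys$, i.e.\ again a maximum flow, so by Lemma~\ref{lm:valid_dynamic_coloring} the updated $\ysij,\yij$ again form a valid \dynamiccoloring.

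For part (ii), I would note that in each iteration $\yij$ drops by $\epsilon>0$ (so $(i,j)$ ceases to be tight), $\y_{i'j'}$ rises by less than $\pi_{i'j'}-\y_{i'j'}$ (so $(i',j')$ does not become tight), and no other pair's duration changes; thus the number of tight pairs strictly decreases. Since it is at most $n^2$ to begin with, the loop halts after at most $n^2$ iterations, and at that point no triple $(i,j),(i',j'),S$ with $S\odot(i,j)$, $S\odot(i',j')$, $\yij=\pij$, $\y_{i'j'}<\pi_{i'j'}$, and $\ysij>0$ remains. This is precisely the condition of Definition~\ref{def:minimal-dynamic-coloring}, and since $\pcsfthree$ adopts this coloring as its final \dynamiccoloring, the corollary follows. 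The only step requiring a little care is the claim in part (i) that the rerouting preserves a maximum flow (equivalently, that I have enumerated all at-risk constraints), but this is immediate from the three-layer structure of $\setpairgraph$; the rest is bookkeeping already sketched in the text preceding Algorithm~\ref{alg:reduce_tight_pairs}.
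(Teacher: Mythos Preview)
Your proposal is correct and follows essentially the same approach as the paper. The paper does not give a formal proof of this corollary; it is stated as an immediate consequence of the discussion preceding Algorithm~\ref{alg:reduce_tight_pairs}, which already explains that each rerouting preserves a maximum flow (hence a valid \dynamiccoloring), strictly decreases the number of tight pairs, and therefore terminates after at most $n^2$ iterations with no eligible triple remaining---precisely Definition~\ref{def:minimal-dynamic-coloring}. Your write-up simply makes these steps explicit and is entirely in line with the paper's reasoning.
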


Furthermore, in a \minimaldynamiccoloring, we establish the following lemma, which is necessary for the analysis presented in the next section.

\begin{lemma}
    \label{lemma:minimal-cut-tight}
    In a \minimaldynamiccoloring, if a set $S \subset \V$ cuts a tight pair $(i, j) \in \V \times \V$ with $\ysij > 0$, then all pairs $(i', j')$ satisfying $S \odot (i', j')$ are also tight.
\end{lemma}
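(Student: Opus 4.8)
The plan is to argue by contradiction, directly unpacking Definition \ref{def:minimal-dynamic-coloring}. Assume the conclusion fails: the \dynamiccoloring~in question is a \minimaldynamiccoloring, the set $S$ cuts a tight pair $(i,j)$ with $\ysij > 0$, yet there exists some pair $(i', j') \in \V \times \V$ with $S \odot (i', j')$ that is \emph{not} tight. I would then exhibit the triple $\big((i,j),\,(i',j'),\,S\big)$ as a witness of exactly the configuration that a \minimaldynamiccoloring~is defined to exclude.

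Concretely, I would verify the four clauses of Definition \ref{def:minimal-dynamic-coloring} one by one: $(i,j)$ is tight (hypothesis of the lemma); $(i',j')$ is not tight (the assumption made for contradiction); $S \odot (i,j)$ (this is precisely the statement that $S$ cuts $(i,j)$, which is hypothesized); and finally $S \odot (i',j')$ together with $\ysij > 0$ (the former is our assumption on $(i',j')$, the latter is again part of the hypothesis). Since all four conditions hold simultaneously, the \dynamiccoloring~violates minimality, contradicting our standing assumption. Hence no non-tight pair $(i',j')$ with $S \odot (i',j')$ can exist, which is exactly the claim.

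I do not anticipate any genuine obstacle here: the statement is essentially a rephrasing of the definition, and the only point worth flagging is that the three hypotheses of the lemma ($(i,j)$ tight, $S$ cuts $(i,j)$, $\ysij>0$) coincide with three of the four clauses defining a \emph{non}-minimal \dynamiccoloring, so positing a non-tight pair $(i',j')$ cut by $S$ furnishes the missing fourth clause and closes the argument. In particular, no use of the max-flow machinery, of Lemma \ref{lemma:tight-set-tight-pairs}, or of the internal workings of \reducetightpairs~is required beyond the already-recorded fact that the coloring at the end of $\pcsfthree$ is minimal.
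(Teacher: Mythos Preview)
Your proposal is correct and matches the paper's own proof essentially verbatim: the paper also assumes for contradiction the existence of a non-tight pair $(i',j')$ with $S \odot (i',j')$ and observes that the triple $(i,j), (i',j'), S$ then directly violates Definition~\ref{def:minimal-dynamic-coloring}. Your write-up is slightly more explicit in checking each clause, but the argument is the same.
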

\begin{proof}
    Assume there exists a pair $(i', j')$ satisfying $S \odot (i', j')$ that is not tight. This implies that the pairs $(i, j)$ and $(i', j')$, along with set $S$, contradict the definition of \minimaldynamiccoloring~(Definition \ref{def:minimal-dynamic-coloring}).
\end{proof}


\subsection{Analysis} 
In this section, we demonstrate the validity of our algorithm's output for the given PCSF instance. 
We also present some lemmas that are useful for proving the approximation factor of $\pcsfthree$. 
However, we do not explicitly prove the approximation factor of $\pcsfthree$ in this section, as it is not crucial for our main result. 
Nonetheless, one can easily conclude the $3$-approximation factor of $\pcsfthree$ using Lemmas \ref{lemma:opt_lb},  \ref{lemma:pcsf3-ub}, and \ref{lm:all_vb_cut_opt} provided in the next section.
Additionally, in Lemma \ref{lm:pcsfthree_polynomial}, we show that $\pcsfthree$ has a polynomial time complexity. 
The lemmas provided in this section are also necessary for the analysis of our 2-approximation algorithm, which is presented in the next section.

To conclude the correctness of our algorithm, it is crucial to show that our algorithm pays penalties for all pairs that are not connected in $\Fp$. 
In other words, every pair that is not tight will be connected in $\Fp$. 
This ensures that by paying the penalties for tight pairs and the cost of edges in $\Fp$, we obtain a feasible solution.

To prove this, we introduce some auxiliary lemmas. 
First, in Lemma \ref{lemma:rem-tight}, we demonstrate that when a set becomes tight during $\pcsfthree$, it remains tight until the end of the algorithm. 
This lemma is essential because if a set becomes tight and is subsequently removed from the active sets, but then becomes non-tight again, it implies that some pairs could contribute to the coloring in the \dynamiccoloring, but their colors may no longer be utilized.

Furthermore, in Lemma \ref{lemma:c-contains-tight}, we establish that every connected component of $\F$ at the end of $\pcsfthree$ is a tight set. This provides additional evidence that the algorithm produces a valid solution.

Finally, we use these lemmas to prove the validity of the solution produced by $\pcsfthree$ in Lemma \ref{lm:F_connect_not_tight_pairs}.

Let $\SC$ be the cut in $\setpairgraph$ that separates $\source$ from the other vertices.

\begin{lemma}
\label{lemma:set-is-in-min-cut}
    At every moment of $\pcsfthree$, in the $\setpairgraph$ representation corresponding to the \staticcoloring~of that moment, $\SC$ is a minimum cut between $\source$ and $\sink$.
\end{lemma}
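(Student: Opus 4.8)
The plan is to reduce this statement to facts already established earlier in the excerpt, chiefly Lemma~\ref{lemma:always-valid} and Corollary~\ref{col:valid_dynamic_coloring}. Corollary~\ref{col:valid_dynamic_coloring} says that a \staticcoloring~is valid if and only if $\SC$ is a minimum cut between $\source$ and $\sink$ in the corresponding $\setpairgraph$. Hence it suffices to argue that at \emph{every} moment of $\pcsfthree$ — not merely at the discrete step boundaries — the current \staticcoloring~is valid, and the claim then follows immediately.

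First I would invoke Lemma~\ref{lemma:always-valid}, which already gives validity at the beginning of each iteration of the while loop. The only point needing care is that the \staticcoloring~evolves continuously within an iteration: at Line~\ref{line:grow_active_sets} every active set's duration is increased by $\Deltat = \min(\Deltae,\Deltap)$. At any intermediate instant the active durations have been increased by some amount $d \le \Deltat \le \Deltap$, and Lemma~\ref{lemma:finddeltap} guarantees that growing the active durations by at most $\Deltap$ preserves validity; thus the \staticcoloring~is valid at every instant of the growth phase, not just at its endpoint.

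Next I would check the two remaining operations inside an iteration that change the configuration. Merging the components of a newly fully colored edge (Lines~\ref{line:merge_tight_edges_start}--\ref{line:merge_tight_edges_end}) only introduces a fresh set $S_p \cup S_q$ whose static-coloring duration is $0$; in the $\setpairgraph$ this adds a vertex together with a capacity-$0$ edge out of $\source$, which cannot change the value of any $\source$--$\sink$ cut, so $\SC$ stays a minimum cut. Deactivating a tight set (Line~\ref{line:deactivate_tight_sets}) removes a set from $\activesets$ but leaves the function $\y$ unchanged, and as noted after Definition~\ref{def:valid-static-coloring} validity depends only on $\y$, so validity is again preserved. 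Combining these observations, the \staticcoloring~is valid at every moment of $\pcsfthree$, and Corollary~\ref{col:valid_dynamic_coloring} then yields that $\SC$ is a minimum cut between $\source$ and $\sink$ at every such moment.

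I do not expect a serious obstacle: the statement is essentially Lemma~\ref{lemma:always-valid} viewed through Corollary~\ref{col:valid_dynamic_coloring}. The one place I would be most careful is making the ``at every moment'' quantifier rigorous — i.e., using Lemma~\ref{lemma:finddeltap} to handle arbitrary intermediate instants within a growth step rather than only the discrete times at which $\pcsfthree$ updates its variables — and confirming that the bookkeeping operations (merging components, deactivating tight sets) cannot spoil the minimality of $\SC$.
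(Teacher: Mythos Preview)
Your proposal is correct and follows exactly the same approach as the paper: invoke Lemma~\ref{lemma:always-valid} to get validity of the \staticcoloring, then apply Corollary~\ref{col:valid_dynamic_coloring} to conclude that $\SC$ is a minimum cut. The paper's own proof is in fact just those two sentences; your additional care about intermediate instants within a growth step and about the bookkeeping operations is more thorough than what the paper writes, but it is the same argument.
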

\begin{proof}
    According to Lemma \ref{lemma:always-valid}, the \staticcoloring~is always valid during $\pcsfthree$.
    Moreover, based on Corollary \ref{col:valid_dynamic_coloring}, when the \staticcoloring~is valid, $\SC$ represents a minimum cut.
\end{proof}

\begin{lemma}
\label{lemma:min-cut-wo-tight}
    A set $S \subset V$ is tight if and only if there exists a minimum cut between $\source$ and $\sink$ in \setpairgraph~representation of a valid \dynamiccoloring~that does not contain the edge $e$ from $\source$ to $S$.
\end{lemma}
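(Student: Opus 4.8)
The plan is to unfold Definition~\ref{def:set-tightness} via the max-flow/min-cut correspondence established in Lemma~\ref{lm:valid_dynamic_coloring} and Corollary~\ref{col:valid_dynamic_coloring}, and to read off tightness of $S$ as a statement about whether the edge $e$ from $\source$ to $S$ is saturated by \emph{every} maximum flow, equivalently whether $e$ lies in every minimum cut. Write $W = \sum_{S' \subset \V} \y_{S'}$ for the weight of $\SC$; by Lemma~\ref{lemma:set-is-in-min-cut}/Corollary~\ref{col:valid_dynamic_coloring} the given \staticcoloring~is valid, so $\SC$ is a minimum cut and the max-flow value equals $W$.

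For the forward direction, suppose $S$ is tight. Consider the \setpairgraph~with the capacity of $e$ increased by some $\epsilon>0$. If the new minimum cut still had weight $W$, then a max-flow of value $W$ in the modified graph would also be a max-flow in the original graph (capacities only went up), and it would certify a valid \dynamiccoloring~for the \staticcoloring~in which $\y_S$ is replaced by $\y_S+\epsilon$ — indeed the flow through the $\source$-to-$S$ edge could be taken to be at most its original value $\y_S$, contradicting... wait, I need the flow to actually use the extra capacity. The cleaner route: $S$ tight means that increasing $\y_S$ by any $\epsilon>0$ breaks validity, i.e.\ by Corollary~\ref{col:valid_dynamic_coloring} the max-flow in the modified graph is strictly less than $W + \epsilon$, hence (capacities are $W+\epsilon$ on the $\source$ side) the max-flow is still $W$. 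So increasing the capacity of $e$ does not increase the max-flow; by the standard characterization, an edge whose capacity increase leaves the max-flow unchanged must be non-saturated in some max-flow, equivalently must lie on the $\source$ side of some minimum cut together with its head — i.e.\ there is a minimum cut not containing $e$. (I will phrase this directly: take a max-flow $f$ of value $W$ in the modified graph; since $\mathrm{val}(f)=W$ and the $\source$-edge capacities are now $W+\epsilon$, the residual graph has an $\source$–$\sink$ cut of weight $W$ avoiding $e$, because $e$ has residual capacity $\ge\epsilon>0$; push this back to a minimum cut of the original graph not using $e$.)

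For the converse, suppose there is a minimum cut $C$ of the original \setpairgraph~(weight $W$) with $e \notin C$. Then increasing the capacity of $e$ by any $\epsilon>0$ leaves $C$ a cut of weight $W$, so the max-flow of the modified graph is still $W < W+\epsilon$; by Corollary~\ref{col:valid_dynamic_coloring} the \staticcoloring~with $\y_S$ replaced by $\y_S+\epsilon$ is invalid, which is exactly Definition~\ref{def:set-tightness}, so $S$ is tight. The main obstacle is the forward direction: making rigorous the step that ``the max-flow does not increase when $e$'s capacity increases'' yields ``a minimum cut avoiding $e$.'' I expect to handle this by a residual-graph argument — after routing a max-flow of value $W$ in the capacity-augmented graph, $e$ retains positive residual capacity, so the set of vertices reachable from $\source$ in the residual graph gives a minimum cut whose crossing edges are all saturated, hence cannot include $e$; translating this cut back to the original graph (where $e$ had its original capacity) still gives weight $W$, so it is a minimum cut there not containing $e$. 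A minor point to check is the edge case $|\activesets|$ vs.\ the ``$S \in \activesets$ or $\y_S>0$'' membership condition in Definition~\ref{def:setpairgraph}, ensuring $S$ actually has a vertex in \setpairgraph; this holds whenever tightness is being asked about a set that has been colored or is active.
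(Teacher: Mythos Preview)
Your backward direction is correct and essentially identical to the paper's.

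For the forward direction there is a real gap. From tightness of $S$ you correctly deduce that for every $\epsilon>0$ the max-flow $M(\epsilon)$ in the graph with the capacity of $e$ raised by $\epsilon$ satisfies $M(\epsilon)<W+\epsilon$. But the step ``hence the max-flow is still $W$'' is not justified by the parenthetical about $\source$-side capacities: from $W\le M(\epsilon)<W+\epsilon$ alone one cannot exclude $M(\epsilon)$ lying strictly between $W$ and $W+\epsilon$, and if that happens your residual argument breaks, since a max-flow in the $\epsilon$-graph may well saturate $e$ while leaving some \emph{other} $\source$-edge slack (so the residual cut you produce avoids that other edge, not $e$). The missing ingredient is the structural fact $M(\epsilon)=\min(A,\,B+\epsilon)$, where $A$ and $B$ are the minimum weights over $\source$--$\sink$ cuts avoiding and containing $e$ respectively; then ``$M(\epsilon)<W+\epsilon$ for all $\epsilon>0$'' forces $A\le B$, so $A=\min(A,B)=W$, which already exhibits a minimum cut avoiding $e$ with no residual-graph detour needed. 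The paper's proof is exactly the contrapositive of this: assuming every minimum cut contains $e$ (so $B<A$), it takes $\epsilon$ equal to the gap between $W$ and the next cut weight, argues the new min-cut weight becomes $W+\epsilon$, and concludes that $y_S+\epsilon$ yields a valid \staticcoloring, contradicting tightness of $S$.
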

\begin{proof}
    Using contradiction, let's assume that $S$ is tight and all minimum cuts contain the edge $e$.
    Let $\epsilon>0$ be the difference between the weight of the minimum cut and the first cut whose weight is greater than the minimum cut.
    By increasing the capacity of the edge $e$ by $\epsilon$, the weight of any minimum cut increases by a positive value $\epsilon$, as well as the maximum flow.
    This implies that we can increase $\ys$ and still maintain a \validstaticcoloring.
    Therefore, based on the definition of set tightness (Definition \ref{def:set-tightness}), we can conclude that set $S$ is not tight.
    This contradicts the assumption of the tightness of $S$ and proves that there exists a minimum cut that does not contain the edge $e$.
    
    Furthermore, if we have a minimum cut that does not contain edge $e$, increasing the capacity of $e$ does not affect the value of that minimum cut and respectfully the maximum flow. 
    By using Lemma \ref{lm:valid_dynamic_coloring}, we conclude that increasing $\ys$ would result in an invalid \staticcoloring.
    Therefore, based on Definition \ref{def:set-tightness}, we can conclude that set $S$ is tight.
\end{proof}

\begin{lemma}
    \label{lemma:rem-tight}
    Once a set $S$ becomes tight, it remains tight throughout the algorithm.
\end{lemma}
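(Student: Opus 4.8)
The plan is to reason entirely through the min-cut characterization of tightness from Lemma~\ref{lemma:min-cut-wo-tight}, combined with two structural facts: by Lemma~\ref{lemma:set-is-in-min-cut} the cut $\SC$ that isolates $\source$ is a minimum cut at every moment of $\pcsfthree$, so the min-cut value of the set-pair graph at any moment equals $\sum_{S'\subset\V}\y_{S'}$; and every quantity we care about is monotone over time. Concretely, once $S$ becomes tight it is removed from $\activesets$ on Line~\ref{line:deactivate_tight_sets}, and since new active sets arise only by merging into strictly larger sets, $S$ is never re-added; hence $\ys$ is frozen from that moment on, whereas $\y_{S'}$ for every other set $S'$ is non-decreasing (Line~\ref{line:grow_active_sets} only adds non-negative amounts), and every set that has ever had positive $\y$ keeps its vertex in the set-pair graph forever, with its capacity frozen after deactivation. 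Let $t_0$ be the moment $S$ becomes tight and $t_1>t_0$ an arbitrary later moment; I want to exhibit a minimum cut of the set-pair graph at $t_1$ that avoids the edge $(\source,S)$, which by Lemma~\ref{lemma:min-cut-wo-tight} forces $S$ to be tight at $t_1$.

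At $t_0$, Lemma~\ref{lemma:min-cut-wo-tight} gives a minimum cut $C_0$ with $S$ on the $\source$-side. Since every edge from a set to a pair it cuts has infinite capacity, the $\source$-side of $C_0$ is downward closed: writing $\mathcal{A}$ for its set-vertices and $\mathcal{P}$ for its pair-vertices, every pair cut by a set in $\mathcal{A}$ lies in $\mathcal{P}$; in particular every pair cut by $S$ lies in $\mathcal{P}$. Equating the value of $C_0$ with that of $\SC$ (both minimum cuts) gives $\sum_{(i,j)\in\mathcal{P}}\pi_{ij}=\sum_{S'\in\mathcal{A}}\y_{S'}$ at $t_0$, and I may discard from $\mathcal{A}$ any set with $\y_{S'}=0$ without breaking this identity or downward-closedness, so assume $\y_{S'}>0$ for every $S'\in\mathcal{A}$ (the case $\y_S=0$ is handled separately). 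Now fix this $\mathcal{P}$ and, at time $t_1$, let $\mathcal{A}'$ be the collection of \emph{all} set-vertices then present whose cut-pairs all lie in $\mathcal{P}$. Each $S'\in\mathcal{A}$ is still present at $t_1$ (its $\y$ is still positive) and its cut-pairs are unchanged, so $\mathcal{A}\subseteq\mathcal{A}'$ and hence $S\in\mathcal{A}'$. Consider the cut $C_1$ whose $\source$-side is $\{\source\}\cup\mathcal{A}'\cup\mathcal{P}$: no infinite-capacity edge crosses it, so its value at $t_1$ is $\sum_{S'\notin\mathcal{A}'}\y_{S'}+\sum_{(i,j)\in\mathcal{P}}\pi_{ij}$. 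This is at least the $t_1$ min-cut value $\sum_{S'}\y_{S'}$, giving $\sum_{(i,j)\in\mathcal{P}}\pi_{ij}\ge\sum_{S'\in\mathcal{A}'}\y_{S'}(t_1)$; conversely, by monotonicity and $\mathcal{A}\subseteq\mathcal{A}'$ we get $\sum_{S'\in\mathcal{A}'}\y_{S'}(t_1)\ge\sum_{S'\in\mathcal{A}}\y_{S'}(t_1)\ge\sum_{S'\in\mathcal{A}}\y_{S'}(t_0)=\sum_{(i,j)\in\mathcal{P}}\pi_{ij}$. So equality holds throughout, $C_1$ has value exactly $\sum_{S'}\y_{S'}(t_1)$ and is thus a minimum cut at $t_1$; since $S\in\mathcal{A}'$ it does not contain $(\source,S)$, and we are done.

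The main obstacle is precisely that the set-pair graph is not static: $\source$-to-set capacities grow and merging introduces new set-vertices, so one cannot literally re-use $C_0$. The two facts that neutralize this are monotonicity of all the $\y_{S'}$ and the fact that deactivated sets retain their (frozen) vertices; together they make the "transplant" of $C_0$ legitimate — one keeps the pair-side $\mathcal{P}$ fixed and recomputes the set-side as the maximal downward-closed choice $\mathcal{A}'$. The only loose end is the degenerate case $\y_S=0$ at $t_0$: there $S$ is tight because every pair it cuts is already saturated in every corresponding dynamic coloring (Lemma~\ref{lemma:tight-set-tight-pairs}), a condition preserved as the other $\y_{S'}$ grow, so pushing any $\epsilon>0$ more flow out of $S$ at $t_1$ would overload one such pair; invoking Definition~\ref{def:set-tightness} (or rerunning the cut argument with $\mathcal{P}$ taken to be the downward closure of the pairs cut by $S$) again gives tightness at $t_1$.
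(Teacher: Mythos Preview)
Your argument is correct and follows the same route as the paper: invoke Lemma~\ref{lemma:min-cut-wo-tight} to obtain a minimum cut $C_0$ with $S$ on the $\source$-side at the moment $S$ becomes tight, then use monotonicity of the $\y_{S'}$ together with Lemma~\ref{lemma:set-is-in-min-cut} to show that such a cut persists at any later moment. The paper's execution is more direct, however: it simply reuses $C_0$ itself (implicitly placing any newly created set-vertices on the $\sink$-side) and observes that, if $d=\sum_{S'}\bigl(\y_{S'}(t_1)-\y_{S'}(t_0)\bigr)$, then the weight of $C_0$ grows by at most $d$ while the weight of $\SC$ grows by exactly $d$, so $C_0$ is still a minimum cut at $t_1$; there is no need to recompute a new source-side $\mathcal{A}'$ or run your sandwiching inequality.

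One small loose end: your handling of the degenerate case $\y_S=0$ asserts that ``every pair $S$ cuts is saturated in every dynamic coloring'' is \emph{preserved} as the other $\y_{S'}$ grow, but that preservation is precisely the content of the lemma and is not justified by what you wrote; the parenthetical alternative (taking $\mathcal{P}$ to be the downward closure of the pairs cut by $S$) does not generally yield a minimum cut at $t_0$. The clean fix is to note that your main cut argument already covers this case verbatim: the pairs cut by $S$ lie in the original $\mathcal{P}$ regardless of whether $\y_S>0$, so the cut $C_1$ you build (with $S$ placed on the $\source$-side, contributing zero to the cut value) is still a minimum cut at $t_1$ that avoids the edge $(\source,S)$.
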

\begin{proof}
    According to Lemma \ref{lemma:set-is-in-min-cut}, $\SC$ is always a minimum cut.
    Let us assume that at time $t$, the set $S$ becomes tight.
    Based on Lemma \ref{lemma:min-cut-wo-tight}, there exists a minimum cut $C_S$ that has $S$ on the side of $\source$.
    Therefore, at time $t$, $\SC$ and $C_S$ have the same weight.
    Now, let us consider a contradiction by assuming that there is a time $t' > t$ when $S$ is not tight.
    The only difference between $\setpairgraph$ at time $t$ and time $t'$ is the increased capacity of some edges between $\source$ and sets $S' \subset \V$.
    Let us assume that the total increase in all $\y_{S'}$ from time $t$ to $t'$ is $d$.
    Since all of these edges are part of the cut $\SC$, the weight of the cut $\SC$ is increased by $d$.
    Furthermore, since the total capacity of all edges in $\setpairgraph$ from time $t$ to $t'$ has increased by $d$, the weight of $C_S$ cannot have increased by more than $d$.
    That means, the weight of $C_S$ cannot exceed the weight of $\SC$ at time $t'$.
    Since $\SC$ is a minimum cut at time $t'$ according to Lemma \ref{lemma:set-is-in-min-cut}, we can conclude that $C_S$ remains a minimum cut at time $t'$.
    Therefore, based on Lemma \ref{lemma:min-cut-wo-tight}, the set $S$ is still tight at time $t'$, which contradicts the assumption that it is not tight.
\end{proof}

\begin{lemma}
\label{lemma:c-contains-tight}
    At the end of \pcsfthree, all remaining sets in $\currentsets$ are tight. 
\end{lemma}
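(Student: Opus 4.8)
The plan is to trace the life cycle of an arbitrary set $S$ that is still in $\currentsets$ when the while loop of $\pcsfthree$ terminates, and show that it must have left $\activesets$ through the tightness deactivation at Line \ref{line:deactivate_tight_sets}. First I would record three structural facts about how $\currentsets$ and $\activesets$ evolve: (i) a set enters $\currentsets$ either as an initial singleton $\{v\}$ or as a union $S_p \cup S_q$ created by the merge step, and in both cases it is simultaneously inserted into $\activesets$ (Line \ref{line:init-acts} for singletons, Line \ref{line:update-as} for merges); (ii) a set leaves $\currentsets$ only by being merged into a strictly larger set; (iii) a set leaves $\activesets$ either by such a merge or by the deactivation at Line \ref{line:deactivate_tight_sets}. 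Since $S \in \currentsets$ at termination, by (ii) it is never merged away after it is created; with (i) this means $S$ entered $\activesets$ at creation, and by (iii) the only remaining way it could have left $\activesets$ is deactivation. As the loop terminates exactly when $\activesets = \emptyset$, the set $S$ did leave $\activesets$; hence at some moment $\checksetistight(\G,\pi,\y,S)$ returned \emph{True}.

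Next I would argue that $\checksetistight$ returning \emph{True} certifies tightness of $S$. That call raises the capacity of the edge $e$ from $\source$ to $S$ in $\setpairgraph$ by $1$ and tests whether the max flow still equals $\sum_{S' \subset \V} \y_{S'}$. If $S$ is tight, Lemma \ref{lemma:min-cut-wo-tight} provides a minimum cut avoiding $e$, so raising $\mathrm{cap}(e)$ does not change the minimum cut value; since the static coloring is valid during the loop (Lemma \ref{lemma:always-valid}), Lemma \ref{lm:valid_dynamic_coloring} gives that this value is exactly $\sum_{S'} \y_{S'}$, so $\checksetistight$ returns \emph{True}. Conversely, if $S$ is not tight then every minimum cut contains $e$ (Lemma \ref{lemma:min-cut-wo-tight}), so increasing $\mathrm{cap}(e)$ by $1$ strictly increases the minimum cut, the max flow exceeds $\sum_{S'} \y_{S'}$, and $\checksetistight$ returns \emph{False}. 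Thus \emph{True} is returned only for tight sets, so $S$ was tight at the moment it was deactivated.

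Finally, I would invoke Lemma \ref{lemma:rem-tight}: once $S$ becomes tight, it stays tight for the remainder of $\pcsfthree$; moreover the post-loop call to $\reducetightpairs$ alters only the dynamic-coloring variables $\y_{Sij}$ and $\yij$, not any $\y_{S'}$, so it changes neither the static coloring nor the tightness of any set (Definition \ref{def:set-tightness}). Hence every $S$ remaining in $\currentsets$ at the end of $\pcsfthree$ is tight, which is the claim.

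The main obstacle I anticipate is the bookkeeping of the first paragraph: being careful that, although merges repeatedly create and destroy entries of $\activesets$ and $\currentsets$, any set \emph{surviving} in $\currentsets$ must in fact have passed through the deactivation test — together with pinning down the correctness of $\checksetistight$, which the preceding text describes only operationally and whose justification I would spell out via Lemma \ref{lemma:min-cut-wo-tight} as above. Everything after those two points is a direct application of Lemma \ref{lemma:rem-tight}.
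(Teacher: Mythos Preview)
Your proposal is correct and follows essentially the same approach as the paper: track that $\currentsets$ and $\activesets$ are created and merged in lockstep, so any set surviving in $\currentsets$ at termination (when $\activesets=\emptyset$) must have been removed from $\activesets$ via the tightness deactivation, and then apply Lemma~\ref{lemma:rem-tight}. Your write-up is more explicit than the paper's---in particular you spell out why $\checksetistight$ correctly certifies tightness via Lemma~\ref{lemma:min-cut-wo-tight}, and you note that $\reducetightpairs$ leaves the $\y_{S'}$ values untouched---but the underlying argument is the same.
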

\begin{proof}
    In Line \ref{line:init-acts} of the algorithm, both \activesets~and \currentsets~are initialized with the same set of sets. 
    Additionally, in Lines \ref{line:update-c} and \ref{line:update-as}, the same sets are removed from \activesets~and \currentsets~or added to both data structures. 
    The only difference occurs in Line \ref{line:deactivate_tight_sets}, where tight sets are removed from \activesets~but not from \currentsets. 
    Given Lemma~\ref{lemma:rem-tight}, these sets are tight at the end of $\pcsfthree$.
    Therefore, at the end of the algorithm, since there are no sets remaining in \activesets, all sets in \currentsets~are tight.
\end{proof}

\begin{lemma}
\label{lm:F_connect_not_tight_pairs}
    After executing $\reducetightpairs$, the endpoints of any pair that is not tight will be connected in the forest $\Fp$.
\end{lemma}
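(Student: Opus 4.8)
The plan is to argue by contradiction, reducing the statement about $\Fp$ to a statement about the larger forest $\F$. Suppose $(i,j)$ is a pair that is not tight at the end of $\reducetightpairs$, i.e.\ $(i,j)\notin\Q$, and suppose toward a contradiction that $i$ and $j$ lie in different connected components of $\F$. First I would observe that it suffices to show $i$ and $j$ are connected in $\F$: since $\F$ is a forest (edges are only ever added between distinct components) and $\Fp$ is obtained from $\F$ by deleting only edges that are unnecessary for connecting the demands in $(\V\times\V)\setminus\Q$, the unique $i$–$j$ path of $\F$ survives into $\Fp$ because $(i,j)$ is itself such a demand. So the whole argument comes down to proving connectivity in $\F$.

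Next I would let $S_i\in\currentsets$ be the connected component of $\F$ containing $i$ at the end of $\pcsfthree$. Because $i$ and $j$ are assumed to be in different components, $j\notin S_i$, and hence $|\{i,j\}\cap S_i|=1$, i.e.\ $S_i\odot(i,j)$. By Lemma~\ref{lemma:c-contains-tight}, every set remaining in $\currentsets$ at the end of $\pcsfthree$ is tight, so $S_i$ is a tight set, and by Lemma~\ref{lemma:always-valid} the final \staticcoloring~is valid. I would then apply Lemma~\ref{lemma:tight-set-tight-pairs}: in a \validstaticcoloring, if $S_i$ is tight then in \emph{any} corresponding \dynamiccoloring~every pair cut by $S_i$ is tight; in particular $(i,j)$ is tight, i.e.\ $(i,j)\in\Q$, contradicting the assumption. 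This yields that $i$ and $j$ are connected in $\F$, hence in $\Fp$.

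The one point needing care — and the step I expect to be the main obstacle — is justifying that Lemma~\ref{lemma:tight-set-tight-pairs} applies to the \dynamiccoloring~produced \emph{after} $\reducetightpairs$, not merely to some pre-reduction one. For this I would check that the $\epsilon$-shifts performed in the loop of $\reducetightpairs$ (decreasing $\ysij$ and $\yij$, increasing $\y_{Si'j'}$ and $\y_{i'j'}$) preserve, for every set $S$, the identity $\sum_{(i,j):S\odot(i,j)}\ysij=\ys$, and never violate any pair constraint $\yij\le\pij$ (the chosen $\epsilon$ is smaller than both $\ysij$ and $\pi_{i'j'}-\y_{i'j'}$). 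Thus the post-reduction assignment still satisfies Definition~\ref{def:valid-static-coloring} and is a genuine \dynamiccoloring~of the same (unchanged) final \staticcoloring, so Lemma~\ref{lemma:tight-set-tight-pairs} indeed forces $(i,j)$ to be tight. Everything else is routine bookkeeping about connected components of a forest.
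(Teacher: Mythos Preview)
Your proposal is correct and follows essentially the same route as the paper: reduce to connectivity in $\F$, take the component $S\in\currentsets$ containing $i$, invoke Lemma~\ref{lemma:c-contains-tight} to get that $S$ is tight, and then use Lemma~\ref{lemma:tight-set-tight-pairs} to force $(i,j)$ to be tight, a contradiction. Your extra care in verifying that the post-$\reducetightpairs$ assignment is still a bona fide \dynamiccoloring~of the unchanged \staticcoloring~is a point the paper leaves implicit, but otherwise the arguments are the same.
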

\begin{proof}
    The forest $\Fp$ is obtained by removing redundant edges from $\F$, which are edges that are not part of a path between the endpoints of a pair that is not tight. 
    Hence, we only need to show that every pair that is not tight is connected in $\F$.
    Let us assume, for the sake of contradiction, that there exists a pair $(i, j)$ that is not tight and the endpoints $i$ and $j$ are not connected in $\F$. 
    Consider the set $S \in \currentsets$ at the end of the algorithm that contains $i$. Since $i$ and $j$ are not connected in $\F$, and $S$ is a connected component of $\F$, it follows that $S$ cuts the pair $(i, j)$. 
    According to Lemma \ref{lemma:c-contains-tight}, $S$ is a tight set.
    This contradicts Lemma \ref{lemma:tight-set-tight-pairs} because we have a tight set $S$ such that $S \odot (i, j)$ is not tight. 
    Therefore, our assumption is false, and every pair that is not tight is connected in $\F$. 
    As a result, after executing $\reducetightpairs$, the endpoints of any pair that is not tight will be connected in the forest $\Fp$.
\end{proof}

Now we will prove that the running time of $\pcsfthree$ is polynomial.

\begin{lemma}
\label{lm:pcsfthree_polynomial}
    For instance $\I$, the runtime of $\pcsfthree$ is polynomial.
\end{lemma}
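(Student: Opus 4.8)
The plan is to show that the \texttt{while} loop of Algorithm~\ref{alg:pcsf3} (Line~\ref{line:pcsf3-while}) runs only $O(n)$ times and that everything else is polynomial. A single execution of the loop body is polynomial: $\finddeltae$ and $\finddeltap$ are polynomial by Lemma~\ref{lm:finddeltae-polynomial} and Corollary~\ref{col:finddeltap-polynomial}; the loop at Line~\ref{line:merge_tight_edges_start} runs over $E$ and, by Lemma~\ref{lemma:linear-active}, each sum $\sum_{S:e\in\deltaS}\ys$ has only linearly many nonzero terms; and the loop at Line~\ref{line:loop_deactive} runs over $\activesets$, which has linear size by Lemma~\ref{lemma:linear-active}, invoking $\checksetistight$ (polynomial, Lemma~\ref{lm:check-tight-polynomial}) each time. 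The post-loop work is also polynomial: $\reducetightpairs$ by Corollary~\ref{col:reduce-polynomial}, and extracting $\Fp$ from $\F$ is a routine graph pruning. (Every quantity computed is a rational obtained from the input via polynomially many additions, subtractions, minima, and max-flow calls, so bit-sizes stay polynomial.)

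The heart of the argument is that every iteration makes one of two irreversible kinds of progress. Fix an iteration and let $\Deltat=\min(\Deltae,\Deltap)$. If $\Deltae\le\Deltap$, then $\Deltat=\Deltae$, so by Corollary~\ref{col:deltae-fully-colored} some edge reaches its length; since $\finddeltae$ only considers edges whose endpoints lie in distinct components, at least one such edge is cross-component, so the loop at Line~\ref{line:merge_tight_edges_start} performs at least one merge and $|\currentsets|$ drops by one. If instead $\Deltap<\Deltae$, then $\Deltat=\Deltap$ and, by the definition of $\Deltae$, no edge reaches its length, so no merge occurs; I will show that the loop at Line~\ref{line:loop_deactive} removes at least one set from $\activesets$. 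After the update at Line~\ref{line:grow_active_sets} the static coloring is still valid (Lemma~\ref{lemma:always-valid}), so $\SC$ is a minimum cut of the current $\setpairgraph$ (Lemma~\ref{lemma:set-is-in-min-cut}); on the other hand $\Deltap$ is maximal (Lemma~\ref{lemma:finddeltap-max}), so raising the capacity of every edge $(\source,S)$ with $S\in\activesets$ by $\epsilon$ makes $\SC$ non-minimal. As there are finitely many $(\source,\sink)$-cuts, each of weight affine in $\epsilon$, a single cut $C$ witnesses this for all small $\epsilon>0$; comparing the weights of $C$ and $\SC$ at $\epsilon=0$ and at small $\epsilon$ forces $C$ to be a minimum cut of the current graph that omits the edge $(\source,S)$ for some $S\in\activesets$, so $S$ is tight by Lemma~\ref{lemma:min-cut-wo-tight}. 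Consequently $\checksetistight$ reports $S$ as tight (its correctness follows from Lemmas~\ref{lm:valid_dynamic_coloring} and~\ref{lemma:min-cut-wo-tight}), and $S$ leaves $\activesets$ at Line~\ref{line:deactivate_tight_sets}.

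It remains to count. Merges only decrease $|\currentsets|$, which starts at $n$ and stays at least $1$, so there are at most $n-1$ merges. A set deactivated at Line~\ref{line:deactivate_tight_sets} is a connected component of $\F$ at that moment, and since components only grow (Lines~\ref{line:update-c} and~\ref{line:update-as}) that exact set never re-enters $\activesets$; thus each set is deactivated at most once, and by Lemma~\ref{lemma:linear-active} at most $2n-1$ sets are ever active, giving at most $2n-1$ deactivations. Hence the loop runs at most $(n-1)+(2n-1)=3n-2$ times, and together with the per-iteration and post-loop bounds this shows $\pcsfthree$ runs in polynomial time. The one delicate point, which I would write out most carefully, is the $\Deltap<\Deltae$ case: upgrading ``the active sets cannot all grow'' (what Lemma~\ref{lemma:finddeltap-max} gives) to ``some specific active set cannot grow'' (what Lemma~\ref{lemma:min-cut-wo-tight} and $\checksetistight$ need), which the affine-in-$\epsilon$ comparison over the finite family of cuts accomplishes.
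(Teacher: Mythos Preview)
Your proof is correct and follows the same approach as the paper: each \texttt{while}-iteration either merges two components (decreasing $|\currentsets|$) or deactivates an active set, bounding the loop to $O(n)$ iterations, with all subroutines polynomial by the cited lemmas. You are actually more careful than the paper on the $\Deltap<\Deltae$ case---the paper simply asserts that ``$\Deltap$ signifies the time required for at least one active set to be deactivated based on the maximality of $\Deltap$,'' whereas you correctly note that Lemma~\ref{lemma:finddeltap-max} only says the active sets cannot \emph{all} grow, and supply the affine-in-$\epsilon$ cut-comparison argument needed to extract a single tight active set via Lemma~\ref{lemma:min-cut-wo-tight}.
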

\begin{proof}
    We know that $\Delta_e$ denotes the time it takes for at least one new edge to be fully colored according to Corollary \ref{col:deltae-fully-colored}, and $\Delta_p$ signifies the time required for at least one active set to be deactivated based on the maximality of $\Deltap$ demonstrated in Lemma \ref{lemma:finddeltap-max}. During each iteration of the while loop at Line \ref{line:pcsf3-while}, it is guaranteed that at least one of these events takes place.

    If an edge becomes fully colored, it results in the merging of two sets into one in $\currentsets$. 
    As a result, two sets are removed and one set is added at Line \ref{line:update-c}, leading to a decrease in the size of $\currentsets$.
    Alternatively, if an active set is deactivated, it is removed from $\activesets$ at Line \ref{line:deactivate_tight_sets}, which leads to a decrease in the size of $\activesets$.
    It is important to note that the number of active sets in $\activesets$ does not increase at Line \ref{line:update-as} (it either decreases by one or remains the same).
    From this, we can conclude that after each iteration of the while loop, either the number of active sets in $\activesets$ decreases by at least one, or the number of sets in $\currentsets$ decreases by one, or both events occur. Since both $\activesets$ and $\currentsets$ initially contain $n$ elements, the while loop can iterate for a maximum of $2n$ times.
    
    In each iteration, we perform the following operations with polynomial runtime: $\finddeltae$, which is polynomial due to Lemma \ref{lm:finddeltae-polynomial}; $\finddeltap$, which is polynomial according to Corollary \ref{col:finddeltap-polynomial}; iterating through active sets to extend the \staticcoloring, which is polynomial based on the size of $\activesets$; iterating through edges to update active sets if they fully color edges, which is polynomial; and checking if each active set is tight using $\checksetistight$, which is polynomial according to Lemma \ref{lm:check-tight-polynomial}.
    
    In the end, we also run $\reducetightpairs$, which is polynomial according to Corollary \ref{col:reduce-polynomial}.
    
    Therefore, we can conclude that $\pcsfthree$ runs in polynomial time. 
\end{proof}


\section{The Iterative Algorithm}
\label{sec:2_apx_alg}
In this section, we present our iterative algorithm which uses the \pcsfthree~procedure from Algorithm \ref{alg:pcsf3} as a building block.
We then provide a proof of its $2$-approximation guarantee in Section \ref{sec:2_apx_analysis}.
Finally, in Section~\ref{sec:3-2-n-approax}, we provide a brief overview of a more refined analysis to establish a $(2-\frac{1}{n})$-approximation for an $n$ vertex input graph.

Our algorithm, described in Algorithm \ref{alg:rpcsf}, considers two solutions for the given PCSF instance $\I$.
The first solution, denoted as $(\Q_1,\Fp_1)$, is obtained by invoking the \pcsfthree~procedure (Line \ref{line:get_pcsf3_output}). If the total penalty of this solution, $\pi(\Q_1)$, is equal to $0$, the algorithm returns it immediately as the solution.

Otherwise, a second solution, denoted as $(\Q_2,\Fp_2)$, is obtained through a recursive call on a simplified instance $\R$.
The simplified instance is created by adjusting penalties: penalties are limited to pairs that Algorithm \ref{alg:pcsf3} does not pay, and the penalties for other pairs are set to $0$ (Lines \ref{line:initial_pi}-\ref{line:construct_R}).
Essentially, we assume that pairs whose penalties are paid in the first solution will indeed be paid, and our objective is to find a solution for the remaining pair connection demands. 
We note that setting the penalties for these pairs to $0$ guarantees their inclusion in $\Q_2$. 
This is because $\Q_2$ represents the set of tight pairs for a subsequent invocation of $\pcsfthree$, and any pair with a penalty of $0$ is trivially tight.

To compare the two solutions, the algorithm computes the values $cost_1=c(\Fp_1)+\pi(\Q_1)$ and $cost_2=c(\Fp_2)+\pi(\Q_2)$, which represent the costs of the solutions (Lines \ref{line:cost1} and \ref{line:cost2}). In the final step, the algorithm simply selects and returns the solution with the lower cost. 

\begin{algorithm}[ht]
  \caption{Iterative PCSF algorithm}
  \label{alg:rpcsf}
  \hspace*{\algorithmicindent} \textbf{Input:} An undirected graph $\G=(\V, \E, \cc)$ with edge costs $\cc: \E \rightarrow \mathbb{R}_{\ge 0}$ and penalties $\pi : \V \times \V \rightarrow \mathbb{R}_{\ge 0}$. \\
  \hspace*{\algorithmicindent} \textbf{Output:} A set of pairs $\Q$ with a forest $\Fp$ that connects the endpoints of every pair $(i, j) \notin \Q$.
  \begin{algorithmic}[1]
    \Procedure{\rpcsf}{$\I=(\G,\ \pi)$}
        \State $(\Q_1,\Fp_1)\gets \pcsfthree(\I)$ \label{line:get_pcsf3_output}
        \If{$\pi(\Q_1)=0$} \label{line:check_Q_1_is_empty}
        \State\Return $(\Q_1,\Fp_1)$ \label{line:return-early}
        \EndIf
        \State $cost_1\gets c(\Fp_1)+\pi(\Q_1)$
        \label{line:cost1}
        \State Initialize $\pi'$ as a new all-zero penalty vector
        \label{line:initial_pi}
        \For{$(i,j) \in \V \times \V$}
            \If{$(i,j) \in \Q_1$}
            \State $\pi'_{ij}\gets0$ \label{line:pip_set_zero}
            \Else 
            \State $\pi'_{ij}\gets\pij$ \label{line:set_pip}
            \EndIf
        \EndFor

        \State Construct instance $\R$ of the PCSF problem consisting of $\G$ and $\pi'$
        \label{line:construct_R}
        \State $(\Q_2,\Fp_2) \gets \rpcsf(\R)$ \label{line:get_recursive_output}
        \State $cost_2\gets c(\Fp_2)+\pi(\Q_2)$
        \label{line:cost2} 
        \If{$cost_1 \leq cost_2$}
        \label{line:return-min}
        \State\Return $(\Q_1,\Fp_1)$
        \Else
        \State \Return $(\Q_2,\Fp_2)$ 
        \EndIf
    \EndProcedure
  \end{algorithmic}
\end{algorithm}


\subsection{Analysis}
\label{sec:2_apx_analysis}
We now analyze the approximation guarantee of Algorithm \ref{alg:rpcsf}. In the following, we consider an arbitrary instance $\I = (\G, \pi)$ of the PCSF problem, and analyze the solutions found by the $\rpcsf$ algorithm.
In our analysis, we focus on \textbf{the first call} of $\rpcsf$. 
By the output of $\pcsfthree$, we refer to the result of the first call of $\pcsfthree$ on instance $\I$ at Line \ref{line:get_pcsf3_output}. 
Similarly, when we mention the output of the recursive call, we are referring to the output of $\rpcsf$ on instance $\R$ at Line \ref{line:get_recursive_output}.
We compare the output of $\rpcsf$ on $\I$, which is the minimum of the output of $\pcsfthree$ and the output of the recursive call, with an optimal solution $\OPT$ of the instance $\I$. 
We denote the forest selected in $\OPT$ as $\fopt$ and use $\qopt$ to refer to the set of pairs not connected in $\fopt$, for which $\OPT$ pays the penalties. Then, the cost of $\OPT$ is given by $cost(\OPT)=c(\fopt)+\pi(\qopt)$.

In the following, when we refer to coloring, we specifically mean the coloring performed in the first call of $\pcsfthree$ on instance $\I$. In particular, when we mention \dynamiccoloring, we are referring to the final \dynamiccoloring~of the first call of $\pcsfthree$ on instance $\I$. The values $\ys$, $\ysij$, and $\yij$ used in the analysis all refer to the corresponding values in the final \staticcoloring~and \dynamiccoloring.

\begin{definition}
\label{def:tbl}
For an instance $I$, we define four sets to categorize the pairs based on their connectivity in both the optimal solution $\OPT$ of $I$ and the result of $\pcsfthree(I)$, denoted as $(\Q_1, \Fp_1)$:

\begin{itemize}
    \item Set $\A$ contains pairs $(i, j)$ that are connected in the optimal solution and are not in the set $\Q_1$ returned by \pcsfthree.
    \item Set $\B$ contains pairs $(i, j)$ that are connected in the optimal solution and are in the set $\Q_1$ returned by \pcsfthree.
    \item Set $\C$ contains pairs $(i, j)$ that are not connected in the optimal solution and are not in the set $\Q_1$ returned by \pcsfthree.
    \item Set $\D$ contains pairs $(i, j)$ that are not connected in the optimal solution and are in the set $\Q_1$ returned by \pcsfthree.
\end{itemize}
 
Based on the final \dynamiccoloring~of $\pcsfthree(I)$, we define the following values to represent the total duration of coloring with pairs in these sets.
\begin{align*}
    &\va = \sum_{(i,j)\in \A} \yij,
    &\vb = \sum_{(i,j)\in \B} \yij\\
    &\vc = \sum_{(i,j)\in \C} \yij,
    &\vd = \sum_{(i,j)\in \D} \yij\\
\end{align*}
The following table illustrates the connectivity status of pairs in each set:
\begin{center}
\begin{tabular}{cccc}
\toprule
& & \multicolumn{2}{c}{\pcsfthree} \\
\cmidrule{3-4}
& & Connect & Penalty \\
\midrule
\multirow{2}{*}{Optimal Solution} & Connect & \A & \B \\
& Penalty & \C & \D \\
\bottomrule
\end{tabular}
\end{center}
\bigskip
\end{definition}


So far, we have classified pairs into four categories. Now, we categorize the coloring moments involving pairs in set $\B$ into two types: those that color exactly one edge of the optimal solution, and those that color at least two edges of the optimal solution.
Since pairs in $\B$ are connected in the optimal solution, they are guaranteed to color at least one edge of the optimal solution during their coloring moments.
Furthermore, we allocate the value of $\vb$ between $\vba$ and $\vbb$ based on this categorization.

\begin{definition}[Single-edge and multi-edge sets]
\label{def:multi-ring}
For an instance $\I$, we define a set $S \subset \V$ as a single-edge set if it cuts exactly one edge of $\OPT$, i.e., $\dopt(S) = 1$, and as a multi-edge set if it cuts at least two edges of $\OPT$, i.e., $\dopt(S) > 1$.
Let $\vba$ represent the duration of coloring with pairs in $\B$ in \dynamiccoloring~that corresponds to coloring with single-edge sets in \staticcoloring.
Similarly, let $\vbb$ represent the duration of coloring with pairs in $\B$ in \dynamiccoloring~that corresponds to coloring with multi-edge sets in \staticcoloring. 
These values are formally defined as follows:
\begin{align*}
    \vba =\sum_{(i,j) \in \B} \sum_{\substack{S:S\odot(i,j),\\ \dopt(S)=1}} \ysij\\
    \vbb =\sum_{(i,j) \in \B} \sum_{\substack{S:S\odot(i,j),\\ \dopt(S)>1}} \ysij\text{.}
\end{align*}
\end{definition}

Figure \ref{fig:single_cut_ring} displays a single-edge set on the left and a multi-edge set on the right.

\begin{figure}
\begin{center}
\begin{tikzpicture}[scale = 0.15]
\def\dx{9}
\def\dy{5}
\def\r{0.7}
\foreach \x/\y/\R in {-1/-2/4, 2.5/-1/10} {
    \draw[double=gray!30,double distance=4,thick] (\x*\dx, \y*\dy) circle (\R);
};
\foreach \x/\y in {0/0, 2/0, -1/-2, -0.5/1, 2.5/-1, 3/2} {
    \draw[fill=black] (\x*\dx, \y*\dy) circle (\r);
};
\foreach \x/\y/\X/\Y in {0/0/2/0, 0/0/-1/-2, 0/0/-0.5/1, 2/0/2.5/-1, 2/0/3/2} {
    \draw (\x*\dx, \y*\dy) -- (\X*\dx, \Y*\dy);
};
\end{tikzpicture}
\end{center}
\caption{A comparison between a single-cut set (left) and a multi-cut set (right).} \label{fig:single_cut_ring}
\end{figure}
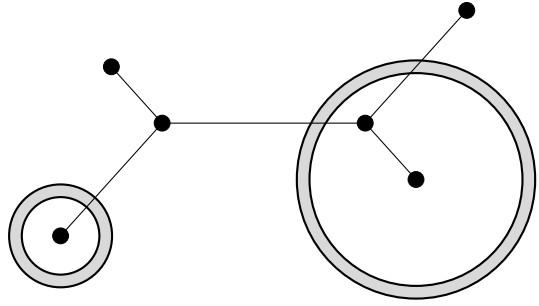

\begin{lemma}
\label{lm:all_vb_cut_opt}
    For an instance $\I$, we have $\vba+\vbb=\vb$.
\end{lemma}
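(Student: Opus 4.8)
The statement to prove is Lemma~\ref{lm:all_vb_cut_opt}: $\vba + \vbb = \vb$. The plan is to unfold the definitions of all three quantities and observe that the equation is essentially a partition identity. Recall $\vb = \sum_{(i,j)\in\B}\yij$, and by the definition of the dynamic coloring duration, $\yij = \sum_{S:\,S\odot(i,j)}\ysij$. Hence
\begin{align*}
\vb = \sum_{(i,j)\in\B}\ \sum_{S:\,S\odot(i,j)} \ysij.
\end{align*}
The first step is therefore to rewrite $\vb$ in this doubly-indexed form so that it is directly comparable to the definitions of $\vba$ and $\vbb$ in Definition~\ref{def:multi-ring}.

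The second step is to justify that for every set $S$ appearing in this sum with $\ysij > 0$, we have $\dopt(S) \ge 1$, so that $S$ is classified as either a single-edge set ($\dopt(S)=1$) or a multi-edge set ($\dopt(S)>1$) and no term is lost. This uses the fact that pairs in $\B$ are connected in the optimal solution $\fopt$: if $S\odot(i,j)$ for some $(i,j)\in\B$, then since $i$ and $j$ are in different sides of $S$ yet connected by a path in $\fopt$, that path must cross the cut, so $\dopt(S)\ge 1$. (This is exactly the observation already stated in the text preceding the lemma — ``pairs in $\B$ are connected in the optimal solution, they are guaranteed to color at least one edge of the optimal solution.'') One should also note $\ysij$ is only defined/nonzero when $S\odot(i,j)$, so the inner sum over $S:\,S\odot(i,j)$ is the relevant one.

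The third step is the actual partition: for each fixed $(i,j)\in\B$, split the inner sum $\sum_{S:\,S\odot(i,j)}\ysij$ into the part over $\{S : S\odot(i,j),\ \dopt(S)=1\}$ and the part over $\{S: S\odot(i,j),\ \dopt(S)>1\}$. By Step~2 these two index sets partition $\{S:\,S\odot(i,j),\ \ysij>0\}$ (terms with $\ysij=0$ contribute nothing either way), so
\begin{align*}
\sum_{S:\,S\odot(i,j)} \ysij
= \sum_{\substack{S:\,S\odot(i,j),\\ \dopt(S)=1}} \ysij
\ +\ \sum_{\substack{S:\,S\odot(i,j),\\ \dopt(S)>1}} \ysij.
\end{align*}
Summing this identity over all $(i,j)\in\B$ and using Definition~\ref{def:multi-ring} gives $\vb = \vba + \vbb$.

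This lemma is essentially bookkeeping, so I do not expect a real obstacle; the only point requiring a (one-line) argument rather than pure symbol-pushing is Step~2 — confirming that every set contributing positive $\ysij$ to a pair in $\B$ genuinely cuts at least one edge of $\OPT$, which is what guarantees the single-edge/multi-edge dichotomy is exhaustive and hence that the split loses nothing.
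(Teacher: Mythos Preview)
Your proposal is correct and takes essentially the same approach as the paper: both argue that any set $S$ cutting a pair in $\B$ must cut at least one edge of $\fopt$ (since such pairs are connected in $\OPT$), so the single-edge/multi-edge dichotomy is exhaustive and the sum splits exactly into $\vba+\vbb$. Your write-up is simply a more explicit unfolding of the same one-line observation.
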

\begin{proof}
    Since pairs in $\B$ are connected by the optimal solution $\OPT$, any set $S$ cutting a pair in $\B$ must cut at least one edge of $\OPT$. 
    Therefore, $S$ is either a single-edge set or a multi-edge set. 
    Hence, we have $\vba+\vbb=\vb$.
\end{proof}

Now, we use these definitions and categorizations to analyze our algorithm.
All of the following lemmas are based on the assumption that $\rpcsf$ is executed on instance $\I$.
First, in Lemma \ref{lemma:opt_lb}, we provide a lower bound on the cost of the optimal solution, which is $cost(\OPT) \ge \va + \vb + \vbb + \vc + \vd$.
Next, in Lemma \ref{lemma:pcsf3-ub}, we present an upper bound on the output of $\pcsfthree(\I)$, which is $cost_1 \le 2\va + 2\vc + 3\vb + 3\vd$. 
Moreover, in Lemma \ref{lemma:pcsf3-ub-opt}, we show that this value is at most $2cost(\OPT) + \vba - \vbb + \vd$.

Next, we want to bound the output of the recursive call within $\rpcsf$.
In Lemma~\ref{lemma:optr-up}, we initially proof that $cost(\OPTR) \le cost(\OPT) - \vd - \vba$, where $cost(\OPTR)$ represents the cost of the optimal solution for the instance $\R$ defined at Line~\ref{line:construct_R}.
Finally, in Theorem~\ref{thm:main_theorem_with_proof}, we employ induction to demonstrate that $cost(\rpcsf) \le 2cost(\OPT)$. Here, $cost(\rpcsf)$ denotes the cost of the output produced by $\rpcsf$ on instance $\I$.
To accomplish this, we use the same induction to bound the cost of the solution obtained through the recursive call at Line~\ref{line:cost2} by $cost_2 \le 2cost(\OPTR) + \vb + \vd$, and by utilizing Lemma \ref{lemma:optr-up}, we can then conclude that $cost_2 \le 2cost(\OPT) - \vba + \vbb - \vd$.
Taking the average of $cost_1$ and $cost_2$ results in a value that is at most $2cost(\OPT)$.
Consequently, the minimum of these two values, corresponding to the cost of $\rpcsf(\I)$, is at most $2cost(\OPT)$.

\begin{lemma}
\label{lemma:opt_lb}
    For an instance $\I$, We can derive a lower bound for the cost of the optimal solution $\OPT$ as follows:
    $$cost(\OPT) \ge \va + \vb + \vbb + \vc + \vd\text{.}$$ 
\end{lemma}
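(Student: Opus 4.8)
The plan is to bound the penalty part and the edge part of $\OPT$ separately, following the same intuition sketched in the overview for the $3$-approximation. For the penalty part, recall that $\OPT$ pays penalties exactly for the pairs in $\qopt$. The pairs in $\B \cup \D$ are precisely the pairs in $\Q_1$ that are connected in $\OPT$ (for $\B$) or not connected in $\OPT$ (for $\D$); the pairs not connected in $\OPT$ are those in $\C \cup \D$. So $\qopt = \C \cup \D$ (the pairs not connected by $\fopt$), and since $\yij \le \pij$ for every pair in the final \dynamiccoloring~by the \pairconst~(Definition \ref{def:tightness}), we get $\pi(\qopt) = \sum_{(i,j)\in \C \cup \D}\pij \ge \sum_{(i,j)\in \C \cup \D}\yij = \vc + \vd$.

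For the edge part, I would argue that $c(\fopt) \ge \va + \vb + \vbb$. The key observation, exactly as in the Steiner forest argument, is that each moment of coloring with a pair $(i,j)$ in the \dynamiccoloring~corresponds to a moment of coloring with some set $S$ in the \staticcoloring~with $S \odot (i,j)$, and if $(i,j)$ is connected in $\fopt$ then $S$ must cut at least one edge of $\fopt$, so that edge gets colored (in \staticcoloring, hence the corresponding portion is charged to $(i,j)$ in \dynamiccoloring) during that moment. Pairs in $\A \cup \B$ are exactly the pairs connected in $\fopt$, contributing total duration $\va + \vb$, and each such moment colors at least one edge of $\fopt$; the portion of $\vb$ coming from multi-edge sets, namely $\vbb$, colors at least two edges of $\fopt$ at each such moment, giving an extra $\vbb$ worth of coloring on $\fopt$. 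Since distinct colors never overlap on the same portion of an edge (the \dynamiccoloring~is a genuine coloring derived from the laminar \staticcoloring) and the coloring on any edge never exceeds its length, the total length of $\fopt$ is at least $\va + \vb + \vbb$. Adding the two bounds gives $cost(\OPT) = c(\fopt) + \pi(\qopt) \ge (\va + \vb + \vbb) + (\vc + \vd)$, as claimed.

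The main obstacle — and the only place requiring care — is making the "$c(\fopt) \ge \va + \vb + \vbb$" step rigorous, in particular the double-counting claim for $\vbb$: I must verify that when a multi-edge set $S$ colors at a moment assigned to a pair $(i,j) \in \B$ in the \dynamiccoloring, the full coloring duration is applied to \emph{each} of the $\ge 2$ edges of $\fopt$ cut by $S$ (this is because in \staticcoloring~a set colors \emph{all} its cutting edges simultaneously at the same speed, so the $\ysij$ duration attributed to $(i,j)$ lands on every $\fopt$-edge in $\delta(S)$), and then sum over $S$ and $(i,j)$ to see that the total coloring laid on $\fopt$ is $\sum_{(i,j)\in \A\cup\B}\yij$ plus an additional $\vbb$ from the second edges of multi-edge sets, i.e.\ at least $\va + \vb + \vbb$; since this coloring is laminar and never exceeds edge lengths, it is a lower bound on $c(\fopt)$. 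The remaining pieces — $\qopt = \C \cup \D$ and the inequality $\sum \yij \le \sum \pij$ — are immediate from the definitions.
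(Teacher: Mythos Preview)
Your proposal is correct and follows essentially the same approach as the paper: bound $\pi(\qopt)\ge \vc+\vd$ via the \pairconst, then bound $c(\fopt)$ by the total coloring laid on $\fopt$, written as $\sum_S \dopt(S)\,\ys=\sum_{(i,j)}\sum_{S\odot(i,j)}\dopt(S)\,\ysij$, and use $\dopt(S)\ge 1$ for pairs in $\A\cup\B$ together with $\dopt(S)\ge 2$ on the $\vbb$ portion to obtain $\va+\vb+\vbb$. The only minor imprecision is writing $\qopt=\C\cup\D$; what you actually need (and what the paper uses) is the inclusion $\C\cup\D\subseteq\qopt$, which holds by definition of $\qopt$.
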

\begin{proof}
The optimal solution pays penalties for pairs with labels $\C$ and $\D$ as it does not connect them.
Since $\yij \le \pij$ for each pair $(i,j)$, we can lower bound the penalty paid by $\OPT$ as
\begin{align*}
    \pi(\qopt) \ge \sum_{(i,j)\in (\C \cup \D)} \pij \ge \sum_{(i,j)\in (\C \cup \D)} \yij = \vc + \vd. 
\end{align*}

Now, we want to bound the cost of the forest in the optimal solution by $\va + \vb + \vbb$.
First, it is important to note that each part of an edge will be colored at most once. During the execution of the \staticcoloring, each active set $S$ colors the uncolored parts of all its cutting edges. 
Therefore, when $S$ is an active set, it colors parts of exactly $\dopt(S)$ edges of the optimal solution.
Based on this observation, we can bound the total cost of the edges in the optimal solution by considering the amount of coloring applied to these edges.
\begin{align*}
    \cc(\fopt) &\ge \sum_{S \subset \V} \dopt(S)\cdot\ys \\
    &= \sum_{S \subset \V} \sum_{(i, j): S\odot(i, j)} \dopt(S)\cdot\ysij \tag{$\ys = \sum_{(i, j): S\odot (i, j)}\ysij$}\\ 
    &= \sum_{(i, j) \in \V \times \V} \sum_{S: S \odot (i, j)}  \dopt(S)\cdot\ysij \tag{change the order of summations}\\
    &\ge\sum_{(i,j)\in\A} \sum_{S\odot(i,j)} \dopt(S)\cdot\ysij + \sum_{(i,j)\in\B} \sum_{S\odot(i,j)} \dopt(S)\cdot\ysij\text{.} \tag{$\A \cap \B = \emptyset$}
\end{align*}

For each pair $(i, j) \in (\A \cup \B)$, we know that in the optimal solution $\OPT$, the endpoints of $(i, j)$ are connected. This implies that for every set $S$ satisfying $S \odot (i, j)$, the set $S$ cuts the forest of $\OPT$, i.e., $\dopt(S) \ge 1$.
Based on this observation, we bound the two terms in the summation above separately.
For pairs in $\A$, we have
\begin{align*}
    \sum_{(i,j)\in\A} \sum_{S\odot(i,j)} \dopt(S)\cdot\ysij \ge \sum_{(i,j)\in\A} \sum_{S\odot(i,j)} \ysij = \sum_{(i,j)\in \A} \yij = \va.
\end{align*}

For pairs in $\B$, we have
\begin{align*}
    \sum_{(i,j)\in\B} \sum_{S\odot(i,j)} \dopt(S)\cdot\ysij &=\sum_{(i,j) \in \B} \sum_{\substack{S\odot(i,j),\\ \dopt(S)=1}} \dopt(S)\cdot\ysij + \sum_{(i,j) \in \B} \sum_{\substack{S\odot(i,j),\\ \dopt(S)>1}} \dopt(S)\cdot\ysij\\
    &\ge\sum_{(i,j) \in \B} \sum_{\substack{S\odot(i,j),\\ \dopt(S)=1}} \ysij + \sum_{(i,j) \in \B} \sum_{\substack{S\odot(i,j),\\ \dopt(S)>1}} 2\ysij\\
    &= \vba + 2\vbb\\
    &= \vb + \vbb \tag{Lemma \ref{lm:all_vb_cut_opt}}\text{.}
\end{align*}

Summing up all the components, we have:
$$cost(\OPT) = \cc(\fopt) + \pi(\qopt) \ge \va + \vb + \vbb + \vc + \vd$$
\end{proof}

\begin{lemma}
\label{lemma:single-cut}
    Let $\F$ be an arbitrary forest and $S$ be a subset of vertices in $\F$. If $S$ cuts only one edge $e$ in $\F$,  then removing this edge will only disconnect pairs of vertices cut by $S$.
\end{lemma}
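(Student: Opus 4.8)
The plan is to reason about connectivity in the forest $\F$ before and after deleting the unique cutting edge $e$ of $S$. First I would set up notation: write $e = (a, b)$ with $a \in S$ and $b \notin S$ (this is well-defined since $e \in \deltaS$). Since $\F$ is a forest, $e$ is a bridge, so $\F \setminus \{e\}$ has exactly one more connected component than $\F$; concretely, the connected component $T$ of $\F$ containing $e$ splits into two components $T_a \ni a$ and $T_b \ni b$, and every other component of $\F$ is untouched. Thus the only pairs of vertices whose connectivity status changes are those with one endpoint in $T_a$ and the other in $T_b$.

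The key step is to show $T_a \subseteq S$ and $T_b \cap S = \emptyset$, which immediately gives that any pair separated by the deletion has one endpoint in $S$ and one outside, i.e., is cut by $S$. To see $T_a \subseteq S$: suppose some vertex $w \in T_a$ lies outside $S$. The unique path in $\F$ from $a$ to $w$ stays inside $T_a$ (hence does not use $e$), and it starts in $S$ and ends outside $S$, so it must traverse some edge $e' \in \deltaS$ with $e' \neq e$. This contradicts the hypothesis that $S$ cuts only one edge of $\F$. Symmetrically, if some $w \in T_b$ lies in $S$, the $\F$-path from $b$ to $w$ avoids $e$, starts outside $S$ and ends in $S$, again forcing a second cutting edge — contradiction. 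Hence $T_a \subseteq S$ and $T_b \subseteq V \setminus S$.

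Combining: removing $e$ disconnects exactly the pairs $(u, v)$ with $u \in T_a$ and $v \in T_b$ (all other pairs retain whatever connectivity they had, since no other component is affected and $e$ is the only edge removed). Any such pair has $u \in S$ and $v \notin S$, so $|\{u, v\} \cap S| = 1$, meaning $S \odot (u, v)$, i.e., the pair is cut by $S$. This proves the claim. I do not expect a genuine obstacle here; the only thing to be careful about is the degenerate cases — if $a = b$ it would not be an edge, and if $S$ cuts $e$ then $e$ does join two distinct components after contraction, so $e$ is genuinely a bridge whose removal splits a component — both are handled by the forest structure. The argument is essentially a clean two-line contradiction using the "only one cutting edge" hypothesis, so the write-up should be short.
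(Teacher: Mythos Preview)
Your proof is correct and follows essentially the same approach as the paper's: both use that the unique cutting edge $e$ is a bridge and that any path inside $\F$ avoiding $e$ cannot cross $\deltaS$ (else a second cutting edge would exist). The only cosmetic difference is that you first characterize the two components $T_a \subseteq S$, $T_b \subseteq V\setminus S$ and then read off the conclusion, whereas the paper argues directly on the $i$--$j$ path of a disconnected pair; the underlying contradiction is identical.
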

\begin{proof}
    Consider a pair $(i,j)$ that is disconnected by removing $e$. 
    This pair must be connected in forest $\F$, so there is a unique simple path between $i$ and $j$ in $\F$. 
    This path must include edge $e$, as otherwise, the pair would remain connected after removing $e$. 
    Let the endpoints of $e$ be $u$ and $v$, where $u \in S$ and $v\not\in S$. 
    Without loss of generality, assume that $i$ is the endpoint of the path that is closer to $u$ than $v$. Then $i$ is connected to $u$ through the edges in the path other than $e$. As these edges are not cut by $S$ and $u \in S$, it follows that $i$ must also be in $S$. 
    Similarly, it can be shown that $j$ is not in $S$. Therefore, $S$ cuts the pair $(i,j)$.
\end{proof}

\begin{lemma}
\label{lemma:pcsf3-ub}
    For an instance $\I$, during the first iteration of $\rpcsf(\I)$ where $\pcsfthree(\I)$ is invoked, we can establish an upper bound on the output of $\pcsfthree$ as follows:
    $$cost_1 \le 2\va + 2\vc + 3\vb + 3\vd\text{.}$$
\end{lemma}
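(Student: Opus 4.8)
\emph{Proof plan.} Since $cost_1 = \cc(\Fp_1) + \pi(\Q_1)$, I would bound the two terms separately and add them. For the penalty term, recall that $\reducetightpairs$ returns $\Q_1 = \{(i,j) : \yij = \pij\}$, the tight pairs of the final \dynamiccoloring, and that by Definition~\ref{def:tbl} the sets $\B$ and $\D$ form a partition of $\Q_1$ (into pairs connected, respectively not connected, in $\OPT$). Hence
\[
\pi(\Q_1) \;=\; \sum_{(i,j)\in \B\cup\D}\pij \;=\; \sum_{(i,j)\in \B\cup\D}\yij \;=\; \vb+\vd .
\]
For the forest term I would prove $\cc(\Fp_1)\le 2\sum_{S\subset\V}\ys$, and then use $\sum_{S\subset\V}\ys=\sum_{(i,j)\in\V\times\V}\yij=\va+\vb+\vc+\vd$: the first equality holds because the final \staticcoloring\ is valid (Lemma~\ref{lemma:always-valid}), so by Lemma~\ref{lm:valid_dynamic_coloring} the max-flow defining the final \dynamiccoloring\ has value $\sum_S\ys$ and therefore $\sum_{(i,j):S\odot(i,j)}\ysij=\ys$ for every $S$, and summing over $S$ gives $\sum_S\ys=\sum_S\sum_{(i,j):S\odot(i,j)}\ysij=\sum_{(i,j)}\yij$; the second equality is immediate since $\A,\B,\C,\D$ partition $\V\times\V$. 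Combining, $cost_1 \le 2(\va+\vb+\vc+\vd)+(\vb+\vd)=2\va+2\vc+3\vb+3\vd$, which is the claim.

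The real content, and the step I expect to be the main obstacle, is the forest bound $\cc(\Fp_1)\le 2\sum_S\ys$; this is the PCSF analogue of the Steiner-forest argument sketched in Section~\ref{sec:3_apx_alg}, and I would carry it out by tracking the coloring of the edges of $\Fp_1$ over the run of $\pcsfthree$. Fix any moment of the algorithm and contract every connected component of the current forest $\F$. Since $\Fp_1\cup\F$ is contained in a forest, the edges of $\Fp_1$ not yet in $\F$ project to a forest $T$ whose vertices are (a subset of) the components in $\currentsets$; moreover a component $C$ cuts exactly $d_T(C)$ edges of $\Fp_1$, because no edge of the forest $\F$ can be cut by a component of $\F$. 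By the coloring rule, the edges of $\Fp_1$ being colored at this instant are precisely the $T$-edges incident to at least one active component, each at a rate equal to its number of active endpoints, so $\Fp_1$ is colored at instantaneous rate $\sum_{S\in\activesets}d_T(S)$, whereas $\sum_S\ys$ grows at rate $|\activesets|$.

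It then suffices to show $\sum_{S\in\activesets}d_T(S)\le 2|\activesets|$, and the key claim for this is that \emph{every leaf of $T$ is an active component}. Indeed, if an inactive component $C$ were a leaf, then $C$ cuts exactly one edge $e$ of $\Fp_1$, so by Lemma~\ref{lemma:single-cut} removing $e$ from $\Fp_1$ disconnects only pairs cut by $C$; but an inactive set is tight (Lemma~\ref{lemma:rem-tight}), hence by Lemma~\ref{lemma:tight-set-tight-pairs} every pair cut by $C$ is tight, so $e$ could be deleted from $\Fp_1$ without disconnecting any non-tight pair --- contradicting that (by the construction of $\Fp_1$ together with Lemma~\ref{lm:F_connect_not_tight_pairs}) every edge of $\Fp_1$ lies on a path between the endpoints of some non-tight pair. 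Given the claim, every inactive vertex of $T$ has degree at least $2$, so writing $V(T)$, $E(T)$ for the vertices and edges of $T$ and using $|E(T)|\le|V(T)|-1$,
\[
\sum_{S\in\activesets}d_T(S)=2|E(T)|-\!\!\sum_{S\notin\activesets}\!\!d_T(S)\le 2\bigl(|V(T)|-1\bigr)-2\bigl(|V(T)|-|\activesets\cap V(T)|\bigr)\le 2|\activesets| .
\]
Integrating this rate inequality over the entire execution, and using that every edge of $\Fp_1\subseteq\F$ was fully colored at the time it was added (so its total colored length is $\ce$, giving colored length of $\Fp_1$ equal to $\cc(\Fp_1)$), yields $\cc(\Fp_1)\le 2\sum_S\ys$ and completes the proof. (The slack of $2$ in the last display is precisely what later enables the sharper $2-\tfrac1n$ bound.)
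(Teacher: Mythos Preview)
Your proposal is correct and follows essentially the same route as the paper: split $cost_1$ into the penalty term $\pi(\Q_1)=\vb+\vd$ (using tightness of pairs in $\Q_1$) and the forest term, then bound $\cc(\Fp_1)\le 2\sum_S\ys$ via the Goemans--Williamson contraction argument, showing that inactive components cannot be leaves by combining Lemmas~\ref{lemma:rem-tight}, \ref{lemma:tight-set-tight-pairs}, and~\ref{lemma:single-cut} with the minimality of $\Fp_1$. The only cosmetic difference is that the paper phrases the rate comparison over discrete steps of size $\Delta$ rather than instantaneously, and explicitly removes isolated inactive vertices from the contracted graph $H$ (which you handle implicitly by defining $T$ only on components incident to an $\Fp_1$-edge).
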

\begin{proof}
Since $cost_1$ is the total cost of $\pcsfthree(\I)$, we should bound $\pi(\Q_1) + \cc(\Fp_1)$.
First, let's observe that $\pcsfthree$ pays the penalty for exactly the pairs $(i, j)$ in $\B \cup \D$, where $\B \cup \D = \Q_1$. 
Since every pair in $\Q_1$ is tight, we have $\pij = \yij$ for these pairs.
Therefore, the total penalty paid by $\pcsfthree$ can be bounded by
$$\pi(\Q_1) = \sum_{(i, j) \in (\B \cup \D)} \pij = \sum_{(i, j) \in (\B \cup \D)} \yij = \vb + \vd\text{.}$$

Now, it suffices to show that $c(\Fp_1) \le 2(\va+\vb+\vc+\vd)$. 
We can prove this similarly to the proof presented by Goemmans and Williamson in \cite{DBLP:journals/siamcomp/GoemansW95}.
Since each pair belongs to exactly one of the sets $\A$, $\B$, $\C$, and $\D$, we can observe that
$$\va+\vb+\vc+\vd = \sum_{(i, j) \in \V \times \V} \yij = \sum_{S\subset \V} \ys\text{.}$$
Therefore, our goal is to prove that the cost of $\Fp_1$ is at most $2\sum_{S\subset \V} \ys$  using properties of \staticcoloring.
To achieve this, we show that the portion of edges in $\Fp_1$ colored during each step of $\pcsfthree$ is at most twice the total increase in the $\ys$ values during that step. 
Since every edge in the forest $\Fp_1$ is fully colored by $\pcsfthree$, this will establish the desired inequality.

Now, let's consider a specific step of the procedure $\pcsfthree$ where the $\ys$ values of the active sets in $\activesets$ are increased by $\Delta$. In this step, the total proportion of edges in $\Fp_1$ that are colored by an active set $S$ is $\Delta d_{\Fp_1}(S)$. Therefore, we want to prove that
$$
\Delta \sum_{S\in\activesets} d_{\Fp_1}(S) \le 2\Delta\cdot|\activesets|\text{,}
$$
where the left-hand side represents the length of coloring on the edges of $\Fp_1$ in this step, while the right-hand side represents twice the total increase in $\ys$ values.

Consider the graph $H$ formed from $\Fp_1$ by contracting each connected component in $\currentsets$ at this step in the algorithm. As the edges of forest $\F$ at this step and $\Fp_1$ are a subset of the forest $\F$ at the end of $\pcsfthree$, the graph $H$ should be a forest. 
If $H$ contains a cycle, it contradicts the fact that $\F$ at the end of $\pcsfthree$ is a forest.

In forest $H$, each vertex represents a set $S \in \currentsets$, and the neighboring edges of this vertex are exactly the edges in $\delta(S)\cap\Fp_1$. 
We refer to the vertices representing active sets as active vertices, and the vertices representing inactive sets as inactive vertices. 
To simplify the analysis, we remove any isolated inactive vertices from $H$.

Now, let's focus on the inactive vertices in $H$. 
Each inactive vertex must have a degree of at least $2$ in $H$. 
Otherwise, if an inactive vertex $v$ has a degree of $1$, consider the only edge in $H$ connected to this vertex. 
For this edge not to be removed in the final step of Algorithm \ref{alg:pcsf3} at Line \ref{line:create_F'}, there must exist a pair outside of $\Q_1$ that would be disconnected after deleting this edge. 
However, since vertex $v$ is inactive, its corresponding set $S$ becomes tight before this step. 
According to Lemma \ref{lemma:rem-tight}, $S$ will remain tight afterward. 
As a result, by Lemma \ref{lemma:tight-set-tight-pairs}, any pair cut by $S$ will also be tight in the final coloring and will be included in $\Q_1$. 
By applying Lemma \ref{lemma:single-cut}, we can conclude that the only pairs disconnected by removing this edge would be the pairs cut by $S$, which we have shown to be in $\Q_1$. 
Therefore, an inactive vertex cannot have a degree of $1$, and all inactive vertices in $H$ have a degree of at least $2$. 
Let $V_a$ and $V_i$ represent the sets of active and inactive vertices in $H$, respectively.
We have
\begin{align*}
        \sum_{S\in\activesets} d_{\Fp_1}(S) 
        &= 
        \sum_{v\in V_a} d_H(v)\\
        &=\sum_{v\in V_a\cup V_i} d_H(v) - \sum_{v\in V_i} d_H(v) \\
        &\le 2(\lvert V_a \rvert + \lvert V_i\rvert)-\sum_{v\in V_i} d_H(v) \tag{$H$ is a forest}\\
        &\le 2(\lvert V_a \rvert + \lvert V_i\rvert)- 2\lvert V_i \rvert \tag{$d_H(v)\ge 2$ for $v\in V_i$}\\
        &\le 2(\lvert V_a\rvert) = 2\lvert\activesets\rvert\text{.}
\end{align*}
This completes the proof.
\end{proof}

\begin{lemma}
\label{lemma:pcsf3-ub-opt}
    For an instance $\I$, during the first iteration of $\rpcsf(\I)$ where $\pcsfthree(\I)$ is invoked, we can establish an upper bound on the output of $\pcsfthree$ as follows:
    $$cost_1 \le 2cost(\OPT) + \vba - \vbb + \vd$$
\end{lemma}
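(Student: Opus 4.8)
The plan is to obtain this bound purely by combining the two inequalities already established, namely the upper bound on $cost_1$ from Lemma~\ref{lemma:pcsf3-ub} and the lower bound on $cost(\OPT)$ from Lemma~\ref{lemma:opt_lb}, together with the identity $\vba + \vbb = \vb$ from Lemma~\ref{lm:all_vb_cut_opt}. No new structural argument about the coloring is needed; the work is entirely bookkeeping among the quantities $\va, \vb, \vba, \vbb, \vc, \vd$.

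Concretely, I would start from Lemma~\ref{lemma:pcsf3-ub}, which gives $cost_1 \le 2\va + 2\vc + 3\vb + 3\vd$, and split off one copy of $\vd$, writing $cost_1 \le \big(2\va + 2\vc + 3\vb + 2\vd\big) + \vd$. Then I would show that the parenthesized quantity is at most $2cost(\OPT) + \vba - \vbb$. For this, apply Lemma~\ref{lemma:opt_lb} in the form $2cost(\OPT) \ge 2\va + 2\vb + 2\vbb + 2\vc + 2\vd$, so that $2cost(\OPT) + \vba - \vbb \ge 2\va + 2\vc + 2\vd + \big(2\vb + \vbb + \vba\big)$. Now substitute $\vba + \vbb = \vb$ (Lemma~\ref{lm:all_vb_cut_opt}) to get $2\vb + \vbb + \vba = 3\vb$, hence $2cost(\OPT) + \vba - \vbb \ge 2\va + 2\vc + 2\vd + 3\vb$, which is exactly the parenthesized quantity above. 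Chaining the inequalities yields $cost_1 \le 2cost(\OPT) + \vba - \vbb + \vd$, as desired.

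There is effectively no hard step here: the only thing to be careful about is tracking signs and the single leftover $\vd$ term correctly, and making sure the substitution $\vba+\vbb=\vb$ is applied to the right combination $2\vb+\vba+\vbb$ rather than being used to eliminate $\vbb$ everywhere. If one prefers, the whole argument can be presented as a single short display chain beginning with $cost_1 \le 2\va+2\vc+3\vb+3\vd$ and ending with $2cost(\OPT)+\vba-\vbb+\vd$, citing Lemmas~\ref{lemma:pcsf3-ub}, \ref{lemma:opt_lb}, and~\ref{lm:all_vb_cut_opt} at the appropriate steps.
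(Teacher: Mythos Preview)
Your proposal is correct and is essentially the same argument as the paper's proof: both derive the bound by chaining Lemma~\ref{lemma:pcsf3-ub}, Lemma~\ref{lemma:opt_lb}, and the identity $\vba+\vbb=\vb$ from Lemma~\ref{lm:all_vb_cut_opt}, with only cosmetic differences in how the algebra is arranged.
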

\begin{proof}
We can readily prove this by referring to the previous lemmas.
\begin{align*}
cost_1 \le& 2\va + 2\vc + 3\vb + 3\vd 
\tag{Lemma \ref{lemma:pcsf3-ub}}\\
=& 2(\va + \vb + \vbb + \vc + \vd) + \vb - 2\vbb + \vd\\
\le& 2cost(\OPT) + (\vb - \vbb) - \vbb + \vd \tag{Lemma \ref{lemma:opt_lb}}\\
=& 2cost(\OPT) + \vba - \vbb + \vd \tag{Lemma \ref{lm:all_vb_cut_opt}}.
\end{align*}
\end{proof}

\begin{lemma}
\label{lemma:removeedge}
    For an instance $I$, it is possible to remove a set of edges from $\fopt$ with a total cost of at least $\vba$ while ensuring that the pairs in $\A$ remain connected.
\end{lemma}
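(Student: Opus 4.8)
The plan is to delete from $\fopt$ precisely the edges that are ``cut off'' by the single-edge sets charged in $\vba$. Concretely, let $\mathcal{S}_1=\{S\subset\V : \dopt(S)=1 \text{ and } S\odot(i,j) \text{ with } \ysij>0 \text{ for some } (i,j)\in\B\}$; for $S\in\mathcal{S}_1$ write $e_S$ for the unique edge of $\fopt$ lying in $\deltaS$, and set $R=\{e_S : S\in\mathcal{S}_1\}$. I would then prove the two statements (a) removing $R$ from $\fopt$ leaves every pair of $\A$ connected, and (b) $\cc(R)\ge\vba$; together these give the lemma.

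For (a), suppose toward a contradiction that some $(i,j)\in\A$ is disconnected after removing $R$. Then the unique $i$--$j$ path in the forest $\fopt$ uses some edge $e_S\in R$, so already removing the single edge $e_S$ from $\fopt$ separates $i$ from $j$. Since $\dopt(S)=1$, i.e.\ $S$ cuts only $e_S$ among the edges of $\fopt$, Lemma~\ref{lemma:single-cut} applied with the forest $\fopt$ forces $S\odot(i,j)$. But $S\in\mathcal{S}_1$ cuts some pair of $\B\subseteq\Q_1$, which is tight, with positive assignment $\ysij$; since the final \dynamiccoloring~is minimal, Lemma~\ref{lemma:minimal-cut-tight} implies that every pair cut by $S$ is tight. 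In particular $(i,j)$ is tight, hence $(i,j)\in\Q_1$, contradicting $(i,j)\in\A$, which by Definition~\ref{def:tbl} is disjoint from $\Q_1$.

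For (b), I would rewrite $\vba=\sum_{S:\dopt(S)=1}\sum_{(i,j)\in\B,\,S\odot(i,j)}\ysij$; a single-edge set not in $\mathcal{S}_1$ contributes $0$, and for $S\in\mathcal{S}_1$ the inner sum is at most $\sum_{(i,j):S\odot(i,j)}\ysij=\ys$. Since $S\mapsto e_S$ partitions $\mathcal{S}_1$ over $R$, this yields $\vba\le\sum_{e\in R}\sum_{S\in\mathcal{S}_1:\,e_S=e}\ys$. For a fixed $e\in R$ every such $S$ satisfies $e\in\deltaS$, so $\sum_{S:\,e_S=e}\ys\le\sum_{S:\,e\in\deltaS}\ys$, which is precisely the total length of $e$ that gets colored in the \staticcoloring~and hence is at most $\ce$. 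Summing over $e\in R$ gives $\vba\le\sum_{e\in R}\ce=\cc(R)$.

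The main obstacle is the accounting in (b): a single edge of $\fopt$ can be the cut edge of many different single-edge sets, so one must not charge $\vba$ to that edge more than its cost. This is resolved by the structural fact that distinct sets color disjoint portions of any fixed edge, so $\sum_{S:\,e\in\deltaS}\ys\le\ce$ --- a property of the static coloring guaranteed by the choice of $\Deltae$ in \finddeltae. The conceptual heart, by contrast, is (a): minimality of the dynamic coloring (Lemma~\ref{lemma:minimal-cut-tight}) is exactly what ensures that no pair of $\A$ is cut by any of the single-edge sets whose edges we remove.
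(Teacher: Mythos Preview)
Your proof is correct and follows essentially the same approach as the paper's: define the removed edges as the unique $\fopt$-edges cut by the single-edge sets charged in $\vba$, use Lemma~\ref{lemma:minimal-cut-tight} together with Lemma~\ref{lemma:single-cut} to show no pair of $\A$ is disconnected, and then bound $\vba$ by the total coloring on those edges, which in turn is at most their total cost. Your accounting in part~(b) is spelled out more explicitly than in the paper (which simply asserts ``the total amount of coloring on these removed edges is at least $\vba$'' and ``the color on each edge does not exceed its length''), but the argument is the same.
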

\begin{proof}
Consider a single-edge set $S$ that cuts some pair $(i,j)$ in $\B$ with $\ysij>0$. 
Since $(i,j)$ is in $\B$, it is also in $\Q_1$ and therefore tight. 
By Lemma \ref{lemma:minimal-cut-tight}, any other pair cut by $S$ will also be tight. 
Consequently, the pairs in $\A$ will not be cut by $S$ since they are not tight.
Furthermore, according to Lemma \ref{lemma:single-cut}, if $S$ cuts only one edge $e$ of $\fopt$, then the only pairs that will be disconnected by removing edge $e$ from $\fopt$ are the pairs that are cut by $S$.
However, we have already shown that no pair in $\A$ is cut by $S$. 
Therefore, all pairs in $\A$ will remain connected even after removing edge $e$.
See Figure \ref{fig:remove-edges} for an illustration.

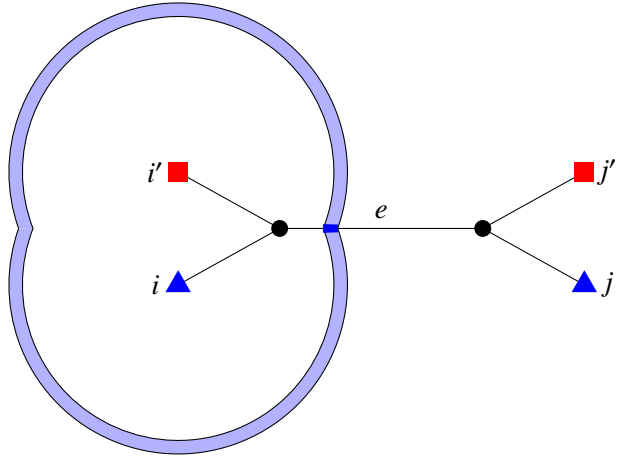
\begin{figure}
\centering
\begin{tikzpicture}[scale = 0.15]
\def\dx{9}
\def\dy{5}
\def\r{0.7}
\def\f{2}
\def\dif{1.2}
\def\d{19.5}
\foreach \x/\y/\R/\c in {-1/-1/15/-1, -1/1/15/1} {
    \draw[fill=blue!30] ([shift=(-\c*\d:\R)]\x*\dx, \y*\dy)  arc (-\c*\d:\c*(180+\d):\R);
    \draw[fill=white] ([shift=(-\c*\d-\c*\f:\R-\dif)]\x*\dx, \y*\dy)  arc (-\c*\d-\c*\f:\c*(180+\d+\f):\R-\dif);
}
\foreach \x/\y/\X/\Y/\t in {0/0/2/0/$e$, 0/0/-1/-1/, 0/0/-1/1/, 2/0/3/-1/, 2/0/3/1/} {
    \draw (\x*\dx, \y*\dy) -- node[above]{\t} (\X*\dx, \Y*\dy);
};
\foreach \x/\y/\R/\c in {-1/-1/15/-1, -1/1/15/1} {
    \draw[line width=1mm, blue]  ([shift=(-\c*\d:\R)]\x*\dx, \y*\dy) -- ([shift=(-\c*\d-\c*\f:\R-\dif)]\x*\dx, \y*\dy);
};
\foreach \x/\y in {0/0, 2/0} {
    \draw[fill=black]  (\x*\dx, \y*\dy) circle (\r);
};
\foreach \x/\y/\dir/\l in {-1/1/1/i', 3/1/-1/j'} {
    \def\rr{\r+0.5}
    \fill[fill=red] ([shift=(45:\rr)]\x*\dx, \y*\dy) -- ([shift=(135:\rr)]\x*\dx, \y*\dy) -- ([shift=(225:\rr)]\x*\dx, \y*\dy) -- ([shift=(315:\rr)]\x*\dx, \y*\dy);
    \node (a) at (\x * \dx - 2 * \dir, \y * \dy) {$\l$};
};
\foreach \x/\y/\dir/\l in {-1/-1/1/i, 3/-1/-1/j} {
    \def\rr{\r+0.6}
    \fill[fill=blue] ([shift=(-30:\rr)]\x*\dx, \y*\dy) -- ([shift=(-150:\rr)]\x*\dx, \y*\dy) -- ([shift=(-270:\rr)]\x*\dx, \y*\dy);
    \node (a) at (\x * \dx - 2 * \dir, \y * \dy) {$\l$};
};
\end{tikzpicture}
\caption{The figure shows the graph of $\fopt$ with pairs $(i, j)$ and $(i', j')$, and a single-edge set colored with pair $(i,j)$ in \dynamiccoloring. Tightness of $(i, j)$ implies tightness of $(i', j')$, and removing edge $e$ does not disconnect pairs in $\A$.} \label{fig:remove-edges} 
\end{figure}

For any single-edge set $S$ that cuts a pair $(i,j)$ in $\B$ with $\ysij>0$, we can safely remove the single edge of $\fopt$ that is cut by $S$.
The total amount of coloring on these removed edges is at least
\begin{align*}
    \sum_{S: \dopt(S)=1}\sum_{\substack{(i,j)\in\B
    \\S\odot(i,j)\\\ysij>0}} \ysij
    =\sum_{S: \dopt(S)=1}\sum_{\substack{(i,j)\in\B
    \\S\odot(i,j)}} \ysij = \vba.
\end{align*}
As the color on each edge does not exceed its length, the total length of the removed edges will also be at least $\vba$.
\end{proof}

Now, we introduce some useful notation to analyze the output of the recursive call. During the execution of $\rpcsf$ on an instance $\I$, it generates a modified instance $\R$ at Line \ref{line:construct_R}, where the penalties for pairs in $\Q_1$ are set to $0$. 
We use the notation $\pi'$ to represent the penalties in the instance $\R$ as they are defined in Lines~\ref{line:initial_pi}-\ref{line:set_pip}.
Since Line \ref{line:check_Q_1_is_empty} ensures that $\pi(\Q_1) \neq 0$, we can conclude that $\R$ is a reduced instance compared to $\I$, meaning that the number of pairs with non-zero penalties is smaller in $\R$ than in $\I$.
Given that we recursively call $\rpcsf$ on instance $\R$, we can bound the output of the recursive call by the optimal solution of $\R$ using induction. 
Let $\OPTR$ be an optimal solution for $\R$. 
We denote the forest of $\OPTR$ as $\foptr$ and the set of pairs not connected by $\foptr$ as $\qoptr$. 
The cost of $\OPTR$ is given by $cost(\OPTR) = \cc(\foptr) + \pi'(\qoptr)$.
We will use these notations in the following lemmas.
 
\begin{lemma}
\label{lemma:optr-up}
For an instance $\I$ and the instance $\R$ constructed at Line \ref{line:construct_R} during the execution of $\rpcsf(\I)$, we have 
$$cost(\OPTR) \le cost(\OPT) - \vd - \vba\text{.}$$ 
\end{lemma}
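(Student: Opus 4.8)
The plan is to exhibit an explicit feasible solution for the reduced instance $\R$ whose cost is at most $cost(\OPT)-\vd-\vba$; since $\OPTR$ is an optimal solution for $\R$, this immediately yields the claimed bound $cost(\OPTR)\le cost(\OPT)-\vd-\vba$.

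First I would record the arithmetic of $cost(\OPT)$ in terms of the partition of Definition~\ref{def:tbl}. Because $\OPT$ pays penalties exactly for the pairs it does not connect, $\qopt=\C\cup\D$, so $cost(\OPT)=\cc(\fopt)+\pi(\C)+\pi(\D)$. Every pair of $\D$ lies in $\Q_1$ and is therefore tight in the final dynamic coloring, so $\pi(\D)=\sum_{(i,j)\in\D}\yij=\vd$. Hence $\cc(\fopt)+\pi(\C)=cost(\OPT)-\vd$.

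Next I would construct the candidate solution for $\R$. Apply Lemma~\ref{lemma:removeedge} to obtain a set of edges $E_{\mathrm{rem}}\subseteq\fopt$ with $\cc(E_{\mathrm{rem}})\ge\vba$ such that all pairs in $\A$ stay connected in $\fopt\setminus E_{\mathrm{rem}}$. Consider the solution for $\R$ that uses the forest $\F:=\fopt\setminus E_{\mathrm{rem}}$ and pays penalties for $\Q:=\B\cup\C\cup\D$. For feasibility, note that the only pairs outside $\Q$ are those in $\A$, and these remain connected in $\F$ by the choice of $E_{\mathrm{rem}}$. For the cost, recall that in $\R$ the penalty $\pi'_{ij}$ equals $0$ for every pair in $\Q_1=\B\cup\D$ and equals $\pij$ for $(i,j)\in\C$ (since $\C\cap\Q_1=\emptyset$); thus $\pi'(\Q)=\pi(\C)$. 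Combining with $\cc(\F)=\cc(\fopt)-\cc(E_{\mathrm{rem}})\le\cc(\fopt)-\vba$, the cost of this solution in $\R$ is at most $\cc(\fopt)-\vba+\pi(\C)=cost(\OPT)-\vd-\vba$. Since $\OPTR$ is optimal for $\R$, $cost(\OPTR)$ is at most this value, completing the proof.

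The step I would be most careful about is the feasibility check together with the bookkeeping of which pairs retain their original penalty in $\R$: one must verify that discarding $E_{\mathrm{rem}}$ cannot disconnect any pair we are obliged to connect. This is precisely why we move both $\B$ and $\D$ into the penalty set of $\R$ (their penalties there are $0$), and why the guarantee of Lemma~\ref{lemma:removeedge} about the pairs in $\A$ is exactly what is needed. Everything else is routine arithmetic over the partition $\A\cup\B\cup\C\cup\D$.
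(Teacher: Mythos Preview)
Your proposal is correct and follows essentially the same approach as the paper: construct a feasible solution for $\R$ by taking $\fopt$ minus the edges supplied by Lemma~\ref{lemma:removeedge}, paying penalties for $\B\cup\C\cup\D$, and then compare its cost to $cost(\OPT)$ using tightness of the pairs in $\D$ and the fact that $\pi'$ vanishes on $\B\cup\D$. The arithmetic and the feasibility argument you give match the paper's proof almost line for line.
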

\begin{proof}
To prove this lemma, we first provide a solution for the instance $\R$ given the optimal solution of the instance $\I$, denoted as $\OPT$, and we show that the cost of this solution is at most $cost(\OPT) - \vd - \vba$. 
Since $\OPTR$ is a solution for the instance $\R$ with the minimum cost, we can conclude that $cost(\OPTR) \le cost(\OPT) - \vd - \vba$.

To provide the aforementioned solution for the instance $\R$, we start with the solution $\OPT$ consisting of the forest $\fopt$ and the set of pairs for which penalties were paid, denoted as $\qopt$. We create a new set $\qoptp = \qopt \cup \B =  \C \cup \D \cup \B$ and a forest $\foptp$ initially equal to $\fopt$. Since $\fopt$ connects pairs in $\A$ and $\B$, but we add pairs in $\B$ to $\qoptp$ and pay their penalties, we can remove edges from $\foptp$ that do not connect pairs in $\A$.

Let's focus on $\qoptp$ first. Since the penalties for pairs in $\B$ and $\D$ are set to $0$ in $\pi'$, we have
\begin{align*}
    \pi'(\qoptp) &= \pi'(\B) + \pi'(\C) + \pi'(\D) \tag{$\qoptp = \B \cup \C \cup \D$}\\
    &= \pi'(\C) \tag{$\pi'(\B)=\pi'(\D)=0$}\\
    &= \pi(\C) \\
    &= \pi(\qopt) - \pi(\D) \tag{$\qopt = \C \cup \D$}\\
    &= \pi(\qopt) - \sum_{(i, j) \in \D}\pij \\
    &= \pi(\qopt) - \sum_{(i, j) \in \D}\yij \tag{pairs in $\D$ are tight}\\
    &= \pi(\qopt) - \vd\text{.}
\end{align*}

Moreover, using Lemma \ref{lemma:removeedge}, we construct $\foptp$ from $\fopt$ by removing a set of edges with a total length of at least $\vba$, while ensuring that the remaining forest still connects all the pairs in $\A$. Therefore, we can bound the cost of $\foptp$ as
$$\cc(\foptp) \le \cc(\fopt) - \vba\text{.}$$

Summing it all together, we have
$$cost(\OPTR) \le \cc(\foptp) + \pi'(\qoptp) \le (\cc(\fopt) - \vba) + (\pi(\qopt) - \vd) = cost(\OPT) - \vd - \vba\text{,}$$
where the first inequality comes from the fact that $\OPTR$ is the optimal solution for the instance $\R$, while $(\qoptp, \foptp)$ gives a valid solution, i.e., $\foptp$ connects every pair that is not in $\qoptp$.
\end{proof}

Finally, we can bound the cost of the output of $\rpcsf$. For an instance $\I$, let's denote the cost of the output of $\rpcsf(\I)$ as $\costalg$.
In Theorem \ref{thm:main_theorem_with_proof}, we prove that the output of $\rpcsf$ is a 2-approximate solution for the PCSF problem.

\begin{theorem}
\label{thm:main_theorem_with_proof}
    For an instance $\I$, the output of $\rpcsf(\I)$ is a 2-approximate solution to the optimal solution for $\I$, meaning that 
    $$\costalg \le 2 cost(\OPT)\text{.}$$
\end{theorem}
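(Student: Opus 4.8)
The plan is to establish $\costalg \le 2cost(\OPT)$ by induction on the number of pairs with positive penalty in the instance. The recursion in $\rpcsf$ respects this measure: whenever the recursive branch is taken, Line~\ref{line:check_Q_1_is_empty} guarantees $\pi(\Q_1)\neq 0$, so $\Q_1$ contains at least one pair of positive penalty, and the instance $\R$ built at Line~\ref{line:construct_R} zeros out the penalties of exactly the pairs in $\Q_1$ while leaving all others unchanged; hence $\R$ has strictly fewer positive-penalty pairs than $\I$. This also means the induction bottoms out precisely in the case $\pi(\Q_1)=0$.

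For the base case $\pi(\Q_1)=0$, the algorithm returns $(\Q_1,\Fp_1)$, whose cost with respect to $\I$ is $\cc(\Fp_1)+\pi(\Q_1)$. Since $\Q_1=\B\cup\D$ and every pair in $\Q_1$ is tight, $\pi(\Q_1)=\vb+\vd$, so $\pi(\Q_1)=0$ forces $\vb=\vd=0$, and therefore $\vba=\vbb=0$ by Lemma~\ref{lm:all_vb_cut_opt}. Substituting these values into Lemma~\ref{lemma:pcsf3-ub-opt} gives $\cc(\Fp_1)+\pi(\Q_1) \le 2cost(\OPT)$ directly, so the returned solution is $2$-approximate.

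For the inductive step, assume $\pi(\Q_1)\neq 0$, so the algorithm computes both $cost_1$ and $cost_2$ and returns the cheaper solution; hence $\costalg=\min(cost_1,cost_2)$. The bound $cost_1\le 2cost(\OPT)+\vba-\vbb+\vd$ is Lemma~\ref{lemma:pcsf3-ub-opt}. For $cost_2$, I would apply the induction hypothesis to $\R$: since $\rpcsf(\R)=(\Q_2,\Fp_2)$ and $\R$ has penalty function $\pi'$, we get $\cc(\Fp_2)+\pi'(\Q_2)\le 2cost(\OPTR)$. To convert from $\pi'$ to $\pi$, observe that $\pi$ and $\pi'$ differ only on $\Q_1$, so $\pi(\Q_2)=\pi'(\Q_2)+\sum_{(i,j)\in\Q_1\cap\Q_2}\pij\le\pi'(\Q_2)+\pi(\Q_1)=\pi'(\Q_2)+\vb+\vd$, giving $cost_2=\cc(\Fp_2)+\pi(\Q_2)\le 2cost(\OPTR)+\vb+\vd$. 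Now plug in Lemma~\ref{lemma:optr-up}, $cost(\OPTR)\le cost(\OPT)-\vd-\vba$, and use $\vb=\vba+\vbb$ (Lemma~\ref{lm:all_vb_cut_opt}) to obtain $cost_2\le 2cost(\OPT)-\vba+\vbb-\vd$. Adding the two displayed bounds yields $cost_1+cost_2\le 4cost(\OPT)$, whence $\costalg=\min(cost_1,cost_2)\le 2cost(\OPT)$. A brief remark that $(\Q_2,\Fp_2)$ is feasible for $\I$ — it is, since $\R$ uses the same graph as $\I$ and $\Fp_2$ connects every pair outside $\Q_2$ — finishes the proof.

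The main obstacle is the $cost_2$ estimate: it must correctly account for the penalties that the recursive solution re-pays on pairs in $\Q_1$ (the $\pi$-versus-$\pi'$ gap), and it must combine the induction hypothesis with Lemma~\ref{lemma:optr-up} so that the residual terms $\vba,\vbb,\vd$ emerge with exactly the signs opposite to those in the $cost_1$ bound. This sign cancellation under averaging is the crux, and it is ultimately what forces $\pcsfthree$ to minimize the number of tight pairs (which powers Lemma~\ref{lemma:removeedge}, and hence the $-\vba$ term in Lemma~\ref{lemma:optr-up}); everything else is routine bookkeeping over the already-established Lemmas~\ref{lemma:opt_lb}, \ref{lemma:pcsf3-ub-opt}, and \ref{lemma:optr-up}.
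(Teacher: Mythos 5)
Your proof is correct and follows essentially the same route as the paper: induction on the number of positive-penalty pairs, the base case via Lemma \ref{lemma:pcsf3-ub-opt} with $\vb=\vba=\vbb=\vd=0$, and the inductive step by converting the induction-hypothesis bound on $\cc(\Fp_2)+\pi'(\Q_2)$ into a bound on $cost_2$ via the $\pi$-versus-$\pi'$ gap, plugging in Lemma \ref{lemma:optr-up}, and averaging against the $cost_1$ bound so the $\vba-\vbb+\vd$ terms cancel. The only addition relative to the paper's write-up is your explicit remark on the feasibility of $(\Q_2,\Fp_2)$ for $\I$, which the paper leaves to the surrounding discussion.
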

\begin{proof}
    We will prove the claim by induction on the number of pairs $(i,j)$ with penalty $\pij>0$ in instance $\I$.

    First, the algorithm makes a call to the \hyperref[alg:pcsf3]{\pcsfthree} procedure to obtain a solution $(\Q_1,\Fp_1)$. 
    If $\pi(\Q_1)=0$ for this solution, which means no cost is incurred by paying penalties, the algorithm terminates and returns this solution at Line \ref{line:return-early}.
    This will always be the case in the base case of our induction where for all pairs $(i, j) \in \Q_1$, penalties $\pij$ are equal to $0$.
    Since every pair $(i, j) \in \Q_1$ is tight, we have $\yij=\pij=0$.
    Given that $\B$ and $\D$ are subsets of $\Q_1$, we can conclude that $\vb=\vba=\vbb=\vd=0$.
    Now, by Lemma \ref{lemma:pcsf3-ub-opt}, we have
    $$cost_1 \le 2cost(\OPT) + (\vba - \vbb) + \vd = 2cost(\OPT)\text{.}$$
    Therefore, when $\rpcsf$ returns at Line \ref{line:return-early}, we have
    $$\costalg = cost_1 \le 2cost(\OPT)\text{,}$$
    and we obtain a 2-approximation of the optimal solution.

    Now, let's assume that \pcsfthree~pays penalties for some pairs, i.e., $\pi(\Q_1)\neq 0 $. 
    Therefore, since we set the penalty of pairs in $\Q_1$ equal to $0$ for instance $\R$ at Line \ref{line:pip_set_zero}, the number of pairs with non-zero penalty in instance $\R$ is less than in instance $\I$.
    By induction, we know that the output of $\rpcsf$ on instance $\R$, denoted as $(\Q_2, \Fp_2)$, has a cost of at most $2cost(\OPTR)$.
    That means
    $$\cc(\Fp_2) + \pi'(\Q_2) \le 2cost(\OPTR)\text{.}$$
    In addition, we have 
    \begin{align*}
        \pi(\Q_2) = \pi(\Q_2 \setminus \Q_1) + \pi(\Q_2 \cap \Q_1)
        \le \pi'(\Q_2 \setminus \Q_1) + \pi(\Q_1)
        \le \pi'(\Q_2) + \pi(\Q_1)\text{,}
    \end{align*}
    where we use the fact that $\pi'_{ij}=\pij$ for $(i, j) \notin \Q_1$.
    Now we can bound the cost of the solution $(\Q_2, \Fp_2)$, denoted as $cost_2$, by
    \begin{align*}
        cost_2 &= \cc(\Fp_2) + \pi(\Q_2) \\
        &\le \cc(\Fp_2) + \pi'(\Q_2) + \pi(\Q_1) \\
        &\le 2cost(\OPTR) + \pi(\Q_1) \tag{By induction}\\
        &\le 2\left(cost(\OPT) - \vd - \vba\right) + \sum_{(i, j) \in \Q_1} \pij \tag{Lemma \ref{lemma:optr-up}}\\
        &= 2\left(cost(\OPT) - \vd - \vba\right) + \sum_{(i, j) \in \Q_1} \yij \tag{pairs in $\Q_1$ are tight}\\
        &= 2cost(\OPT) - 2\vd - 2\vba + \vb + \vd \\
        &= 2cost(\OPT) - \vba + \vbb - \vd\text{.} \tag{Lemma \ref{lm:all_vb_cut_opt}}
    \end{align*}
    
    Furthermore, according to Lemma \ref{lemma:pcsf3-ub-opt}, the cost of the solution $(\Q_1, \Fp_1)$, denoted as $cost_1$, can be bounded by
    \begin{align*}
    cost_1 \le 2\OPT + \vba - \vbb + \vd \text{.}
    \end{align*}

    Finally, in Line \ref{line:return-min}, we return the solution  with the smaller cost between $(\Q_1, \Fp_1)$ and $(\Q_2, \Fp_2)$. Based on the upper bounds above on both solutions, we know that
    \begin{align*}
        \costalg &= \min(cost_1, cost_2) \le \frac{1}{2}(cost_1 + cost_2) \\ 
        &\le \frac{1}{2} (2cost(\OPT) + \vba - \vbb + \vd + 2cost(\OPT) - \vba + \vbb - \vd) \\
        &= \frac{1}{2} (4cost(\OPT)) = 2cost(\OPT)\text{,}
    \end{align*}
    and we obtain a 2-approximation of the optimal solution.
    This completes the induction step and the proof of the theorem.
\end{proof}

\begin{theorem}
    The runtime of the \rpcsf~algorithm is polynomial.
\end{theorem}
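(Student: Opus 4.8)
The plan is to bound two ingredients separately: the depth of the recursion in \rpcsf, and the work performed at each recursive level. Since \rpcsf is structurally a linear recursion — each invocation makes at most one recursive call, at Line \ref{line:get_recursive_output} — its total runtime is at most the recursion depth times the maximum work done in a single invocation (excluding the recursive call itself).

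First I would bound the per-call work. Each invocation of \rpcsf calls \pcsfthree~once, which runs in polynomial time by Lemma \ref{lm:pcsfthree_polynomial}. The remaining operations are all plainly polynomial: constructing the penalty vector $\pi'$ and the instance $\R$ requires iterating over the $O(n^2)$ pairs (Lines \ref{line:initial_pi}--\ref{line:construct_R}); computing $cost_1$ and $cost_2$ amounts to summing edge costs of the returned forests and penalties of the returned pair-sets; and the final comparison at Line \ref{line:return-min} is trivial. Hence one invocation, not counting the recursive call, costs polynomial time.

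Next I would bound the recursion depth via the invariant that the number of pairs $(i,j)$ with strictly positive penalty strictly decreases along each recursive call. The recursive call at Line \ref{line:get_recursive_output} is reached only when $\pi(\Q_1)\neq 0$ (otherwise the algorithm returns at Line \ref{line:return-early}), so $\Q_1$ contains at least one pair with positive penalty. In the constructed instance $\R$, every pair of $\Q_1$ has its penalty set to $0$ (Line \ref{line:pip_set_zero}), while every pair outside $\Q_1$ keeps its original penalty unchanged (Line \ref{line:set_pip}); thus no new positive-penalty pair is created and at least one is eliminated. Since the number of positive-penalty pairs is a non-negative integer initially at most $n^2$, the recursion can descend at most $n^2$ levels before reaching a state in which $\pi(\Q_1)=0$, where it terminates at Line \ref{line:return-early}. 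Combining with the per-call bound, the total runtime is at most $n^2$ times a polynomial, hence polynomial.

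The only point that genuinely needs care — and the main obstacle — is the strict-decrease invariant: one must observe both that $\pi(\Q_1)\neq 0$ forces at least one positive-penalty pair into $\Q_1$ (which relies only on the control flow at Line \ref{line:check_Q_1_is_empty}), and that the penalties of pairs outside $\Q_1$ are copied verbatim in $\pi'$ rather than possibly increased, so that the count of positive-penalty pairs is genuinely monotone and the recursion genuinely bottoms out rather than cycling. Everything else is routine.
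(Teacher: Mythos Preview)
Your proposal is correct and follows essentially the same approach as the paper: bound the recursion depth by $O(n^2)$ via the strictly decreasing count of positive-penalty pairs (using the check at Line \ref{line:check_Q_1_is_empty} to guarantee at least one such pair is zeroed out before each recursive call), and bound the per-level work by appealing to Lemma \ref{lm:pcsfthree_polynomial} plus the $O(n^2)$ bookkeeping. Your write-up is, if anything, slightly more explicit than the paper's in justifying the strict-decrease invariant.
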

\begin{proof}
    Let $n$ be the number of vertices in the input graph. There are $O(n^2)$ pairs of vertices in total. 
    Whenever \rpcsf~calls itself recursively, the number of pairs with non-zero penalties decreases by at least one, otherwise \rpcsf~will return at Line \ref{line:return-early}. 
    Thus, the recursion depth is polynomial in $n$. 
    At each recursion level, the algorithm only runs \pcsfthree~on one instance of the problem and performs $O(n^2)$ additional operations. 
    By Lemma \ref{lm:pcsfthree_polynomial}, we know that \pcsfthree~runs in polynomial time.
    Therefore, the total run-time of \rpcsf~will also be polynomial.
\end{proof}

\subsection{Improving the approximation ratio}
\label{sec:3-2-n-approax}
In this section, we briefly explain how a tighter analysis can be used to show that the approximation ratio of the \rpcsf~algorithm is at most $2-\frac{1}{n}$, where $n$ is the number of vertices in the input graph $G$. This approximation ratio more closely matches the approximation ratio of $2-\frac{2}{n}$ for the Steiner Forest problem.

We first introduce an improved version of Lemmas \ref{lemma:pcsf3-ub} and \ref{lemma:pcsf3-ub-opt}.
\begin{lemma}
    \label{lemma:pcsf3-ub-tight}
    For an instance $\I$, during the first iteration of $\rpcsf(\I)$ where $\pcsfthree(\I)$ is invoked, we have the following upper bound
    $$cost_1 \le (2-\frac{2}{n})\cdot\va + (2-\frac{2}{n})\cdot\vc + (3-\frac{2}{n})\cdot\vb + (3-\frac{2}{n})\cdot\vd\text{.}$$
\end{lemma}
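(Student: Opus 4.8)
The plan is to follow the proof of Lemma~\ref{lemma:pcsf3-ub} essentially unchanged for the penalty term, and to sharpen only the forest-cost estimate by the single additive unit that the Goemans--Williamson-style argument leaves unused. First I would record, exactly as in Lemma~\ref{lemma:pcsf3-ub}, that $\pcsfthree$ pays penalties precisely for the tight pairs in $\B\cup\D=\Q_1$, so that $\pi(\Q_1)=\sum_{(i,j)\in\B\cup\D}\yij=\vb+\vd$. It then suffices to establish the strengthened forest bound
$$\cc(\Fp_1)\le\left(2-\frac{2}{n}\right)\sum_{S\subset\V}\ys=\left(2-\frac{2}{n}\right)(\va+\vb+\vc+\vd),$$
since adding $\pi(\Q_1)=\vb+\vd$ and regrouping terms yields exactly the claimed inequality.

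For the forest bound I would reuse the per-step accounting of Lemma~\ref{lemma:pcsf3-ub}: in a step where every set in $\activesets$ grows by $\Delta$, the colored portion of $\Fp_1$ grows by $\Delta\sum_{S\in\activesets}d_{\Fp_1}(S)$ while $\sum_S\ys$ grows by $\Delta\lvert\activesets\rvert$, so it is enough to show $\sum_{S\in\activesets}d_{\Fp_1}(S)\le\left(2-\tfrac{2}{n}\right)\lvert\activesets\rvert$ at every step. As in that proof, I form the forest $H$ by contracting the components of $\currentsets$, delete the isolated inactive vertices, and partition its vertices into active $V_a$ and inactive $V_i$, so that $d_H(v)\ge 2$ for all $v\in V_i$. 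The one new ingredient is that a nonempty forest has at most one fewer edge than it has vertices, so its total degree is at most $2(\lvert V_a\rvert+\lvert V_i\rvert)-2$; subtracting $\sum_{v\in V_i}d_H(v)\ge 2\lvert V_i\rvert$ gives
$$\sum_{S\in\activesets}d_{\Fp_1}(S)=\sum_{v\in V_a}d_H(v)\le 2\lvert V_a\rvert-2=2\lvert\activesets\rvert-2.$$
Finally, the active sets are pairwise disjoint subsets of $\V$, hence $\lvert\activesets\rvert\le n$, and $2\lvert\activesets\rvert-2\le\left(2-\tfrac{2}{n}\right)\lvert\activesets\rvert$ is equivalent to $\lvert\activesets\rvert\le n$; summing the per-step inequality over all steps then gives the forest bound, and therefore the lemma.

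I expect the main obstacle to be purely bookkeeping: making sure the ``$-2$'' is genuinely available at \emph{every} step, including degenerate ones --- when $\lvert\activesets\rvert=1$, when $H$ has isolated active vertices, or when $H$ is disconnected. Phrasing the degree count as ``a nonempty forest has at most one fewer edge than it has vertices'' handles all of these uniformly, since $\activesets\neq\emptyset$ forces $H$ to contain at least one (active) vertex, and if $H$ has $c\ge 1$ connected components the bound only improves to $2\lvert\activesets\rvert-2c$. One should also note that the step in which $\activesets$ becomes empty contributes nothing, so the summation effectively ranges over steps with $\lvert\activesets\rvert\ge 1$ only, and that (as already argued in Lemma~\ref{lemma:pcsf3-ub}) $H$ is indeed a forest because $\Fp_1\subseteq\F$ at the end of $\pcsfthree$ and contracting connected vertex sets of a forest cannot create a cycle.
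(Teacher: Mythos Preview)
Your proposal is correct and follows essentially the same approach as the paper: both proofs reuse the argument of Lemma~\ref{lemma:pcsf3-ub} verbatim except for sharpening the forest degree-sum bound from $2(\lvert V_a\rvert+\lvert V_i\rvert)$ to $2(\lvert V_a\rvert+\lvert V_i\rvert)-2$ using that a nonempty forest has strictly fewer edges than vertices, and then converting $2\lvert\activesets\rvert-2$ to $(2-\tfrac{2}{n})\lvert\activesets\rvert$ via $\lvert\activesets\rvert\le n$. Your treatment of the degenerate cases (empty $\activesets$, disconnected $H$, isolated active vertices) is in fact more careful than the paper's.
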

\begin{proof}
    We proceed similarly to the proof of Lemma \ref{lemma:pcsf3-ub} and make a slight change. In one of the last steps of that proof, we use the following inequality:
    $$
    \sum_{v\in V_a\cup V_i} d_H(v) - \sum_{v\in V_i} d_H(v) \\
        \le 2(\lvert V_a \rvert + \lvert V_i\rvert)-\sum_{v\in V_i} d_H(v).
    $$
    This is true, as $H$ is a forest and its number of edges is less than its number of vertices. However, as the number of edges in a forest is strictly less than the number of vertices, we can lower the right-hand side of this inequality to $2(\lvert V_a \rvert + \lvert V_i\rvert -1)-\sum_{v\in V_i} d_H(v)$. Rewriting the main inequality in this step with this change gives us
    \begin{align*}
        \sum_{S\in\activesets} d_{\Fp_1}(S) 
        &\le 2(\lvert V_a \rvert + \lvert V_i\rvert-1)-\sum_{v\in V_i} d_H(v) \\
        &\le 2(\lvert V_a \rvert + \lvert V_i\rvert-1)- 2\lvert V_i \rvert \tag{$d_H(v)\ge 2$ for $v\in V_i$}\\
        &\le 2(\lvert V_a\rvert-1) = 2\lvert\activesets\rvert -2\tag{$\lvert V_a\rvert = \lvert\activesets\rvert$}\\
        &=(2-\frac{2}{\lvert\activesets\rvert})\lvert\activesets\rvert \\&\le (2-\frac{2}{n})\lvert\activesets\rvert \tag{$\lvert\activesets\rvert \le n$}\text{.}
\end{align*}
Based on the steps in the proof of Lemma \ref{lemma:pcsf3-ub}, this leads to the desired upper bound.
\end{proof}
\begin{lemma}
\label{lemma:pcsf3-ub-opt-tight}
    For an instance $\I$, during the first iteration of $\rpcsf(\I)$ where $\pcsfthree(\I)$ is invoked, we can establish an upper bound on the output of $\pcsfthree$ as follows:
    $$cost_1 \le (2-\frac{2}{n})\cdot cost(\OPT) + \vba - (1-\frac{2}{n})\cdot\vbb + \vd$$
\end{lemma}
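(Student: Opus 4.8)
The plan is to mirror the proof of Lemma~\ref{lemma:pcsf3-ub-opt} verbatim, only feeding in the sharpened estimate of Lemma~\ref{lemma:pcsf3-ub-tight} in place of Lemma~\ref{lemma:pcsf3-ub}. All the genuinely new content — the observation that the contracted forest $H$ has strictly fewer edges than vertices, which buys the extra $-2$ and hence the $\frac{2}{n}$ savings — has already been extracted into Lemma~\ref{lemma:pcsf3-ub-tight}; what remains here is pure bookkeeping with the $(2-\frac{2}{n})$ coefficients.

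Concretely, I would start from Lemma~\ref{lemma:pcsf3-ub-tight}, $cost_1 \le (2-\frac{2}{n})\va + (2-\frac{2}{n})\vc + (3-\frac{2}{n})\vb + (3-\frac{2}{n})\vd$, and split each of the coefficients $(3-\frac{2}{n})$ as $(2-\frac{2}{n})+1$, so that $cost_1 \le (2-\frac{2}{n})(\va+\vb+\vc+\vd) + \vb + \vd$. Next I would reintroduce $\vbb$ by writing $(2-\frac{2}{n})(\va+\vb+\vc+\vd) = (2-\frac{2}{n})(\va+\vb+\vbb+\vc+\vd) - (2-\frac{2}{n})\vbb$ and applying the lower bound $cost(\OPT) \ge \va+\vb+\vbb+\vc+\vd$ from Lemma~\ref{lemma:opt_lb}, which yields $cost_1 \le (2-\frac{2}{n})cost(\OPT) - (2-\frac{2}{n})\vbb + \vb + \vd$. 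Finally I would use $\vb = \vba + \vbb$ (Lemma~\ref{lm:all_vb_cut_opt}) to collapse $\vb - (2-\frac{2}{n})\vbb = \vba - (1-\frac{2}{n})\vbb$, giving exactly $cost_1 \le (2-\frac{2}{n})\,cost(\OPT) + \vba - (1-\frac{2}{n})\vbb + \vd$. This can all be presented as a short displayed chain of inequalities, each line annotated with the lemma it invokes, exactly as in Lemma~\ref{lemma:pcsf3-ub-opt}.

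There is no real obstacle; the only thing to be careful about is arithmetic — keeping the $\frac{2}{n}$ factors attached to the right terms when substituting $\vb = \vba + \vbb$, so that the coefficient of $\vbb$ comes out as $-(1-\frac{2}{n})$ and not, say, $-(2-\frac{2}{n})$ or $-1$. I would double-check the final coefficients against the base-case sanity check used later (when $\vb=\vba=\vbb=\vd=0$ the bound must reduce to $cost_1 \le (2-\frac{2}{n})cost(\OPT)$, which it does).
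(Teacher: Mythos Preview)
Your proposal is correct and essentially identical to the paper's proof: both start from Lemma~\ref{lemma:pcsf3-ub-tight}, add and subtract $(2-\frac{2}{n})\vbb$ to assemble the combination $\va+\vb+\vbb+\vc+\vd$, apply Lemma~\ref{lemma:opt_lb}, and then use $\vb=\vba+\vbb$ from Lemma~\ref{lm:all_vb_cut_opt} to reach the final form. The paper presents this as a single four-line displayed chain, exactly as you suggest.
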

\begin{proof}
We prove this lemma similarly to Lemma \ref{lemma:pcsf3-ub-opt}, except we use Lemma \ref{lemma:pcsf3-ub-tight} instead of Lemma \ref{lemma:pcsf3-ub}.
    \begin{align*}
    cost_1 \le& (2-\frac{2}{n})\cdot\va + (2-\frac{2}{n})\cdot\vc + (3-\frac{2}{n})\cdot\vb + (3-\frac{2}{n})\cdot\vd 
    \tag{Lemma \ref{lemma:pcsf3-ub-tight}}\\
    =& (2-\frac{2}{n})(\va + \vb + \vbb + \vc + \vd) + \vb - (2-\frac{2}{n})\cdot\vbb + \vd\\
    \le& (2-\frac{2}{n})\cdot cost(\OPT) + (\vb - \vbb) - (1-\frac{2}{n})\vbb + \vd \tag{Lemma \ref{lemma:opt_lb}}\\
    =& (2-\frac{2}{n})\cdot cost(\OPT) + \vba - (1-\frac{2}{n})\cdot \vbb + \vd \tag{Lemma \ref{lm:all_vb_cut_opt}}.
    \end{align*}
\end{proof}

Finally, we improve Theorem \ref{thm:main_theorem_with_proof}.

\begin{theorem}
        For an instance $\I$, the output of $\rpcsf(\I)$ is a $(2-\frac{1}{n})$-approximate solution to the optimal solution for $\I$, meaning that 
    $$\costalg \le (2-\frac{1}{n})\cdot cost(\OPT)\text{.}$$
\end{theorem}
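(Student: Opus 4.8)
The plan is to rerun the induction of Theorem~\ref{thm:main_theorem_with_proof} on the number of pairs $(i,j)$ with $\pij>0$, simply replacing Lemma~\ref{lemma:pcsf3-ub-opt} by its sharpening Lemma~\ref{lemma:pcsf3-ub-opt-tight} and strengthening the hypothesis from factor $2$ to factor $2-\tfrac1n$. In the base case $\pi(\Q_1)=0$: every pair of $\Q_1=\B\cup\D$ is tight with zero penalty, so $\vb=\vba=\vbb=\vd=0$, and Lemma~\ref{lemma:pcsf3-ub-opt-tight} gives $cost_1\le(2-\tfrac2n)\,cost(\OPT)\le(2-\tfrac1n)\,cost(\OPT)$; since the algorithm returns $(\Q_1,\Fp_1)$ at Line~\ref{line:return-early}, this settles the base case.

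For the inductive step we have $\pi(\Q_1)\neq0$, so $\R$ has strictly fewer positive-penalty pairs and the induction hypothesis applies to the recursive call, giving $\cc(\Fp_2)+\pi'(\Q_2)\le(2-\tfrac1n)\,cost(\OPTR)$. Proceeding exactly as in Theorem~\ref{thm:main_theorem_with_proof} --- using $\pi(\Q_2)\le\pi'(\Q_2)+\pi(\Q_1)$, the tightness of the pairs in $\Q_1=\B\cup\D$ (hence $\pi(\Q_1)=\vb+\vd$), Lemma~\ref{lemma:optr-up}, and $\vb=\vba+\vbb$ from Lemma~\ref{lm:all_vb_cut_opt} --- I would obtain
$$cost_2\le(2-\tfrac1n)\,cost(\OPT)-(1-\tfrac1n)\vba+\vbb-(1-\tfrac1n)\vd,$$
while Lemma~\ref{lemma:pcsf3-ub-opt-tight} supplies
$$cost_1\le(2-\tfrac2n)\,cost(\OPT)+\vba-(1-\tfrac2n)\vbb+\vd.$$
Note that Lemma~\ref{lemma:optr-up} is used unchanged, as it carries no approximation factor.

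Finally, since $\rpcsf$ returns the cheaper solution at Line~\ref{line:return-min}, it suffices to average the two bounds. Adding them and halving, the $cost(\OPT)$ coefficient becomes $2-\tfrac3{2n}$ and the residual error collapses to $\tfrac1{2n}(\vba+2\vbb+\vd)$. The one extra observation needed beyond the factor-$2$ proof is that Lemma~\ref{lemma:opt_lb} already yields $cost(\OPT)\ge\va+\vb+\vbb+\vc+\vd\ge\vba+2\vbb+\vd$, so this residual is at most $\tfrac1{2n}\,cost(\OPT)$ and cancels against the improved $cost(\OPT)$ coefficient, giving $\tfrac12(cost_1+cost_2)\le(2-\tfrac1n)\,cost(\OPT)$ and hence $\costalg=\min(cost_1,cost_2)\le(2-\tfrac1n)\,cost(\OPT)$. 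I do not expect a genuine obstacle here: the only subtlety is that the $\vba,\vbb,\vd$ coefficients no longer cancel term-by-term as they did for the clean factor $2$, so one must invoke the lower bound on $cost(\OPT)$ rather than relying on exact cancellation.
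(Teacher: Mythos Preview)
Your proposal is correct and follows essentially the same approach as the paper: the same induction, the same use of Lemma~\ref{lemma:pcsf3-ub-opt-tight} and Lemma~\ref{lemma:optr-up}, the same averaging, and the same final appeal to Lemma~\ref{lemma:opt_lb} to absorb the leftover $\tfrac{1}{2n}(\vba+2\vbb+\vd)$ term into $(2-\tfrac{3}{2n})\,cost(\OPT)$. Your identification of the one new subtlety --- that the residual no longer cancels term-by-term and must be bounded via $cost(\OPT)\ge\vba+2\vbb+\vd$ --- is exactly the point the paper's proof hinges on.
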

\begin{proof}
    Similarly to the proof of Theorem \ref{thm:main_theorem_with_proof}, we use induction on the number of non-zero penalties. If the algorithm terminates on Line \ref{line:return-early} then by Lemma \ref{lemma:pcsf3-ub-opt-tight} we have
    $$cost_1 \le (2-\frac{2}{n})\cdot cost(\OPT) + \vba - (1-\frac{2}{n})\cdot\vbb + \vd = (2-\frac{2}{n})\cdot cost(\OPT)$$
    since $\vba$, $\vbb$, and $\vd$ are all $0$ in this case. As $2-\frac{2}{n}\le 2-\frac{1}{n}$, the desired inequality holds in this case. This establishes our base case for the induction.

    Using the same reasoning as the proof of Theorem \ref{thm:main_theorem_with_proof}, based on the induction we have 
    \begin{align*}
    cost_2 &\leq (2-\frac{1}{n})\cdot cost(\OPTR) + \pi(\Q_1) \\
    &= (2-\frac{1}{n})\cdot cost(\OPTR) + \vb + \vd    \\
    &\le (2 - \frac{1}{n}) (cost(\OPT) - \vba - \vd) + \vb + \vd \tag{By Lemma \ref{lemma:optr-up}}\\
    & \le (2 - \frac{1}{n})\cdot cost(\OPT) - (1 - \frac{1}{n})\cdot \vba + \vbb - (1 - \frac{1}{n}) \cdot\vd\text{.}
    \end{align*}
    We can combine this with the following upper bound from Lemma \ref{lemma:pcsf3-ub-opt-tight}
    $$cost_1 \le (2-\frac{2}{n})\cdot cost(\OPT) + \vba - (1-\frac{2}{n})\cdot\vbb + \vd\text{.}$$
    As the algorithm chooses the solution with the lower cost between $cost_1$ and $cost_2$, we have
    \begin{align*}
        \costalg &= \min(cost_1, cost_2) \le \frac{1}{2}(cost_1 + cost_2) \\ 
        &\le \frac{1}{2} \left [ (2-\frac{2}{n})\cdot cost(\OPT) + \vba - (1-\frac{2}{n})\cdot\vbb + \vd \right. \\& \left.+ (2-\frac{1}{n})\cdot cost(\OPT) - (1-\frac{1}{n})\cdot \vba + \vbb - (1-\frac{1}{n})\cdot \vd \right ] \\
        &= \frac{1}{2} \left((4-\frac{3}{n})\cdot cost(\OPT) + \frac{2}{n}\vbb + \frac{1}{n} \vba + \frac{1}{n}\vd\right)\\
        &\le \frac{1}{2} \left((4-\frac{2}{n})\cdot cost(\OPT) + \frac{1}{n}[2\vbb + \vba + \vd -cost(\OPT)]\right)\\
        &\le \frac{1}{2}(4-\frac{2}{n})\cdot cost(\OPT) \tag{$cost(\OPT)\geq 2\vbb + \vba + \vd$ by Lemma \ref{lemma:opt_lb}}\\
        &= (2-\frac{1}{n})\cdot cost(\OPT)\text{.}
    \end{align*}
    Therefore, the algorithm obtains a $(2-\frac{1}{n})$-approximation of the optimal solution.  
\end{proof}

\section{Acknowledgements}

The work is partially support by DARPA QuICC, NSF AF:Small  \#2218678, and  NSF AF:Small  \#2114269


\bibliographystyle{abbrv}
\bibliography{references}

\end{document}